\documentclass[11pt,english]{article}

\pdfoutput=1

\usepackage{hyperref}
\usepackage{verbatim}
\usepackage{amsfonts}
\usepackage{amsmath,amssymb,amsthm}
\usepackage{multirow}
\usepackage{url}
\usepackage{bbm}
\usepackage{dsfont}

\usepackage{courier}
\usepackage{fullpage}

\usepackage{epsfig}
\usepackage[usenames]{color}
\usepackage{xspace}
\usepackage{algorithm}
\usepackage[noend]{algpseudocode}
\usepackage{times}
\usepackage{enumerate} %% [I]: for capital roman numbers. [a]: for small alpha-characters within brackets.

\usepackage[title]{appendix}

\newtheorem{theorem}{Theorem}[section]
\newtheorem{lemma}[theorem]{Lemma}
\newtheorem{proposition}[theorem]{Proposition}
\newtheorem{fact}[theorem]{Fact}
\newtheorem{corollary}[theorem]{Corollary}

\newtheorem{claim}[theorem]{Claim}

\newtheorem{definition}[theorem]{Definition}

\newtheorem{construction}[theorem]{Construction}

\newtheorem*{lemma-main}{Lemma~\ref{lemma:main}}
\newtheorem*{prop-restriction}{Proposition~\ref{prop:restriction}}

%%\def\qsym{\vrule width0.6ex height1em depth0ex}
%%\newcount\proofqeded
%%\newcount\proofended
%%\def\qed{ \mbox{\ \vrule width1ex height1em depth0cm}
%%\global\advance\proofqeded by 1 }
%%\newenvironment{proof}{\proofstart}{\ifnum\proofqeded=\proofended\qed\fi \global\advance\proofended by 1
%%  \medskip}
%%\makeatletter
%%\def\proofstart{\@ifnextchar[{\@oprf}{\@nprf}}
%%\def\@oprf[#1]{\paragraph{Proof #1:~}}
%%\def\@nprf{\paragraph{Proof:~}}
%%\makeatother

\renewcommand{\paragraph}[1]{\noindent {\bf #1}}
\addtolength{\parskip}{-.0ex}

\newcommand{\Exp}{\operatornamewithlimits{\mathbb{E}}}

\newcommand{\F}{\mathbb{F}}
\newcommand{\C}{\mathbb{C}}
\newcommand{\R}{\mathbb{R}}
\newcommand{\Z}{{\mathbb Z}}
\newcommand{\N}{{\mathbb N}}

\newcommand{\polylog}[1]{\mathrm{polylog}(#1)}
\newcommand{\poly}[1]{\mathrm{poly}(#1)}

\newcommand {\etal} {\textit{et al.}\xspace}

\newcommand{\gran}[1]{\operatorname{gran}(#1)}

\newcommand{\supp}[1]{\operatorname{supp}(#1)}

\newcommand{\vspan}[1]{\operatorname{span}(#1)}

\newcommand\dcc{\mbox{$\sf {D^{CC}}$}\xspace}

\newcommand{\ignore}[1]{}

\begin{document}
\title{A Generalization of a Theorem of Rothschild and van Lint}

\author{Ning Xie\thanks{Florida International University, Miami, FL 33199, USA. Email: {\texttt nxie@cis.fiu.edu}} 
\and Shuai Xu\thanks{Case Western Reserve University, Cleveland, OH 44106, USA. Email: {\texttt sxx214@case.edu}.} 
\and Yekun Xu\thanks{Florida International University, Miami, FL 33199, USA. Email: {\texttt yxu040@fiu.edu}.}}

\date{}

\setcounter{page}{0}
\maketitle

\begin{abstract}
A classical result of Rothschild and van Lint asserts that if
every non-zero Fourier coefficient of a Boolean function $f$ over $\mathbb{F}_2^{n}$ has the same absolute value, namely $|\hat{f}(\alpha)|=1/2^k$ for every $\alpha$ in the Fourier support of $f$, then $f$ must be the indicator function of some affine subspace of dimension $n-k$.
In this paper we slightly generalize their result. Our main result shows that, roughly speaking, Boolean functions whose Fourier coefficients take values in the set $\{-2/2^k, -1/2^k, 0, 1/2^k, 2/2^k\}$ are indicator functions of two disjoint affine subspaces of dimension $n-k$ or four disjoint affine subspace of dimension $n-k-1$. 
Our main technical tools are results from additive combinatorics which offer tight bounds on the affine span size of a subset of $\mathbb{F}_2^{n}$ when the doubling constant of the subset is small.
\end{abstract}

\newpage

%%%%%%%%%%%%%%%%%%%%%%%%%%%%%%%%%%%%%%%%%%%%%%%%%%%%%%%%%%%%%%%%%%%%%%%%%%%%%%%%%%%%%%%%%%%%%%%%%%%%%%%%%%%%%%%%%%%%%$$
%%%%%%%%%%%%%%%%%%%%%%%%%%%%%%%%%%%%%%%%%%%%%%%%%%%%%%%%%%%%%%%%%%%%%%%%%%%%%%%%%%%%%%%%%%%%%%%%%%%%%%%%%%%%%%%%%%%%%$$
%%%%%%%%%%%%%%%%%%%%%%%%%%%%%%%%%%%%%%%%%%%%%%%%%%%%%%%%%%%%%%%%%%%%%%%%%%%%%%%%%%%%%%%%%%%%%%%%%%%%%%%%%%%%%%%%%%%%%$$
\section{Introduction}\label{Sec:intro}
One of the most fruitful approaches in functional analysis is to represent functions as sums of simple and well-structured objects, 
such as sine wave functions and polynomials. Such representations often provide additional insights on the combinatorial structures of or 
complexity measures associated with the subjects under consideration.
This paradigm in theoretical computer science has witnessed
harmonic analysis on the cube, or the discrete Fourier transform of Boolean functions, emerged in the past three decades 
as a powerful and versatile tool that finds numerous applications in complexity theory (such as PCP and circuit complexity), 
property testing, learning, cryptography, coding theory, social choice theory and others; see~\cite{Odo14} for a comprehensive survey. 

Fourier coefficients and function values are two equivalent ways to represent a function.
That is, the Fourier spectrum of a function completely determines the function-value at any point on the cube. 
However, knowing only the \emph{values} of the Fourier spectrum but without the information of the locations
of these values in the Fourier space in general leaves the function undetermined to a large extent, even restricted to Boolean functions. 
To see this, consider the following examples.
Generally speaking, we view two Boolean functions as the same function if they are \emph{isomorphic}. 
More formally, we say that two Boolean functions $f,g: \F_2^n \to \{0,1\}$ are \emph{isomorphic} to each other if there is an invertible linear
transformation $L: \F_2^n \to \F_2^n$ such that $g(x)=Lf(x)$ for every $x\in \F_2^n$, where $Lf(x):=f(Lx)$.
Now consider the following two families of Boolean functions $\{f_k: \F_2^k \to \{0,1\}\mid k\in \N, k \geq 3\}$
and $\{g_k: \F_2^k \to \{0,1\}\mid k\in \N, k \geq 3\}$, with the Fourier expansions of
$f_k(x)=\frac{3}{4}-\frac{1}{4}\chi_{\{1\}}(x)-\frac{1}{4}\chi_{\{2\}}(x)-\frac{1}{4}\chi_{\{1,2\}}(x)$
and $g_k(x)=\frac{3}{4}-\frac{1}{4}\chi_{\{1,2\}}(x)-\frac{1}{4}\chi_{\{1,3\}}(x)-\frac{1}{4}\chi_{\{2,3\}}(x)$.
One can check easily that both $f_k$ and $g_k$ are indeed Boolean functions and the multisets of non-zero Fourier coefficients are both
$\{\frac{3}{4}, -\frac{1}{4}, -\frac{1}{4}, -\frac{1}{4}\}$. On the other hand,
the Fourier dimension --- dimension of the subspace spanned by vectors at which the function's Fourier coefficients are non-zero ---
of $f_k$ is $2$ while the Fourier dimension of $g_k$ is $3$. Since the Fourier spectrum transforms according to $(L^T)^{-1}$
when the function undergoes the linear transformation $L$, 
it follows that there is no invertible linear transformation $L$ that maps $f_k$ to $g_k$, i.e.
they are not isomorphic to each other. Another such example is the class of \emph{address functions} $f_n: \F_2^n \to \{-1,1\}$, where $n=k+2^k$ for some positive integer $k$,
together with the class of functions $g_n: \F_2^n \to \{-1,1\}$ formed by tensoring some \emph{bent function} on $2k$-bits with a $\delta$-function on $n-2k$ bits. Then both $f_n$ and $g_n$
have $2^{2k}$ non-zero Fourier coefficients, with $2^{2k-1}+2^{k-1}$ of them taking value $1/2^k$ and $2^{2k-1}-2^{k-1}$ of them taking value $-1/2^k$; moreover, since the Fourier dimension of $f_n$ is $n$ and the Fourier dimension of $g_n$ is $2k<n$, these two functions are not isomorphic to each other.

Nevertheless, there are a few exceptions to the general phenomenon in the sense that knowing only the values
of the Fourier spectrum completely determine the Boolean function, up to an isomorphism. One such example is the indicator function of an affine subspace, which enjoys a very simple Fourier spectrum. Specifically,
if $f$ is the indicator function of an affine subspace in $\F_2^n$ of dimension $n-k$,
then it is straightforward to check that every non-zero
Fourier coefficient of $f$ is either $1/2^k$ or $-1/2^k$. What about the converse? 
Namely, if we know that the non-zero Fourier coefficients of a Boolean function all have magnitude $1/2^k$, 
then what can be said about the function?

%%%%%%%%%%%%%%%%%%%%%%%%%%%%%%%%%%%%%%%%%%%%%%%%%%%%%%%%%%%%%%%%%%%%%%%%%%%%%%%%%%%%%%%%%%%%
%%%%%%%%%%%%%%%%%%%%%%%%%%%%%%%%%%%%%%%%%%%%%%%%%%%%%%%%%%%%%%%%%%%%%%%%%%%%%%%%%%%%%%%%%%%%
\subsection{Rothschild and van Lint Theorem}
Rothschild and van Lint~\cite{RV74} (see also Chapter 13, Lemma 6 in~\cite{MS77}) proved the following theorem:
\begin{theorem}\label{thm:RV}
Let $n\geq 1$ and $0 \leq k \leq n$.
Let $f=\mathds{1}_S$ be the indicator function of a set $S\subseteq \F_2^n$ of size $|S|=2^{n-k}$.
If for every $\alpha\in  \F_2^n$, $|\hat{f}(\alpha)|$ is equal to either zero or $1/2^k$,
then $S$ is an affine subspace of dimension $n-k$.
\end{theorem}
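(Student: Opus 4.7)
The plan is to leverage two elementary Fourier identities satisfied by any $\{0,1\}$-valued function $f$ on $\mathbb{F}_2^n$: Parseval's identity $\sum_\alpha \hat{f}(\alpha)^2 = \mathbb{E}[f^2]$, and the fact that $f^2 = f$ translates on the Fourier side to the self-convolution identity $\hat{f} * \hat{f} = \hat{f}$. The argument proceeds in three steps: first I count the Fourier support $T := \{\alpha \in \mathbb{F}_2^n : \hat{f}(\alpha) \neq 0\}$, then I show $T$ is an $\mathbb{F}_2$-subspace of dimension $k$, and finally I Fourier-invert to recover $S$. The counting step is immediate: since $\hat{f}(0) = \mathbb{E}[f] = |S|/2^n = 1/2^k$, Parseval gives $\sum_\alpha \hat{f}(\alpha)^2 = 1/2^k$, and the hypothesis $|\hat{f}(\alpha)| \in \{0, 1/2^k\}$ then forces $|T| = 2^k$ with $0 \in T$.

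The heart of the argument is establishing both that $T$ is a subspace and that the signs of the Fourier coefficients on $T$ cohere into a character. Fix $\alpha \in T$ and expand $\hat{f}(\alpha) = (\hat{f} * \hat{f})(\alpha) = \sum_{\beta} \hat{f}(\beta)\, \hat{f}(\alpha+\beta)$. The nonzero summands are indexed by $\beta \in T \cap (T+\alpha)$, each has magnitude $1/2^{2k}$, while the left-hand side has magnitude $1/2^k$. The triangle inequality then forces at least $2^k$ nonzero summands; since $|T \cap (T+\alpha)| \leq |T| = 2^k$, equality must hold, and hence $T + \alpha = T$. Combined with $0 \in T$, this makes $T$ closed under addition, so a subspace of dimension $k$. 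The same equality case also forces the $2^k$ summand signs to agree, and writing $\hat{f}(\alpha) = \epsilon_\alpha/2^k$ with $\epsilon_\alpha \in \{\pm 1\}$ for $\alpha \in T$, this yields the character relation $\epsilon_{\alpha+\beta} = \epsilon_\alpha \epsilon_\beta$.

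To conclude, I would Fourier-invert. The formula $f(x) = (1/2^k) \sum_{\alpha \in T} \epsilon_\alpha \chi_\alpha(x)$ displays $f(x)$ as a signed sum of $2^k$ terms each of magnitude $1/2^k$; since $f(x) \in \{0,1\}$, this sum equals $1$ precisely when $\chi_\alpha(x) = \epsilon_\alpha$ for every $\alpha \in T$. The character relation guarantees the constraints for different $\alpha$ are mutually consistent, so restricting to any basis of $T$ cuts out the same affine subspace via $k$ independent affine $\mathbb{F}_2$-equations; this subspace has codimension $k$, is contained in $S$, and since both have cardinality $2^{n-k}$, the two sets coincide.

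The main subtlety is the triangle-inequality step in the middle paragraph: one has to notice that saturating $|\hat{f}(\alpha)| = 1/2^k$ actually saturates two inequalities simultaneously --- the count $|T \cap (T+\alpha)|$ must be maximal (yielding the subspace structure on $T$) and all summand signs must align (yielding the character relation on $\epsilon$). Both conclusions are needed: without the subspace structure $T$ has no natural basis, and without the character relation the candidate equations $\chi_\alpha(x) = \epsilon_\alpha$ need not have a common solution. Once both are in hand, recovering $S$ from the Fourier data is routine.
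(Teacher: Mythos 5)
Your proof is correct, but it takes a genuinely different route from the paper's. The paper proves Theorem~\ref{thm:RV} by induction on $n$: locate a nonzero $\alpha$ with $|\hat f(\alpha)|=1/2^k$ (Parseval), apply an invertible linear map sending $\alpha$ to $e_1$, restrict the first coordinate, observe via \eqref{eqn:subfunction} that one sub-function vanishes identically while the other is Boolean on $\F_2^{n-1}$ with coefficients in $\{0,\pm 1/2^{k-1}\}$, and invoke the induction hypothesis with parameters $(n-1,k-1)$. You instead argue directly and non-inductively: Parseval pins $|T|=2^k$ for the Fourier support $T$; saturating the convolution identity of Proposition~\ref{prop:Boolean} (equality in the triangle inequality) forces both $T+\alpha=T$, hence $T$ is a $k$-dimensional subspace, and the sign pattern $\epsilon$ to be a character on $T$; Fourier inversion then exhibits $S$ as the common solution set of $k$ independent affine equations $\langle \alpha_i,x\rangle=c_i$ taken over a basis of $T$, with the character relation guaranteeing consistency with the remaining constraints. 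Both arguments are sound; what each buys is different. The paper's restriction-based proof is deliberately chosen because it is the template for the first reduction step of the Main Theorem (folding away coefficients of magnitude $1/2^{k-1}$ to reduce $n$ and $k$ simultaneously). Your saturation argument, on the other hand, is essentially the same technique the paper deploys later in Section~\ref{sec:main_lemma} (e.g., Lemma~\ref{lemma:beta_triangle} and Claims~\ref{claim:L1}--\ref{claim:L3}), so it aligns more closely with the paper's main technical machinery, avoids induction, and has the added benefit of producing the defining affine equations of $S$ explicitly rather than only its existence.
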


In other words, Rothschild and van Lint Theorem shows that, up to an invertible linear transform, we have a complete characterization when
the Fourier coefficients of a \emph{Boolean} function are all from the set $\{-1/2^k, 0, 1/2^k\}$:
the Boolean function must be the indicator of some affine subspace of co-dimension $k$.

A natural question is: how far can we extend such a nice characterization in terms of the values of Fourier coefficients only?
Following~\cite{GOS+11}, for a rational number $x$, 
the \emph{granularity} $\gran{x}$ of $x$ is defined to be the least nonnegative integer $k$ such that
$x=m/2^k$, where $m$ is an (odd) integer.
A function $\F_2^n \to \R$ is said to be \emph{$k$-granular} if the maximum granularity of its Fourier coefficients is $k$ --- that is,
$k=\max_{\alpha} \{\gran{\hat{f}(\alpha)}$\}.
For a Boolean function, its granularity is known to be intimately correlated with its \emph{Fourier sparsity}~\cite{GOS+11} --- 
the number of non-zero Fourier coefficients; see discussion in Section~\ref{sec:intro_motivations} for more details.
Therefore, one can view Rothschild and van Lint Theorem as a characterization of 
$k$-granular Boolean functions with minimum support size (that is, $\hat{f}(\mathbf{0})=|\{x: f(x)=1\}|/2^n=1/2^k$).

%%%%%%%%%%%%%%%%%%%%%%%%%%%%%%%%%%%%%%%%%%%%%%%%%%%%%%%%%%%%%%%%%%%%%%%%%%%%%%%%%%%%%%%%%%%%
%%%%%%%%%%%%%%%%%%%%%%%%%%%%%%%%%%%%%%%%%%%%%%%%%%%%%%%%%%%%%%%%%%%%%%%%%%%%%%%%%%%%%%%%%%%%
\subsection{Our results}\label{sec:intro_results}
In this work, we slightly generalize Rothschild and van Lint Theorem to give a complete characterization of
$k$-granular Boolean functions of support size $2^n\cdot 2/2^k=2^{n-k+1}$. 
Roughly speaking, our main theorem is the following:

\begin{theorem}[Informal statement]
For large enough integers $n \geq k$,
if a Boolean function $f:\F_2^n \to \{0,1\}$ has all its Fourier coefficients
in the set $\{0, \frac{\pm 1}{2^k}, \frac{\pm 2}{2^k}\}$,
then $f$ is the indicator function of disjoint union of two affine subspaces of dimension $n-k$.
\end{theorem}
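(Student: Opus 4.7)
The plan is to reduce the problem to one where Rothschild--van Lint (Theorem~\ref{thm:RV}) applies to a smaller Boolean function. Since $\hat{f}(0) = |f^{-1}(1)|/2^n$ is non-negative, the hypothesis forces $\hat{f}(0) = 2/2^k$ and $|f^{-1}(1)| = 2^{n-k+1}$. If $a$ and $b$ count the Fourier coefficients of magnitudes $1/2^k$ and $2/2^k$ respectively, Parseval yields $a + 4b = 2^{k+1}$, so the Fourier sparsity $a+b$ satisfies $2^{k-1} \leq a+b \leq 2^{k+1}$.

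Let $V := \{\alpha : |\hat{f}(\alpha)| = 2/2^k\}$; note $0 \in V$. The first key step is to show $V$ is a linear subspace of $\F_2^n$. The convolution identity $\hat{f}*\hat{f} = \hat{f}$ (from $f^2 = f$) gives $\sum_\beta \hat{f}(\beta)\hat{f}(\alpha+\beta) = \hat{f}(\alpha)$ for every $\alpha$. For $\alpha \in V$ the absolute value of the left side equals $\sum_\beta \hat{f}(\beta)^2 = 2/2^k$, saturating Cauchy--Schwarz applied to the vectors $(\hat{f}(\beta))_\beta$ and $(\hat{f}(\alpha+\beta))_\beta$ (both of squared $\ell^2$-norm $2/2^k$). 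Tightness forces $\hat{f}(\alpha+\beta) = \chi(\alpha)\hat{f}(\beta)$ for all $\beta$, with a uniform sign $\chi(\alpha)\in\{\pm 1\}$. Substituting $\alpha+\alpha'$ in place of $\alpha$ shows that $\chi$ is multiplicative, so $V$ is closed under addition and $\chi$ is a character of $V$. Extending $\chi$ to a character $(-1)^{\langle u,\cdot\rangle}$ of $\F_2^n$ and replacing $f$ by its translate $x\mapsto f(x+u)$, we may assume $\chi\equiv 1$, so $\hat{f}$ is genuinely $V$-periodic. Such periodicity forces $f$ to be supported on $V^\perp$, and its restriction $\bar{f}$ to $V^\perp\cong\F_2^N$ (with $N := n-\dim V$) is a Boolean function whose Fourier coefficients lie in $\{0, \pm 1/2^K, \pm 2/2^K\}$, where $K := k-\dim V$, and whose unique coefficient of magnitude $2/2^K$ is $\hat{\bar{f}}(0)$.

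It then remains to characterize such $\bar{f}$. The strategy is to peel off one affine subspace and apply Theorem~\ref{thm:RV} to the remainder: find an affine subspace $A_1\subseteq\bar{f}^{-1}(1)$ of dimension $N-K$ such that $\bar{f}-\mathds{1}_{A_1}$ is Boolean with Fourier coefficients in $\{0,\pm 1/2^K\}$; Rothschild--van Lint then identifies the remainder as $\mathds{1}_{A_2}$ for a second affine subspace of the same dimension, disjoint from $A_1$. Locating $A_1$ is where additive combinatorics enters. A direct Parseval computation shows the Fourier support of $\bar{f}$ has size exactly $2^{K+1}-3$, only marginally below the maximum $2^{K+1}$; and an easy check confirms the target configuration $\supp\hat{\bar{f}} = W_1^\perp \cup W_2^\perp$ with $|W_1^\perp\cap W_2^\perp|=2$ (one point cancelling) has this very size. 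The sharp bounds from the abstract --- affine-span bounds in $\F_2^n$ under small doubling --- are invoked to show that no other configuration of $\supp\hat{\bar{f}}$ is consistent with the allowed Fourier values, and dually pin down the pair $(W_1,W_2)$ and thus the candidate $A_1$. Lifting back through the $V^\perp$-restriction and undoing the earlier translation gives the claimed decomposition of $f$.

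The main obstacle is the peeling step. The Fourier sparsity is so close to the maximum $2^{K+1}$ that the doubling constant of $\supp\hat{\bar{f}}$ is not small in the usual Freiman sense, so one cannot simply plug into a generic Freiman--Ruzsa-type statement; one needs the \emph{tight} affine-span bounds that are sharp precisely in this marginal regime. A secondary difficulty is excluding spurious support configurations that are Fourier-value-consistent but not of affine form, and cleanly accommodating the degenerate alternative of four affine subspaces of dimension $n-k-1$ mentioned in the abstract, which corresponds to the peeling succeeding only at one dimension lower.
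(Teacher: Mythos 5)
Your opening reduction is correct, and it is essentially a one-shot repackaging of what the paper does iteratively: the Cauchy--Schwarz equality argument showing that $V=\{\alpha:|\hat{f}(\alpha)|=2/2^k\}$ is a subspace on which $\hat{f}$ is (anti-)periodic, followed by a translation and restriction to $V^\perp$, lands you exactly where the paper's repeated hyperplane restrictions (folding $\hat{f}(\mathbf{0})$ with any other coefficient of magnitude $1/2^{k-1}$, reducing $n$ and $k$ by one each time) land: a Boolean $\bar{f}$ on $\F_2^N$ with $\hat{\bar{f}}(\mathbf{0})=2/2^K$ the unique coefficient of maximal magnitude and all others in $\{0,\pm1/2^K\}$. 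Up to that point the two arguments are interchangeable.

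From there on, however, the proposal has a genuine gap: the entire characterization of the irreducible case is asserted rather than proved. Saying that the tight affine-span bounds are ``invoked to show that no other configuration of $\supp{\hat{\bar{f}}}$ is consistent'' is a restatement of the Main Lemma, not an argument, and --- as you yourself observe --- no such bound can be applied to the full Fourier support, whose span is quadratically larger than its size. The missing idea is the sign split: set $A=\{\alpha:\hat{\bar{f}}(\alpha)=1/2^K\}$, $B=\{\beta:\hat{\bar{f}}(\beta)=-1/2^K\}$, compute $|A|=3t$, $|B|=t$ with $t=2^{K-1}-1$, and use the convolution identity at points of $B$ to prove $B+B\subseteq A\cup\{\mathbf{0}\}$, that $B$ is sum-free, and $2B\cap 3B=\emptyset$. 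Only then does one get a small doubling constant, $\sigma[B]\le 3+1/t$, to which Even-Zohar's tight bound applies; after eliminating the intermediate span sizes $4(|B|+1)$ and $8(|B|+1)$ by further equality-case convolution arguments, Laba's lemma shows $2B$ is a subspace of dimension $K-1$, and a last round of convolution computations pins down $L=A\setminus 2B$ as a punctured subspace, at which point the spectrum matches the two-affine-subspace template (your ``peel $A_1$ and apply Rothschild--van Lint'' endgame would also work, but only after this structure is in hand, which is the whole difficulty). Finally, your handling of the exceptional case is off: at $K=4$ there is an actual counterexample whose support is four disjoint affine subspaces of dimension $n-k-1$ and provably not two of dimension $n-k$; the Even-Zohar numerics close the argument only for $K\ge 5$, and $K=2,3$ require separate ad hoc arguments, so a correct proof must detect and exclude (or separately treat) the $k=4$ core rather than treat it as a cosmetic ``peeling one dimension lower.''
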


Our Main Theorem is based on the following Main Lemma, which deals with the general case of $k\geq 5$,
together with case analysis\footnote{The need for a nasty case analysis 
stems from a key lemma in the proof, namely Lemma~\ref{lemma:2B}, which holds
only when $k\geq 5$.} for small values of $k$.
\begin{lemma}[Main]\label{lemma:main}
Let $k\geq 5$ and $n\geq k$ be integers.
Let $f: \F_2^n \to \{0,1\}$ be a Boolean function such that $\hat{f}(\mathbf{0})=1/2^{k-1}$ and any other Fourier coefficients are either zero
or equal to $\pm \frac{1}{2^k}$, then $f$ is the indicator function of a disjoint union of two dimension $n-k$
affine subspaces.
\end{lemma}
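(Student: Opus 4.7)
The plan is to translate the Fourier hypothesis into additive-combinatorial constraints on the Fourier support $T := \mathrm{supp}(\hat f)$, and then apply the paper's small-doubling machinery to recover the structure of $S := f^{-1}(1)$. First, by Parseval, $\hat f(\mathbf 0) = |S|/2^n = 1/2^{k-1}$ gives $|S| = 2^{n-k+1}$; applying Parseval a second time to $\mathbb{E}[f^2] = \mathbb{E}[f] = 1/2^{k-1}$, together with the hypothesis that every non-zero Fourier coefficient off $\mathbf 0$ has magnitude $1/2^k$, pins down the Fourier sparsity to $|T| = 2^{k+1} - 3$. I would then partition $T \setminus \{\mathbf 0\}$ as $T^+ \sqcup T^-$ according to the sign of $\hat f$.

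The core tool will be the Boolean identity $f^2 = f$, which on the Fourier side reads $\hat f \ast \hat f = \hat f$. For every $\alpha \in T \setminus \{\mathbf 0\}$, isolating the $\beta = \mathbf 0$ and $\beta = \alpha$ contributions yields
\[
\sum_{\beta \in T \setminus \{\mathbf 0, \alpha\}} \hat f(\beta)\, \hat f(\alpha + \beta) \;=\; \frac{2^k - 4}{2^k}\, \hat f(\alpha),
\]
while for $\alpha \notin T$ with $\alpha \neq \mathbf 0$, the analogous sum vanishes. Each non-zero summand has magnitude $1/2^{2k}$, so each $\alpha \in T \setminus \{\mathbf 0\}$ admits on the order of $2^k$ signed representations $\beta + \gamma$ with $\beta, \gamma \in T \setminus \{\mathbf 0\}$. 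This additive richness, combined with the sign partition of $T$, should let me isolate a subset $B \subseteq T$ (built around $T^+$ or $T^-$) whose sumset satisfies $|B + B| \le c\,|B|$ for a small explicit constant $c$. Feeding $B$ into Lemma~\ref{lemma:2B}---the tight affine-span bound for small-doubling sets advertised as the paper's main technical input---would bound the affine span of $T$ by a subspace $W$ of dimension on the order of $2k - 1$.

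Since $\mathrm{supp}(\hat f) \subseteq W$, $f$ is constant on cosets of $W^\perp$ and descends to a Boolean function $\tilde f$ on $\F_2^{\dim W}$ enjoying the same Fourier hypothesis. In this bounded-dimension regime I would run a direct structural analysis: enumerating the sign patterns on $T$ consistent with the convolution identity and $|T| = 2^{k+1}-3$, and showing that the only possibility is $T = \{\mathbf 0\} \cup (V_1^\perp \triangle V_2^\perp)$ for $(n-k)$-dimensional subspaces $V_1, V_2$ with $|V_1^\perp \cap V_2^\perp| = 2$. Pulling back through $\F_2^n \to \F_2^n/W^\perp$ then realizes $S$ as $A_1 \sqcup A_2$ for two disjoint affine subspaces of dimension $n-k$.

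The main obstacle will be extracting an honest \emph{unsigned} doubling bound from the signed convolution identity: signed sums can cancel, so the choice of subset $B$ and target subspace $W$ is delicate, and the resulting bound is only sharp enough to invoke Lemma~\ref{lemma:2B} when $k$ is sufficiently large. This is precisely the source of both the $k \ge 5$ hypothesis and the need for separate small-$k$ case analysis noted in the footnote.
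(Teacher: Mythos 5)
Your proposal shares the paper's general framework (the convolution identity $\hat f \ast \hat f=\hat f$, a small-doubling set extracted from the Fourier support, and a tight span bound for small doubling), but the two steps that actually carry the proof are left open. First, the doubling bound itself: you say the "additive richness ... should let me isolate a subset $B$" with $|B+B|\le c|B|$, and you yourself flag the signed-cancellation issue as the main obstacle --- but that obstacle is exactly what needs an idea, and the paper's resolution is concrete and not obtainable from generic counting. One takes $B=\{\beta:\hat f(\beta)=-1/2^k\}$ and $A=\{\alpha:\hat f(\alpha)=+1/2^k\}$, notes (after a linear shift ensuring $f(\mathbf{0})=1$, which is also needed to get the individual counts $|A|=3(2^{k-1}-1)$, $|B|=2^{k-1}-1$ rather than just $|A|+|B|=2^{k+1}-4$), and then evaluates the convolution identity at a point of $B$: since $\hat f(\beta)$ is the most negative value possible, every term must be as negative as it can be, which \emph{forces} $\beta+\beta'\in A$ for all $\beta'\in B$, i.e.\ $2B\subseteq A\cup\{\mathbf{0}\}$ and hence $\sigma[B]\le 3+\tfrac1{|B|}$. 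Only with this explicit constant can Even-Zohar's theorem (Theorem~\ref{theorem:Even-Zohar}, the actual external tool --- note that Lemma~\ref{lemma:2B} is the paper's internal structural conclusion, not the imported span bound) be applied, and the hypothesis $k\ge 5$ enters precisely through the value of this constant.

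Second, your endgame --- descend to a bounded-dimensional quotient and "enumerate the sign patterns on $T$" --- is not a viable argument for general $k$: the ambient dimension is about $2k-1$ and the support has size $2^{k+1}-3$, so there is nothing finite to enumerate uniformly in $k$. The paper instead proves that $|\vspan{B}|$ is exactly $2^k$ (ruling out the competing sizes $4(|B|+1)$ and $8(|B|+1)$ by a second round of convolution arguments combined with Even-Zohar and Laba's lemma, which also yields that $2B$ is a $(k-1)$-dimensional subspace), and then three further convolution-based claims (Claims~\ref{claim:L1}--\ref{claim:L3}) show that the remaining positive support $L=A\setminus(2B\setminus\{\mathbf{0}\})$ together with $\{\mathbf{0},e_k\}$ is itself a $k$-dimensional subspace meeting $\vspan{B}$ only in $\{\mathbf{0},e_k\}$; matching the resulting spectrum with the explicitly computed spectrum of a disjoint union of two $(n-k)$-dimensional affine subspaces finishes the proof. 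So while your outline points in the right direction, the extremality argument giving $2B\subseteq A\cup\{\mathbf{0}\}$ and the subsequent structural determination of $2B$ and $L$ are genuine missing ingredients, not routine details.
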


%%%%%%%%%%%%%%%%%%%%%%%%%%%%%%%%%%%%%%%%%%%%%%%%%%%%%%%%%%%%%%%%%%%%%%%%%%%%%%%%%%%%%%%%%%%%
%%%%%%%%%%%%%%%%%%%%%%%%%%%%%%%%%%%%%%%%%%%%%%%%%%%%%%%%%%%%%%%%%%%%%%%%%%%%%%%%%%%%%%%%%%%%
\subsection{Proof overview and our techniques}\label{sec:intro_techniques}
The original form of Rothschild and van Lint Theorem was stated to characterize subspaces in affine geometry and projective geometry.
For completeness and more importantly, because the first step in our proof of the main theorem follows a similar strategy,
we present a slightly different proof using the notation of Fourier analysis.

\medskip % \vspace{0.2cm}
\paragraph{A proof of Rothschild and van Lint Theorem.}
We prove the theorem by induction on $n$. It is trivial to see that the theorem holds for $n=1$ (for both $k=0$ and $k=1$).
Let $n\geq 2$. Clearly there is nothing to prove for $k=0$ and $k=n$, so we assume $0<k<n$.
Note that $\hat{f}(\mathbf{0})= |S|/2^n = 1/2^k$, then by Parseval's identity, there exists a non-zero $\alpha$ such that
$\hat{f}(\alpha)=1/2^k$ or $-1/2^k$. Assume that $\hat{f}(\alpha)=1/2^k$ and the case of $\hat{f}(\alpha)=-1/2^k$ is similar.
Applying an invertible linear transform $L$ that maps $\alpha$ to $e_1$, where $e_1$ stands for the standard basis vector $(1,0,\ldots, 0)$.
Note that both the Fourier spectrum of $f$ and any affine subspace are invariant under invertible linear transformations,
hence it suffices to argue about $g:=Lf$.
Now we have $\hat{g}(\mathbf{0})=\hat{g}(e_1)=1/2^k$. 
Applying a linear restriction over the first bit of the input to get sub-functions
$g_0$ and $g_1$ (see Proposition~\ref{prop:restriction} in Appendix~\ref{sec:restriction} for details). 
By \eqref{eqn:subfunction},
$\hat{g}_1(\mathbf{0})=\hat{g}(\mathbf{0})-\hat{g}(e_1)=0$, which implies that $g_1$ is the zero-function.
This implies that $S$ is completely contained in the support of $g_0$ and moreover, by \eqref{eqn:original_function},
$\hat{g}_0(\beta)=2\hat{f}(0, \beta)$ for every $\beta \in \F_2^{n-1}$.
In other words, $g_0$ is a Boolean function over $\F_2^{n-1}$ and $|\hat{g}(\beta)|$ is equal to either zero or $1/2^{k-1}$,
therefore the induction hypothesis applies to $g_0$. 
It follows that $S$ is an affine subspace of dimension
$n-1-(k-1)=n-k$. This completes the proof of Theorem~\ref{thm:RV}.

\medskip % \vspace{0.2cm}
\paragraph{Reducing the dimension of the function domain.}
The proof of the Main Theorem is much more involved than that of Rothschild and van Lint Theorem.
In fact, the proof we described above of Theorem~\ref{thm:RV} is the first step toward proving the main theorem.
The reduction step in the proof of Theorem~\ref{thm:RV} can be regarded as reducing the dimension of function domain while
keeping all the support of the function. Equivalently, one may view the reduction step as decomposing the original function
$f$ as a \emph{tensor product} between a ``core-function'' $g$ and a ``$\delta$-function'' $h$ (see Section~\ref{sec:prelim}
for definition of tensor product of Boolean functions). Namely, $f(x,y)=g(x)\otimes h(y)$,
where $h:\F_2^m \to \{0,1\}$ is the $\delta$-function: $h(y)=1$ if $y=0^m$ and $h(y)=0$ for all other vectors.
That is, $f$ is ``reduced'' to a core-function $g$ with dimension $n-m$. To this end, we say
a function $f:\F_2^n \to \{0,1\}$ is \emph{reducible} if there exists an invertible linear transformation $L$ such that
$Lf$ can be decomposed as the tensor product of a function
$g:\F_2^{n-m} \to \{0,1\}$ and a $\delta$-function $h$ over $\F_2^m$ with $m\geq 1$.
$f$ is said to be \emph{irreducible} if $f$ is not reducible.\footnote{
% Recall that for a function $f:\F_2^n \to \R$, its \emph{Fourier dimension} is the dimension of the smallest subspace which contains all the non-zero Fourier coefficients of $f$. 
To put it differently, a function $f$ defined on $\F_2^n$ is irreducible if and only if the minimum dimension of the affine subspace containing the support of $f$ is $n$.}
Now we are ready to present our Main theorem more precisely.

\begin{theorem}[Main]\label{thm:main}
Let $k\geq 1$, $n>k$ be two integers, and let $f:\F_2^n \to \{0,1\}$ be a non-trivial\footnote{A Boolean function is \emph{trivial} if $f\equiv 0$ or $f\equiv 1$.}
Boolean function with all its Fourier coefficients taking values in $\{0, \frac{\pm 1}{2^k}, \frac{\pm 2}{2^k}\}$. 
Then we have the following complete characterization 
\begin{itemize}
\item If $\hat f(\mathbf{0})=\frac{1}{2^k}$, 
then $f$ is the indicator function of an affine subspace of dimension $n-k$ (Rothschild and van Lint Theorem);
\item If $\hat f(\mathbf{0})=\frac{1}{2^{k-1}}$ and $f$ is irreducible, then $f$ is either the indicator function of disjoint union of two affine subspaces of dimension $n-k$, 
or the indicator function of disjoint union of four affine subspaces of dimension $n-k-1$. Moreover, the latter case is only possible when $k=4$.
\end{itemize}
\end{theorem}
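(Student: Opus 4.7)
The plan is to reduce the Main Theorem to Theorem~\ref{thm:RV} (for the first bullet) and to Lemma~\ref{lemma:main} (for the bulk of the second bullet), with a direct case analysis handling the remaining small values of $k$. Two elementary ingredients are used throughout: the bound $|\hat{f}(\alpha)| \le \hat{f}(\mathbf{0})$ valid for every $\{0,1\}$-valued $f$, and the restriction identities from Proposition~\ref{prop:restriction} that appear in the proof sketch of Theorem~\ref{thm:RV}.

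For the first bullet, when $\hat{f}(\mathbf{0}) = 1/2^k$, the bound $|\hat{f}(\alpha)| \le 1/2^k$ immediately rules out every Fourier coefficient of magnitude $2/2^k$. Consequently the Fourier spectrum lies in $\{0, \pm 1/2^k\}$ and Theorem~\ref{thm:RV} applies directly to deliver the affine-subspace conclusion.

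For the second bullet, I first show that irreducibility of $f$ eliminates all non-origin Fourier coefficients of magnitude $2/2^k$. If $\hat{f}(\beta) = \pm 2/2^k$ for some $\beta \ne \mathbf{0}$, pick an invertible linear map $L$ sending $\beta$ to $e_1$ and set $g := Lf$; then $|\hat{g}(\mathbf{0})| = |\hat{g}(e_1)| = 2/2^k$, and Proposition~\ref{prop:restriction} forces one of $\hat{g}_0(\mathbf{0}) = \hat{g}(\mathbf{0}) + \hat{g}(e_1)$ or $\hat{g}_1(\mathbf{0}) = \hat{g}(\mathbf{0}) - \hat{g}(e_1)$ to be zero. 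The corresponding restriction $g_b$ then vanishes identically, placing the support of $g$ in an affine hyperplane and contradicting irreducibility. Thus the non-origin Fourier spectrum of $f$ lies in $\{0, \pm 1/2^k\}$, which is exactly the hypothesis of Lemma~\ref{lemma:main}; for $k \ge 5$ the Main Lemma immediately yields the two-subspace conclusion.

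What remains is the case analysis for $k \in \{2, 3, 4\}$. Parseval's identity fixes the non-origin Fourier sparsity at exactly $2^{k+1} - 4$, with every such coefficient of magnitude $1/2^k$; the total sparsities are thus $5$, $13$, $29$ for $k = 2, 3, 4$ respectively. For $k \in \{2, 3\}$, iterating the restriction step of Theorem~\ref{thm:RV} on this small sparsity allows an explicit enumeration of all allowable Fourier-support patterns, each of which produces only the disjoint union of two affine subspaces of dimension $n-k$. For $k = 4$ the same enumeration admits a genuinely new family --- the disjoint union of four affine subspaces of dimension $n-5$, where the individual $1/2^5$-scale Fourier contributions cancel pairwise so as to land in $\{0, \pm 1/2^4, \pm 2/2^4\}$ without any coefficient of magnitude $2/2^4$ appearing outside the origin. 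Exclusivity to $k = 4$ follows by verifying that the analogous four-subspace ansatz for any other $k$ either produces Fourier magnitudes outside $\{0, \pm 1/2^k, \pm 2/2^k\}$ or forces the union to be reducible. The main obstacle is precisely this $k = 4$ subcase: both constructing the four-subspace example and establishing exhaustiveness require a delicate finite classification of Fourier supports of sparsity $29$, which unlike the $k \ge 5$ regime cannot be shortcut via the doubling-constant machinery behind Lemma~\ref{lemma:main}.
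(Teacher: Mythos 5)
Your handling of the first bullet and your reduction for the second bullet match the paper's argument: you use $|\hat{f}(\alpha)|\le\hat{f}(\mathbf{0})$ to collapse the $\hat f(\mathbf{0})=1/2^k$ case to Theorem~\ref{thm:RV}, and you correctly observe that any off-origin coefficient of magnitude $2/2^k = 1/2^{k-1}$ would, via Proposition~\ref{prop:restriction}, make one restriction of $f$ vanish identically and hence contradict irreducibility. Invoking Lemma~\ref{lemma:main} for $k\ge 5$ then finishes that regime exactly as the paper does.

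The gap is in your treatment of $k\in\{2,3,4\}$. The method you propose --- ``iterating the restriction step of Theorem~\ref{thm:RV}'' to enumerate Fourier-support patterns of sparsity $5$, $13$, $29$ --- cannot get started: once you have used irreducibility to eliminate the magnitude-$1/2^{k-1}$ coefficients away from the origin, there is no nonzero $\alpha$ with $|\hat f(\alpha)|=\hat f(\mathbf{0})$, and the restriction step of Theorem~\ref{thm:RV} relies on precisely such an $\alpha$. Moreover, enumerating supports of sparsity $13$ or $29$ inside $\F_2^n$ is not a finite problem until one first bounds the Fourier dimension, and that bound is exactly what Even-Zohar's theorem delivers. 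Contrary to your remark, the paper \emph{does} use the doubling-constant machinery in the $k=4$ case: it shows via Theorem~\ref{theorem:Even-Zohar} that either $|2B|\le 21$ (recovering Lemma~\ref{lemma:2B} and hence the two-subspace conclusion) or $|2B|=22$ and $|\vspan{B}|\le 64$, which corners the analysis inside $\F_2^6$ and isolates the unique exceptional $B$ (the weight-$1$ vectors plus $\sum_i e_i$). What breaks at $k=4$ is not the applicability of Even-Zohar's theorem but the conclusion of Lemma~\ref{lemma:2B}: the doubling bound is weak enough there to allow the exceptional $|\vspan{B}|=64$ configuration. The paper's Lemma~\ref{lemma:small_values} handles $k=2$ by a short direct application of Proposition~\ref{prop:Boolean}, $k=3$ by noting that $B$ sum-free with $|B|=3$ forces $R$ and $\Gamma$ and hence $\vspan{B}$ outright, and $k=4$ by the additive chain just described. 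Your proposal asserts the outcome of these arguments (that $k=2,3$ give only the two-subspace family and $k=4$ admits exactly one new family) without supplying a mechanism that would establish it, so the small-$k$ part is not a proof.
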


Back to our problem, since $\hat{f}(\mathbf{0})=1/2^{k-1}$, it is easy to see that whenever there is a non-zero $\alpha$
such that $|\hat{f}(\alpha)|=1/2^{k-1}$, we can restrict $f$ either to the subspace $\langle \alpha, x\rangle = 0$
or to the affine subspace $\langle \alpha, x\rangle = 1$ while keeping the entire support of $f$.
We repeat this process until we reach a Boolean function $f$ with $\hat{f}(\mathbf{0})=1/2^{k-1}$ and
all other non-zero Fourier coefficients have magnitude $1/2^k$.

\medskip % \vspace{0.2cm}
\paragraph{Additive structures of the Fourier spectrum.}
The starting point of our main argument is the following well-known \emph{characterization} of Boolean functions in terms of their
Fourier spectra: a function $f:\F_2^n \to \R$ on the cube is Boolean if and only if
\[
\hat{f}(\alpha)=\sum_{\beta\in \F_2^n}\hat{f}(\beta)\hat{f}(\alpha+\beta)
\] 
holds for every $\alpha \in \F_2^n$.
Our main observation is that, since the non-zero Fourier coefficients $f$ can take only two values when $f$ is irreducible, 
denoting $A:=\{\alpha \mid \hat{f}(\alpha)=1/2^k\}$ and $B:=\{\beta \mid \hat{f}(\beta)=-1/2^k\}$,
then these two sets --- viewed as subsets of abelian group $\F_2^n$ ---  must exhibit strong \emph{additive structures}.
Indeed, one can show that $B+B \subseteq A\cup \{\mathbf{0}\}$ and consequently $|B+B|/|B|\leq (1+|A|)/|B|$.

What can be said about a set $B$ if its \emph{doubling constant} $K:=|B+B|/B$ is small? This is a classical problem extensively studied 
in additive combinatorics. Additive combinatorics is a burgeoning mathematics sub-area which finds
exciting applications in theoretical computer science in recent years~\cite{Sam07,BR15,BLR14,BDL13,ADL18}.
Green and Tao~\cite{GT09b} proved that, when the underlying ambient group is $\F_2^n$, then $B$ is 
contained in a subspace of size $2^{2K+O(K\log{K})}|B|$, which is asymptotically optimal.
Unfortunately, such \emph{asymptotic} ``high end'' bounds are not accurate enough to be useful for our problem. 
In fact, we make crucial use of a ``low end'' additive combinatorics result of Even-Zohar~\cite{Eve12},
which provides tight bounds on the size of affine span of $B$ in terms of its doubling constant.
It is worth noting that all aforementioned applications of additive combinatorics in theoretical computer science
employ theorems regarding \emph{asymptotic} behaviors of certain combinatorial objects. 
We hope researchers may find further applications of such ``low end'' additive combinatorics results in other places.

\subsection{Motivations and related work}\label{sec:intro_motivations}
To the best of our knowledge, besides the work of Rothschild and van Lint, 
there is no previous structural result on Boolean functions 
in terms the \emph{magnitudes} of their Fourier coefficients only.
Friedgut~\cite{Fri98} showed that if the total influence of a Boolean function is small, 
then it is close to some junta --- a function that depends only on a bounded number of variables.
Friedgut {\etal}~\cite{FKN02} studied Boolean functions whose Fourier mass are concentrated on the lowest
two levels and proved that such functions are close to parity functions or negations of parity functions.
For a special class of Boolean functions, the so-called \emph{linear threshold functions},
a celebrated result of Chow~\cite{Cho61} states that these functions are completely
determined by their lowest two level Fourier coefficients;  
see~\cite{OS11,DDFS14} for recent robust versions as well as algorithmic versions of Chow's theorem.
Note that all previous structural theorems mentioned above, except Chow's, 
are ``robust'' in the following sense: 
the structural results are robust against small perturbations in the Boolean function's 
Fourier spectrum. Our main result is automatically robust: by Parseval's identity,
small distance in Fourier spectrum implies small distance in function space; 
consequently, any Boolean function whose Fourier coefficients are close to being in the form stated 
in our Main Theorem must also be close to having the affine subspace structures asserted in the theorem.

Apart from studying to what extent can the values of Fourier coefficients themselves determine a Boolean function,
an important motivation of this research is to study the behaviors of \emph{Fourier sparse} Boolean functions~\cite{GOS+11}.
Gopalan \etal~\cite{GOS+11} proved that, if a Boolean function $f$ has only $s$ non-zero Fourier coefficients, then
every Fourier coefficient of $f$ is of the form $m/2^k$, where $m$ is an integer and $k/2 \leq \log{s} \leq k$.
That is, the granularity and Fourier sparsity of a Boolean function are, up to a constant factor, identical.
Our result may be regarded as characterizing Boolean functions of Fourier granularity $k$ 
when all Fourier coefficients of $f$ are between $-2/2^k$ and $2/2^k$.

Probably the most prominent open problem in communication complexity is the so-called \emph{Log-rank Conjecture}
proposed by Lov{\'a}sz and Saks~\cite{LS88}, which asserts that the deterministic communication complexity 
of any $F: \F_2^n \times \F_2^n \to \{0,1\}$, $\dcc(F)$, 
is upper bounded by a polynomial of the logarithm of the rank of the communication matrix $M_F = [F(x,y)]_{x,y}$,
where the rank is taken over the reals.
Even after more than 30 years of extensive study, we are still very far from resolving it;
the current best bound is Lovett's $\dcc(F)=O(\sqrt{r}\log{r})$~\cite{Lov14}, where $r$ is the rank of $M_F$.
Recently, studying the Log-rank conjecture for a special class of two-party functions, the so-called \emph{XOR functions},
has attracted much attention~\cite{ZS10,TWXZ13,STV17,HHL18,TXZ16,LZ17,CP18}.
The corresponding conjecture for this special class of functions is sometimes called \emph{Log-rank XOR conjecture}.
Specifically, $F$ is an XOR function if there exists an $f: \F_2^n \to \{0,1\}$ such that
for all $x$ and $y$, $F(x,y) = f(x+y)$.
The beautiful connection between the Log-rank XOR conjecture and Fourier analysis of Boolean functions is that, 
if $F$ is an XOR function, then the rank of $M_{F}$ is just the Fourier sparsity of $f$~\cite{BC99}.
Moreover, it is now known that resolving the Log-rank XOR conjecture is equivalent to finding a \emph{parity decision tree}
of depth $\polylog{s}$, or $\poly{k}$ for any Boolean function $f$~\cite{ZS10,TWXZ13,HHL18}, where $s$ is the Fourier sparsity and 
$k$ is the granularity of $f$.

The parity kill number of a Boolean function $f$ is defined as
\[
C_{\oplus, \min}(f) := \min \{\text{co-dim}(S) \mid \text{$S$ is an affine subspace on which $f$ is constant}\}
\]
Tsang \etal~\cite{TWXZ13} demonstrated that, to resolve the Log-rank XOR conjecture, it is sufficient to prove that
the kill number of any Boolean function $f$ is upper bounded by $\polylog{s}$ or $\poly{k}$. 
See~\cite{OST+14,CMS19} for recent developments on constructing Boolean functions with large kill numbers. 
Our main result can be regarded as showing that any Boolean function with granularity $k$ and $\hat{f}(\mathbf{0})\leq 2/2^k$
has kill number at most $k+1$. In fact, by induction on $m$ and folding $\hat{f}(\mathbf{0})$ with any other non-zero Fourier coefficient, we immediately have the following corollary.
\begin{corollary}\label{cor:kill}
Let $f: \F_2^n \to \{0,1\}$ be a Boolean function with granularity $k$ and $\hat{f}(\mathbf{0})=m/2^k$.
Then the kill number of $f$ is at most $k+m-1$.
\end{corollary}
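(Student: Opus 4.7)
The plan is to prove Corollary~\ref{cor:kill} by strong induction on $m$. The base case $m=1$ is essentially Rothschild and van Lint's theorem: because $f$ is $\{0,1\}$-valued, $|\hat{f}(\alpha)|\le \hat{f}(\mathbf{0})=1/2^k$ for every $\alpha$, and combined with granularity~$k$ this forces every Fourier coefficient to lie in $\{0,\pm 1/2^k\}$. Theorem~\ref{thm:RV} then identifies $f$ with the indicator of an affine subspace of dimension $n-k$, on which $f\equiv 1$, so the kill number is at most $k=k+m-1$.

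For the inductive step, assume $m\ge 2$. Since the granularity is exactly $k\ge 1$, some nonzero Fourier coefficient $\hat{f}(\alpha)=c/2^k$ with $\alpha\ne\mathbf{0}$ and $c\in\mathbb{Z}\setminus\{0\}$ exists. Invertible linear transformations preserve Fourier granularity, the value $\hat{f}(\mathbf{0})$, and affine-subspace structure, so I may assume $\alpha=e_1$. Writing $f_b(y):=f(b,y)$ for $b\in\{0,1\}$, the standard restriction identities (as in the proof of Theorem~\ref{thm:RV} sketched above) give
\[
\widehat{f_b}(\beta)=\hat{f}(0,\beta)+(-1)^b\,\hat{f}(e_1,\beta).
\]
Selecting $b=1$ when $c>0$ and $b=0$ when $c<0$ yields a restriction $f^\star:\F_2^{n-1}\to\{0,1\}$ with $\widehat{f^\star}(\mathbf{0})=(m-|c|)/2^k$; since every $\widehat{f^\star}(\beta)$ remains an integer multiple of $1/2^k$, the granularity of $f^\star$ is at most $k$.

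If $m=|c|$, then $\widehat{f^\star}(\mathbf{0})=0$ and so $f^\star\equiv 0$; hence $f$ is constantly $0$ on the hyperplane $\{x:x_1=b\}$ and its kill number is at most $1\le k+m-1$. Otherwise $m>|c|\ge 1$; let $k^\star\le k$ be the granularity of $f^\star$ and write $\widehat{f^\star}(\mathbf{0})=m^\star/2^{k^\star}$. Since $m^\star=(m-|c|)/2^{k-k^\star}\le m-|c|\le m-1$, the inductive hypothesis applies to $f^\star$ and yields an affine subspace of $\F_2^{n-1}$ of co-dimension at most $k^\star+m^\star-1$ on which $f^\star$ is constant. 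Pulling this subspace back through $x_1=b$ gives an affine subspace of $\F_2^n$ of co-dimension at most $1+(k^\star+m^\star-1)=k^\star+m^\star\le k+m-1$ on which $f$ is constant, closing the induction. The only subtle bookkeeping is that the granularity can drop to $k^\star<k$ at the restriction step, but the inequality $m^\star\le m-1$ survives and absorbs this slack, so there is no real obstacle.
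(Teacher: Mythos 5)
Your proof is correct and follows the same approach the paper indicates for Corollary~\ref{cor:kill}: induction on $m$, at each step folding $\hat{f}(\mathbf{0})$ with another non-zero Fourier coefficient via a linear restriction, with Rothschild--van Lint as the $m=1$ base case. The bookkeeping when the granularity drops under restriction is handled correctly by the inequality $k^\star + m^\star \le k + m - 1$.
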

Of course, Corollary~\ref{cor:kill} is still very far from showing the desired kill number bound $\poly{k}$ as $m$ can be as large as $2^{k-1}$,
but it is hoped that further investigations along this approach may lead to more interesting results.

%%%%%%%%%%%%%%%%%%%%%%%%%%%%%%%%%%%%%%%%%%%%%%%%%%%%%%%%%%%%%%%%%%%%%%%%%%%%%%%%%%%%%%%%%%%%%%%%%%%%%%%%%%%%%%%%%%%%%$$
%%%%%%%%%%%%%%%%%%%%%%%%%%%%%%%%%%%%%%%%%%%%%%%%%%%%%%%%%%%%%%%%%%%%%%%%%%%%%%%%%%%%%%%%%%%%%%%%%%%%%%%%%%%%%%%%%%%%%$$

\subsection{Organization}
The rest of the paper is organized as follows. Preliminaries
and notations that we use throughout the paper are summarized in Section~\ref{sec:prelim}.
We prove our Main Lemma, which deals with the cases when $k$ is at least $5$ in Section~\ref{sec:main_lemma},
while the small value cases are discussed in Section~\ref{sec:small_values}.
Then, by combining these two ingredients, we prove our Main Theorem in Section~\ref{sec:main_thm}.
Finally we end with a brief section of conclusions and open questions.

%%%%%%%%%%%%%%%%%%%%%%%%%%%%%%%%%%%%%%%%%%%%%%%%%%%%%%%%%%%%%%%%%%%%%%%%%%%%%%%%%%%%%%%%%%%%%%%%%%%%%%%%%%%%%%%%%%%%%$$
%%%%%%%%%%%%%%%%%%%%%%%%%%%%%%%%%%%%%%%%%%%%%%%%%%%%%%%%%%%%%%%%%%%%%%%%%%%%%%%%%%%%%%%%%%%%%%%%%%%%%%%%%%%%%%%%%%%%%$$
%%%%%%%%%%%%%%%%%%%%%%%%%%%%%%%%%%%%%%%%%%%%%%%%%%%%%%%%%%%%%%%%%%%%%%%%%%%%%%%%%%%%%%%%%%%%%%%%%%%%%%%%%%%%%%%%%%%%%$$

\section{Preliminaries}\label{sec:prelim}

All logarithms in this paper are to the base $2$.
Let $n\geq 1$ be a natural number, then $[n]$ denotes the set $\{1,\ldots, n\}$.
We use $\F_2$ for the field with $2$ elements $\{0,1\}$, where addition and multiplication are performed modulo $2$.
We view elements in $\F_{2}^{n}$ as $n$-bit binary strings, i.e. elements in $\{0,1\}^n$, interchangeably.
If $x$ and $y$ are two $n$-bit strings, then $x+y$ (or $x-y$) denotes bitwise addition (i.e. XOR) of $x$ and $y$.
For positive integers $m$ and $n$, if $y \in \F_{2}^{m}$ and $z \in \F_{2}^{n}$, then we write
$x=(y, z)$ to denote the binary string $x \in \F_{2}^{m+n}$ obtained from concatenating $y$ and $z$ together.
We view $\F_2^n$ as a vector space equipped with an inner product $ \langle x, y \rangle $,
which we take to be the standard dot product: $\langle x, y \rangle = \sum_{i=1}^n x_iy_i$,
where all operations are performed in $\F_2$.

%%%%%%%%%%%%%%%%%%%%%%%%%%%%%%%%%%%%%%%%%%%%%%%%%%%%%%%%%%%%%%%%%%%%%%%%%%%%%%%%%%%%%%%%%%%%%%%%%%%%%%%%%%%%%%%%%%%%%$$
%%%%%%%%%%%%%%%%%%%%%%%%%%%%%%%%%%%%%%%%%%%%%%%%%%%%%%%%%%%%%%%%%%%%%%%%%%%%%%%%%%%%%%%%%%%%%%%%%%%%%%%%%%%%%%%%%%%%%$$

\subsection{Boolean functions and Fourier analysis}
We often use $f$ to denote a real function defined on $\F_2^n$ and
write $\supp{f}=\{x\in \F_2^n \mid f(x)\neq 0\}$ for the \emph{support} of $f$.
Sometimes we view $f$ as a $2^n$-dimensional vector, e.g. write $f=\mathbf{0}$ and $f=\mathbf{1}$
to denote the trivial all-zero function and all-one function, respectively.
In this paper, a function $f$ is \emph{Boolean} if its range is $\{0,1\}$.

For every $\alpha \in \F_{2}^{n}$, one can define a \emph{linear function} (or \emph{parity function})
mapping $\F_{2}^{n}$ to $\{0,1\}$ as $\ell_{\alpha}(x)=\langle \alpha, x \rangle$.
Let $\chi_{\alpha}=(-1)^{\ell_{\alpha}}$, which are commonly known as \emph{characters}.
For functions $f, g\colon \F_2^n \to \R$ the inner product is defined as
$\langle f, g\rangle := \Exp_{x\in \F_2^n} (f(x) g(x))$. For $\alpha = (\alpha_1, \ldots, \alpha_n) \in \F_2^n$, the corresponding
character function $\chi_\alpha$ is defined as $\chi_\alpha(x_1, \ldots, x_n) = \prod_{i\colon \alpha_i = 1} (-1)^{x_i} = (-1)^{\langle \alpha,x \rangle}$.
For $\alpha,\beta \in \F_{2}^{n}$, the inner product between $\chi_\alpha$ and $\chi_\beta$
is 1 if $\alpha=\beta$, and $0$ otherwise.
Therefore the characters form an orthonormal basis for real-valued functions over $\F_{2}^{n}$,
and we can expand any $f$ defined on $\F_{2}^{n}$ using $\{\chi_{\alpha}\}_{\alpha \in \F_{2}^{n}}$ as a basis.
\begin{definition}[Fourier Transform]
Let $f\colon \F_{2}^{n} \to \R$.
The \emph{Fourier transform} $\hat{f}\colon \F_{2}^{n} \to \C$ of $f$ is defined to be
 $\hat{f}(\alpha)=\Exp_{x}(f(x)\chi_{\alpha}(x))$.
The quantity $\hat{f}(\alpha)$
is called the \emph{Fourier coefficient} of $f$ at $\alpha$.
\end{definition}
The Fourier inversion
formula is given by $f(x)=\sum_{\alpha\in\F_{2}^{n}}\hat{f}(\alpha)\chi_{\alpha}(x)$,
and the Parseval's identity is
$\sum_{\alpha\in \F_{2}^{n}}\hat{f}(\alpha)^{2}=\Exp_{x}(f(x)^{2})$.
The Fourier sparsity of $f$, denoted by $\|\hat f\|_0$,
is the number of nonzero Fourier coefficients of $f$.
% We sometimes use $\hat f$ to denote the vector of $\{\hat f(\alpha) \mid \alpha\in \F_2^n\}$.

\subsubsection{Fourier characterization of Boolean functions}
Our proof crucially relies on the following characterization of Boolean functions in terms of their Fourier spectra.
We give a proof for completeness.
\begin{proposition}[Folklore]\label{prop:Boolean}
A function $f:\F_2^n \to \R$ defined on the hypercube is Boolean if and only if for every $\alpha \in \F_2^n$,
\begin{equation}\label{eqn:Boolean}
\hat{f}(\alpha)=\sum_{\beta\in \F_2^n}\hat{f}(\beta)\hat{f}(\alpha+\beta).
\end{equation}
\end{proposition}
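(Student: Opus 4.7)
The plan is to reduce the statement to the pointwise identity $f=f^2$, which characterizes $\{0,1\}$-valued real functions since the only real solutions to $t^2=t$ are $t\in\{0,1\}$. So it suffices to show that condition~\eqref{eqn:Boolean} is equivalent to $f(x)=f(x)^2$ for every $x\in\F_2^n$.

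The key algebraic fact I would use is that multiplication of functions corresponds to convolution of Fourier coefficients (an easy consequence of $\chi_\beta \cdot \chi_\gamma = \chi_{\beta+\gamma}$). First I would expand $f$ via the Fourier inversion formula and compute
\[
f(x)^2 \;=\; \sum_{\beta,\gamma \in \F_2^n} \hat{f}(\beta)\hat{f}(\gamma)\,\chi_\beta(x)\chi_\gamma(x) \;=\; \sum_{\alpha \in \F_2^n} \left(\sum_{\beta \in \F_2^n}\hat{f}(\beta)\hat{f}(\alpha+\beta)\right)\chi_\alpha(x),
\]
where I substituted $\alpha = \beta+\gamma$ (so $\gamma=\alpha+\beta$ in $\F_2^n$). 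Reading off Fourier coefficients, this shows $\widehat{f^2}(\alpha) = \sum_\beta \hat{f}(\beta)\hat{f}(\alpha+\beta)$ for every $\alpha$.

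With this identity in hand, condition~\eqref{eqn:Boolean} says precisely that $\hat{f}(\alpha) = \widehat{f^2}(\alpha)$ for every $\alpha\in\F_2^n$. Since the characters $\{\chi_\alpha\}$ form an orthonormal basis, the Fourier transform is injective on real-valued functions, so this is equivalent to $f=f^2$ pointwise, i.e.\ $f(x)\in\{0,1\}$ for every $x$. Both directions of the proposition follow immediately.

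There is no real obstacle here, only bookkeeping: the only care needed is to make sure the double sum is reindexed correctly and to invoke the uniqueness of the Fourier expansion at the right step. The entire argument is a short two-line computation together with the observation that $t=t^2$ characterizes Booleanness.
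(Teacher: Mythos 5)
Your proposal is correct and follows essentially the same route as the paper: both reduce Booleanness to the pointwise identity $f^2=f$, expand $f^2$ in the Fourier basis to obtain the convolution formula $\widehat{f^2}(\alpha)=\sum_{\beta}\hat{f}(\beta)\hat{f}(\alpha+\beta)$, and conclude by uniqueness of Fourier coefficients. Nothing is missing.
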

\begin{proof}
This follows from the fact that $f$ is Boolean if and only if $f^2(x)-f(x)=0$ for every $x$. Now expand
the left-hand side in terms of Fourier coefficients and notice that, since the right-hand side is the $\mathbf{0}$-function,
all of its Fourier coefficients all zero.
Comparing each pair of the corresponding Fourier coefficients on both sides gives the desired equality.
\end{proof}

\subsubsection{Linear restrictions}
The following is a folklore theorem regarding the effect of linear restrictions on the
Fourier spectrum of a function defined over the Boolean hypercube.
We include a proof in Appendix~\ref{sec:restriction} for completeness.
\begin{proposition}\label{prop:restriction}
Let $f: \F_2^n \to \R$ be a function defined on the Boolean hypercube.
Let $f_0, f_1: \F_{2}^{n-1} \to \R$ be the ``sub-functions'' obtained from restricting the first bit of the input to $0$ and $1$, respectively;
that is, $f_0(y):= f(0,y)$ and $f_1(y) := f(1, y)$ for all $y\in \F_{2}^{n-1}$.
Then the Fourier spectra of $f_0$ and $f_1$ satisfy that, for all $\beta\in \F_{2}^{n-1}$,
\begin{align}\label{eqn:subfunction}
\hat{f}_{0}(\beta) = \hat{f}(0,\beta) + \hat{f}(1,\beta),
\qquad
\hat{f}_{1}(\beta) = \hat{f}(0,\beta) - \hat{f}(1,\beta).
\end{align}
Conversely, the Fourier spectrum of $f$ satisfies
\begin{align}\label{eqn:original_function}
\hat{f}(0,\beta) = \frac{1}{2} (\hat{f}_{0}(\beta) + \hat{f}_{1}(\beta)),
\qquad
\hat{f}(1,\beta) = \frac{1}{2} (\hat{f}_{0}(\beta) - \hat{f}_{1}(\beta)).
\end{align}
\end{proposition}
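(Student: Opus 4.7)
The plan is to prove both directions by directly applying the definition of the Fourier transform and then inverting a trivial two-by-two linear system. First I would expand $\hat{f}(0,\beta)$ from its definition as $\Exp_{(x_1,y) \in \F_2 \times \F_2^{n-1}}[f(x_1,y)\chi_{(0,\beta)}(x_1,y)]$, and observe that the character $\chi_{(0,\beta)}(x_1,y) = (-1)^{0 \cdot x_1 + \langle \beta, y\rangle}$ depends only on $y$ and equals $\chi_\beta(y)$. Splitting the uniform expectation over $x_1 \in \F_2$ into the two halves corresponding to $x_1 = 0$ and $x_1 = 1$ rewrites this as $\tfrac{1}{2}\Exp_y[f(0,y)\chi_\beta(y)] + \tfrac{1}{2}\Exp_y[f(1,y)\chi_\beta(y)]$, which by the definitions of $f_0$ and $f_1$ is precisely $\tfrac{1}{2}(\hat{f}_0(\beta) + \hat{f}_1(\beta))$.

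Next I would handle $\hat{f}(1,\beta)$ in exactly the same way, noting that $\chi_{(1,\beta)}(x_1,y) = (-1)^{x_1}\chi_\beta(y)$, so the sign flips between the $x_1 = 0$ and $x_1 = 1$ halves and we obtain $\tfrac{1}{2}(\hat{f}_0(\beta) - \hat{f}_1(\beta))$. Together these two identities establish equation \eqref{eqn:original_function}.

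Finally, to obtain equation \eqref{eqn:subfunction}, I would simply add and subtract the two identities just proved: the sum gives $\hat{f}(0,\beta) + \hat{f}(1,\beta) = \hat{f}_0(\beta)$ and the difference gives $\hat{f}(0,\beta) - \hat{f}(1,\beta) = \hat{f}_1(\beta)$. There is no real obstacle here; the only thing to be careful about is the factor of $\tfrac{1}{2}$ arising from the uniform expectation over the first coordinate, which is why the \emph{forward} and \emph{inverse} relations look asymmetric (no $\tfrac{1}{2}$ appears in \eqref{eqn:subfunction} but it does in \eqref{eqn:original_function}).
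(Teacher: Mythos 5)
Your proposal is correct and follows essentially the same route as the paper: both derive \eqref{eqn:original_function} directly from the definition of $\hat{f}(0,\beta)$ (and $\hat{f}(1,\beta)$) by splitting the sum/expectation over the first coordinate and recognizing the resulting pieces as $\tfrac12\hat{f}_0(\beta)$ and $\tfrac12\hat{f}_1(\beta)$, then obtain \eqref{eqn:subfunction} by adding and subtracting. The only cosmetic difference is that you phrase the computation in terms of expectations while the paper writes out the normalized sums explicitly.
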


\subsubsection{Tensor product}
The statement as well as the proof of Main Theorem requires the standard notion of tensor products between functions.
\begin{definition}[Tensor Product of Boolean Functions]
Let $f:\F_2^{n_{1}}\to \{0,1\}$ and $g:\F_2^{n_{2}}\to \{0,1\}$
be two Boolean functions on $n_{1}$ and $n_{2}$ variables respectively.
Then the tensor product of $f$ and $g$, denoted by $f \otimes g$, is a Boolean function
over $\F_2^{n_{1}+n_{2}}$ such that
$f \otimes g(x, y)=f(x)\cdot g(y)$
for all $x\in \F_2^{n_{1}}$ and $y\in \F_2^{n_{2}}$.
\end{definition}
It is easy to verify the following fact.
\begin{fact}
If $h=f \otimes g$ is the tensor product of two Boolean function defined above, then the Fourier spectrum
$h$ satisfies that $\hat{h}(\alpha, \beta)=\hat{f}(\alpha) \cdot \hat{g}(\beta)$,
for every $\alpha\in \F_2^{n_{1}}$ and $\beta \in \F_2^{n_{2}}$.
\end{fact}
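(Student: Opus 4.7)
The plan is to verify the tensor-product Fourier identity by directly unfolding the definition of the Fourier coefficient on $\F_2^{n_1+n_2}$ and exploiting the multiplicativity of characters. Since $(x,y)\in\F_2^{n_1+n_2}$ is the concatenation of independent uniform strings $x\in\F_2^{n_1}$ and $y\in\F_2^{n_2}$, the uniform expectation over $(x,y)$ factors as a product of uniform expectations over $x$ and $y$ separately. This is the only structural fact I will need beyond the definitions.

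First I would expand $\hat{h}(\alpha,\beta) = \Exp_{(x,y)}[h(x,y)\,\chi_{(\alpha,\beta)}(x,y)]$ using the definition of the Fourier transform on $\F_2^{n_1+n_2}$. Second, I would observe that characters factor across the concatenation: because the inner product splits as $\langle(\alpha,\beta),(x,y)\rangle = \langle\alpha,x\rangle + \langle\beta,y\rangle$, we get $\chi_{(\alpha,\beta)}(x,y) = (-1)^{\langle\alpha,x\rangle}(-1)^{\langle\beta,y\rangle} = \chi_\alpha(x)\,\chi_\beta(y)$. Third, by the definition of the tensor product $h(x,y)=f(x)\cdot g(y)$, the integrand becomes a product of a function of $x$ alone and a function of $y$ alone, namely $f(x)\chi_\alpha(x) \cdot g(y)\chi_\beta(y)$. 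Fourth, I would apply the factorization of the expectation over the product uniform distribution to conclude
\[
\hat{h}(\alpha,\beta) = \Exp_{x\in\F_2^{n_1}}[f(x)\chi_\alpha(x)] \cdot \Exp_{y\in\F_2^{n_2}}[g(y)\chi_\beta(y)] = \hat{f}(\alpha)\cdot\hat{g}(\beta),
\]
which is exactly the claim.

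There is no real obstacle here; the statement is essentially a one-line calculation, and the author's remark ``It is easy to verify'' is accurate. The only point that deserves a sentence of justification is the factorization of the expectation, which follows because the uniform measure on $\F_2^{n_1+n_2}$ is the product of the uniform measures on $\F_2^{n_1}$ and $\F_2^{n_2}$. Note that while this fact is stated for Boolean-valued $f$ and $g$, nothing in the argument uses Booleanness; it holds verbatim for any real-valued (or complex-valued) $f,g$, so the same proof establishes the slightly more general statement that will in fact be used elsewhere in the paper when combining a ``core-function'' with a $\delta$-function as in the reducibility discussion.
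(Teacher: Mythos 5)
Your proof is correct and is the standard one-line verification; the paper itself omits a proof, stating only that the fact ``is easy to verify,'' and your argument (expand the definition, factor the character across the concatenation, factor the product-uniform expectation) is exactly what that remark has in mind. Your observation that Booleanness is never used is also accurate.
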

Given a Boolean function $f:\F_2^{n_{1}}\to \{0,1\}$, two commonly used functions to tensor with $f$ are the all-one function 
$g_1=\mathbf{1}$ whose Fourier spectrum is $\hat{g}_1(\mathbf{0})=1$ and $\hat{g}_1(\alpha)=0$ for any $\alpha \neq \mathbf{0}$; 
and the ``$\delta$-function'' $g_2$ defined by $g_2(x)=1$ if and only if $x=0^{n_2}$, whose 
Fourier spectrum is $\hat{g}_2(\alpha)=1/2^{n_2}$ for every $\alpha$. 
Note that tensoring $f$ with $g_1$ is equivalent to setting each to the $2^{n_2}$ sub-functions, defined by restricting $y$ 
to different values in $\F_2^{n_{2}}$, to $f$;
and tensoring $f$ with $g_1$ is to set the sub-function with $y=\mathbf{0}$ to $f$ and set all other sub-functions to the all-zero function.

\subsubsection{Invertible linear transformations and linear shifts}
Let $L: \F_2^n \to \F_2^n$ be an invertible linear transformation. If $f:\F_2^n \to \{0,1\}$ is a Boolean function,
then define $g:=Lf$, the function obtained from applying the linear transformation $L$ to $f$, as $g(x)=f(Lx)$
for all $x\in \F_2^n$.
The Fourier spectrum of $g$ is given by $\hat{g}(\alpha)=\hat{f}((L^T)^{-1}\alpha)$, 
where $L^T$ stands for the transpose of $L$ viewed as an $n\times n$ matrix. 
One can check that the set of Fourier coefficients as well as
the property of being the indicator function of an (affine) linear subspace are invariant under invertible linear transformations.
If $a\in \F_2^n$ is a non-zero vector, and let $h(x):=f(x+a)$ be the linear shift of $f$, 
then the Fourier spectrum of $h$ is given by 
$\hat{h}(\alpha)=\chi_{a}(\alpha)\hat{f}(\alpha)$ for every $\alpha \in \F_2^n$.

%%%%%%%%%%%%%%%%%%%%%%%%%%%%%%%%%%%%%%%%%%%%%%%%%%%%%%%%%%%%%%%%%%%%%%%%%%%%%%%%%%%%%%%%%%%%%%%%%%%%%%%%%%%%%%%%%%%%%$$
%%%%%%%%%%%%%%%%%%%%%%%%%%%%%%%%%%%%%%%%%%%%%%%%%%%%%%%%%%%%%%%%%%%%%%%%%%%%%%%%%%%%%%%%%%%%%%%%%%%%%%%%%%%%%%%%%%%%%$$
\subsection{Additive combinatorics}
Additive combinatorics is the sub-field of mathematics concerned with subsets of integers or more generally abelian groups,
and studies the interplay between the structural properties of a subset and its combinatorial estimates associated with arithmetic operations.
Recently additive combinatorics has found many applications in computer science, see the excellent exposition~\cite{Lov17}
and the textbook~\cite{TV06} for comprehensive treatments.

Throughout this paper, $G$ is the abelian group $\F_2^n$ for some positive integer $n$ 
and the underlying field is $\F_2$.
If $A=\{a_1, \ldots, a_m\}\subset G$, then $\vspan{A}$ stands for the \emph{linear span} of $A$:
$\vspan{A}=\{\sum_{i\in S}a_i \mid S\subseteq [m] \}$, where summation over the empty set is understood to be the $0$ element by convention.
For any $x\in G$ and $A \subset G$, we write $x+A$ to denote the set $\{x+a \mid a\in A\}$.
If $A$ and $B$ are two subsets of $G$, then $A+B$ denotes the \emph{sumset} $\{a+b \mid a\in A \text{ and } b\in B\}$.
Similarly, $A-B:=\{a-b \mid a\in A \text{ and } b\in B\}$, although $A-B$ is always the same as $A+B$ in this paper as the underlying ambient group
is $\F_2^n$.
If $A=B$ then we write $2A:=A+A$ and in general write $kA:=\underbrace{A+\cdots+A}_{k \text{ times}}$ for integer $k\geq 1$.

The following Lemma of Laba is useful for our proofs.
\begin{lemma}[\cite{Lab01}, Theorem 2.5]\label{lemma:Laba}
Let $G$ be an abelian group and $A \subset G$ be a subset of $G$ such that $|A-A|<\frac{3}{2}|A|$. Then $A-A$ is a subgroup of $G$.
\end{lemma}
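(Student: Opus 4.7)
The plan is to verify the standard subgroup criterion for $B := A - A$: that $0 \in B$ (immediate, since $a - a = 0$ for any $a \in A$) and that $c_1, c_2 \in B$ implies $c_1 - c_2 \in B$. The engine of the argument is a uniform lower bound on the representation function
\[
r(c) := |\{(a_1, a_2) \in A \times A : a_1 - a_2 = c\}| = |A \cap (A + c)|,
\]
showing that every $c \in B$ admits strictly more than $|A|/2$ representations.

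First I would establish this bound via a translate-overlap estimate. For any $c \in B$, fix a representation $c = a_0' - a_0$ with $a_0, a_0' \in A$. Both $A - a_0$ and $A - a_0'$ are subsets of $B$ of size $|A|$, so by inclusion-exclusion
\[
|(A - a_0) \cap (A - a_0')| \;\geq\; 2|A| - |B| \;>\; 2|A| - \frac{3}{2}|A| \;=\; \frac{|A|}{2}.
\]
Any $x$ in this intersection yields a pair $(x + a_0',\, x + a_0) \in A \times A$ whose difference is exactly $c$, and distinct $x$ produce distinct pairs, so $r(c) > |A|/2$.

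Next, given $c_1, c_2 \in B$, set $S_i := A \cap (A + c_i)$ for $i = 1,2$. Each $S_i \subseteq A$ has $|S_i| = r(c_i) > |A|/2$, so $|S_1| + |S_2| > |A|$, and inclusion-exclusion \emph{inside} $A$ gives $|S_1 \cap S_2| \geq |S_1| + |S_2| - |A| \geq 1$. Choosing any $a \in S_1 \cap S_2$, all three of $a$, $a - c_1$, and $a - c_2$ lie in $A$, hence
\[
c_1 - c_2 \;=\; (a - c_2) - (a - c_1) \;\in\; A - A,
\]
completing the subgroup check.

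The argument presents no genuine obstacle; the only delicate point is that the hypothesis must be strict. If $|A-A| = \frac{3}{2}|A|$ were permitted, the representation bound would degrade to $r(c) \geq |A|/2$, which is insufficient to force $|S_1 \cap S_2| \geq 1$ (indeed the constant $3/2$ is tight). Since the proof uses only translations and basic set counting in the ambient group, it works verbatim in any abelian group $G$, matching the stated generality of the lemma and in particular covering the case $G = \F_2^n$ used later in the paper.
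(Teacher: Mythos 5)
Your proof is correct: the translate-overlap bound $|A\cap(A+c)|\geq 2|A|-|A-A|>|A|/2$ for every $c\in A-A$, followed by the pigeonhole intersection of $A\cap(A+c_1)$ and $A\cap(A+c_2)$ inside $A$, gives closure under subtraction, and together with $0\in A-A$ this is the standard one-step subgroup test; your remark on the tightness of the constant $\tfrac{3}{2}$ is also accurate. Note that the paper itself gives no proof of this lemma---it is quoted from \cite{Lab01} as a black box---and your argument is essentially the classical proof of that cited result, so it serves as a correct self-contained substitute.
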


%%%%%%%%%%%%%%%%%%%%%%%%%%%%%%%%%%%%%%%%%%%%%%%%%%%%%%%%%%%%%%%%%%%%%%%%%%%%%%%%%%%%%%%%%%%%%%%%%%%%%%%%%%%%%%%%%%%%%%$$
%%%%%%%%%%%%%%%%%%%%%%%%%%%%%%%%%%%%%%%%%%%%%%%%%%%%%%%%%%%%%%%%%%%%%%%%%%%%%%%%%%%%%%%%%%%%%%%%%%%%%%%%%%%%%%%%%%%%%$$
%%%%%%%%%%%%%%%%%%%%%%%%%%%%%%%%%%%%%%%%%%%%%%%%%%%%%%%%%%%%%%%%%%%%%%%%%%%%%%%%%%%%%%%%%%%%%%%%%%%%%%%%%%%%%%%%%%%%%$$
\section{Proof of the Main Lemma}\label{sec:main_lemma}
First recall our Main Lemma states the following.
\begin{lemma-main}
Let $k\geq 5$ and $n\geq k$ be integers.
Let $f: \F_2^n \to \{0,1\}$ be a Boolean function such that $\hat{f}(\mathbf{0})=1/2^{k-1}$ and any other Fourier coefficients are either zero
or equal to $\pm \frac{1}{2^k}$, then $f$ is the indicator function of a disjoint union of two dimension $n-k$
affine subspaces.
\end{lemma-main}

In Appendix~\ref{sec:Fourier_spectrum}, we compute the Fourier spectrum of a Boolean function that is supported on
two disjoint affine subspaces such that the two affine subspaces are of the same dimension
and their Fourier spectra have minimum intersection.
Our strategy for the proof of the Main Lemma is to show that if the Fourier coefficients of a Boolean function satisfy
the condition prescribed in the Main Lemma, then its Fourier spectrum matches the one we show in Appendix~\ref{sec:Fourier_spectrum}.

Let us define
\[
A=\{\alpha \in \F_2^n \mid \hat{f}(\alpha)=\frac{1}{2^k}\}
\]
and
\[
B=\{\beta \in \F_2^n \mid \hat{f}(\beta)=-\frac{1}{2^k}\}.
\]

Without loss of generality\footnote{This is because if $f(\mathbf{0})=0$, 
then let $a\in \F_2^n$ be any vector such that $f(a)=1$.
We can apply a linear shift $a$ to $f$ to get a new Boolean function, 
$h(x)=f(x+a)$ for every $x$, so that $h(\mathbf{0})=1$. 
Note that the conclusions in our Main Theorem are invariant under linear shifts.
Moreover,
since $\hat{h}(\alpha)=\chi_{a}(\alpha)\hat{f}(\alpha)$ for every $\alpha \in \F_2^n$, we have
$\hat{h}(\mathbf{0})=1/2^{k-1}$ and $|\hat{h}(\alpha)|=|\hat{f}(\alpha)|$ for any other nonzero $\alpha$.
Therefore, the assumptions apply to $h$ as well.}, from now on, we may assume $f(\mathbf{0})=1$.
We begin with calculating the cardinalities of sets $A$ and $B$.

\begin{claim}\label{claim:sizes}
For any $k\geq 1$ and $n\geq k$, we have $|A|=3t$ and $|B|=t$, where $t=2^{k-1}-1$.
\end{claim}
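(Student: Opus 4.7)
The plan is to pull two linear equations in $|A|$ and $|B|$ out of the hypothesis and solve. Because the hypothesis fixes every non-zero value of $\hat f$, both Parseval's identity and Fourier inversion at $\mathbf{0}$ reduce to explicit expressions in $|A|$, $|B|$ and $k$, with nothing unknown left over.

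First, I would apply Parseval. Since $f$ is Boolean, $f^2 = f$, hence
\[
\sum_{\alpha\in\F_2^n}\hat f(\alpha)^2 \;=\; \Exp_x f(x) \;=\; \hat f(\mathbf{0}) \;=\; \frac{1}{2^{k-1}}.
\]
Splitting the sum according to whether $\alpha=\mathbf{0}$, $\alpha\in A$, or $\alpha\in B$ gives $\tfrac{1}{4^{k-1}} + \tfrac{|A|+|B|}{4^{k}} = \tfrac{1}{2^{k-1}}$, and clearing denominators produces the first equation $|A|+|B| = 2^{k+1}-4 = 4t$.

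Next I would invoke the WLOG assumption $f(\mathbf{0})=1$ supplied in the displayed footnote and evaluate the Fourier inversion formula $f(x)=\sum_\alpha \hat f(\alpha)\chi_\alpha(x)$ at $x=\mathbf{0}$. Since $\chi_\alpha(\mathbf{0})=1$ for every $\alpha$, this reads
\[
1 \;=\; f(\mathbf{0}) \;=\; \frac{1}{2^{k-1}} + \frac{|A|}{2^{k}} - \frac{|B|}{2^{k}},
\]
which rearranges to the second equation $|A|-|B| = 2(2^{k-1}-1) = 2t$. Adding and subtracting the two equations gives $|A|=3t$ and $|B|=t$, as required.

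I do not anticipate a real obstacle: both equations are one-line consequences of identities already recorded in Section~\ref{sec:prelim}. The only point to be careful about is that the Fourier-inversion equation would degenerate to the trivial $0=0$ if we took $f(\mathbf{0})=0$, so the preceding linear-shift reduction (ensuring $f(\mathbf{0})=1$) is genuinely needed rather than cosmetic.
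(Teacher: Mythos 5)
Your proof is correct and is essentially the paper's own argument: Parseval's identity gives $|A|+|B|=2^{k+1}-4=4t$ and Fourier inversion at $\mathbf{0}$ under the normalization $f(\mathbf{0})=1$ gives $|A|-|B|=2^k-2=2t$, whence $|A|=3t$ and $|B|=t$, exactly as in the paper. The only slip is in your closing remark: if one took $f(\mathbf{0})=0$ the inversion equation would not degenerate to $0=0$ but would instead read $|A|-|B|=-2$, so the linear-shift normalization is indeed needed for the claimed counts, just not for the reason you state.
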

\begin{proof}
Since $\hat{f}(\mathbf{0})=1/2^{k-1}$, by Parseval's identity
$\hat{f}(\mathbf{0})=1/2^{k-1}=\sum_{\alpha \in \F_2^n}\hat{f}^2(\alpha)$, we have $|A|+|B|=2^{k+1}-4$.

On the other hand,
\[
1=f(\mathbf{0})=\sum_{\alpha \in \F_2^n}\hat{f}(\alpha)\chi_{\alpha}(\mathbf{0})
=\frac{1}{2^{k-1}}+\sum_{\alpha \in A}\frac{1}{2^k} + \sum_{\beta \in B}(-\frac{1}{2^k}),
\]
which gives $|A|-|B|=2^k-2$. Therefore we have $|A|=3(2^{k-1}-1)$ and $|B|=2^{k-1}-1$.
\end{proof}

For convenience, we let $A=\{\alpha_1, \ldots, \alpha_{3t}\}$ and $B=\{\beta_1, \ldots, \beta_t\}$ in the following.

%%%%%%%%%%%%%%%%%%%%%%%%%%%%%%%%%%%%%%%%%%%%%%%%%%%%%%%%%%%%%%%%%%%%%%%%%%%%%%%%%%%%%%%%%%%%%%%%%%%%%%%%%%%%%%%%%%%%%$$
%%%%%%%%%%%%%%%%%%%%%%%%%%%%%%%%%%%%%%%%%%%%%%%%%%%%%%%%%%%%%%%%%%%%%%%%%%%%%%%%%%%%%%%%%%%%%%%%%%%%%%%%%%%%%%%%%%%%%$$
\subsection{Some additive properties of sets \texorpdfstring{$A$}{} and \texorpdfstring{$B$}{}}
We now study the additive properties of sets $A$ and $B$.
Note that the Fourier coefficients of $f$ are non-zero only at $\mathbf{0}$ and in sets $A$ and $B$;
moreover, the Fourier coefficients are uniform for points in $A$ or $B$. Therefore,
by Proposition~\ref{prop:Boolean}, we expect that there are nice additive structures within $A$ and $B$.

\begin{definition}
We call $(\alpha, \beta, \alpha +\beta)$ a \emph{triangle} if $\alpha$, $\beta$ and $\alpha +\beta$
are all in the support of $\hat{f}$; that is $\alpha, \beta, \alpha +\beta \in A\cup B \cup \{\mathbf{0}\}$.
% For any $\alpha \in \supp{\hat{f}}$, we let $\deg(\alpha)$ be the number of distinct triangles that pass through $\alpha$.
\end{definition}

\begin{lemma}\label{lemma:beta_triangle}
For any $\beta_i \in B$, there are exactly $t$ triangles passing through $\beta_i$;
namely, the $t$ triangles are $(\beta_i, \beta_i, \mathbf{0})$ and 
$\{(\beta_i, \beta_j, \beta_i + \beta_j)\}_{j=1, j\neq i}^{t}$.
In the language of set addition, we have $2B \subseteq A \cup \{\mathbf{0}\}$.
\end{lemma}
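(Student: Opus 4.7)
The plan is to apply the Boolean-characterization identity (Proposition~\ref{prop:Boolean}) at $\alpha = \beta_i$ and use the very restricted set of values that $\hat{f}$ can take to convert it into a tight arithmetic constraint, which can then be squeezed against elementary non-negativity and size bounds. Starting from
\[
-\frac{1}{2^k}=\hat{f}(\beta_i)=\sum_{\gamma\in\F_2^n}\hat{f}(\gamma)\,\hat{f}(\beta_i+\gamma),
\]
I would split the sum according to whether $\gamma$ and $\beta_i+\gamma$ lie in $\{\mathbf{0}\}$, $A$, or $B$. The two terms with $\gamma=\mathbf{0}$ and $\gamma=\beta_i$ are each $-1/2^{2k-1}$ by the known values $\hat{f}(\mathbf{0})=1/2^{k-1}$ and $\hat{f}(\beta_i)=-1/2^k$. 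Writing
\[
p=|A\cap(\beta_i+A)|,\qquad q=|A\cap(\beta_i+B)|,\qquad r=|(B\setminus\{\beta_i\})\cap(\beta_i+B)|,
\]
the symmetry $\gamma\leftrightarrow\beta_i+\gamma$ makes the $A$-to-$B$ and $B$-to-$A$ pieces both contribute $-q/2^{2k}$. Multiplying through by $2^{2k}$ yields the clean identity $p + r - 2q = 4 - 2^k$.

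Next I would pin down $q$ by a matching pair of inequalities. Because $\beta_i+\beta_i=\mathbf{0}\in\beta_i+B$ while $\mathbf{0}\notin A$, one has the sharp upper bound $q\le|B|-1=t-1$. Conversely, $p,r\ge 0$ together with the identity forces $q\ge(2^k-4)/2=t-1$. Hence $q=t-1$ and, automatically, $p=r=0$.

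Finally, I would translate the three equalities into the triangle enumeration. The equality $q=t-1$ says every $\beta_j\in B\setminus\{\beta_i\}$ satisfies $\beta_i+\beta_j\in A$, producing $t-1$ triangles $(\beta_i,\beta_j,\beta_i+\beta_j)$; together with the degenerate triangle $(\beta_i,\beta_i,\mathbf{0})$ this accounts for exactly $t$ triangles through $\beta_i$. The vanishing of $p$ rules out triangles with both other vertices in $A$, and the vanishing of $r$ rules out triangles with both other vertices in $B\setminus\{\beta_i\}$, yielding the claimed exact count as well as the set-addition statement $2B\subseteq A\cup\{\mathbf{0}\}$. The only subtlety I foresee is spotting that the correct upper bound on $q$ is $t-1$ rather than the naive $t$---it is precisely this one-off that matches the lower bound extracted from the identity, pinching $q$ to $t-1$ and collapsing both $p$ and $r$ to zero in a single stroke.
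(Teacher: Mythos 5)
Your proof is correct and follows essentially the same route as the paper's: apply the Boolean-characterization identity at $\beta_i$, exploit the fact that $\hat f$ takes only the three nonzero values $1/2^{k-1},\pm 1/2^k$, and pin the counting down by matching the upper bound $q\le t-1$ against the lower bound forced by the identity. The paper phrases this as a single inequality with an equality analysis rather than naming $p,q,r$, but the underlying squeeze is identical.
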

\begin{proof}
For any $\beta_i \in B$, by Proposition~\ref{prop:Boolean},
\begin{align*}
\hat{f}(\beta_i) 
&= -\frac{1}{2^k} =\sum_{\gamma \in \F_2^n}\hat{f}(\gamma)\hat{f}(\beta_i + \gamma) \\
&=2\hat{f}(\mathbf{0})\hat{f}(\beta_i)+ \sum_{\substack{j=1 \\ j\neq i}}^{t}\hat{f}(\beta_j)\hat{f}(\beta_i+\beta_j) +
  \sum_{\ell=1}^{3t}\hat{f}(\alpha_\ell)\hat{f}(\beta_i+\alpha_\ell) \\
&\geq 2\cdot \frac{1}{2^{k-1}}\cdot (-\frac{1}{2^k})+2(t-1)(-\frac{1}{2^k})(\frac{1}{2^k}) \text{ \footnotemark} \\
&=-\frac{1}{2^k},
\end{align*}
\footnotetext{There is a factor $2$ in the second summation because if $\beta_i+\beta_j \in A$, then
the triangle $(\beta_i, \beta_j, \beta_i+\beta_j)$ appears twice in the summation
$\sum_{\gamma \in \F_2^n}\hat{f}(\gamma)\hat{f}(\beta_i + \gamma)$:
once with $\gamma=\beta_j$ and the other with $\gamma=\beta_i+\beta_j$.}where the inequality in the second last line 
becomes equality if and only if the following two conditions hold:
1) for every $1\leq j \leq t$, $j\neq i$, $\beta_i+\beta_j \in A$; and
2) there is no triangle of the form $(\beta_i, \alpha_j, \alpha_\ell)$. Hence the lemma follows.
\end{proof}

\begin{corollary}\label{cor:sum-free}
The set $B$ is a sum-free set; namely, for any three elements $\beta_1, \beta_2, \beta_3 \in B$, $\beta_1+\beta_2\neq \beta_3$.
Equivalently, $2B \cap B = \emptyset$.
\end{corollary}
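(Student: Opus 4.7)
The plan is to derive this corollary as an immediate consequence of Lemma~\ref{lemma:beta_triangle}, which has already established that $2B \subseteq A \cup \{\mathbf{0}\}$. All that remains is to verify that $B$ is disjoint from $A \cup \{\mathbf{0}\}$, so that no element of $B$ can be written as a sum of two elements of $B$.

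First I would observe that $A \cap B = \emptyset$ directly from the definitions: any $\alpha \in A$ satisfies $\hat{f}(\alpha) = 1/2^k > 0$, while any $\beta \in B$ satisfies $\hat{f}(\beta) = -1/2^k < 0$, so no vector can lie in both sets. Next, I would check that $\mathbf{0} \notin B$: by hypothesis $\hat{f}(\mathbf{0}) = 1/2^{k-1}$, which is different from $-1/2^k$ for every $k \geq 1$, so $\mathbf{0}$ is also excluded from $B$. Putting these two disjointness facts together, $B$ is disjoint from $A \cup \{\mathbf{0}\}$.

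Finally, combining this with the containment $2B \subseteq A \cup \{\mathbf{0}\}$ from Lemma~\ref{lemma:beta_triangle} immediately yields $2B \cap B = \emptyset$. Unpacking what this means: for any $\beta_1, \beta_2 \in B$, the sum $\beta_1 + \beta_2$ lies in $A \cup \{\mathbf{0}\}$ and hence cannot equal any $\beta_3 \in B$, which is precisely the sum-free property claimed. There is no real obstacle here; the statement is essentially a bookkeeping consequence of the preceding lemma once one notes the trivial disjointness of the level sets of $\hat{f}$.
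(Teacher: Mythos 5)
Your proof is correct and takes essentially the same route as the paper, which also deduces the corollary directly from Lemma~\ref{lemma:beta_triangle} together with the disjointness of $A$ and $B$; you merely spell out the additional (trivial) check that $\mathbf{0}\notin B$.
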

\begin{proof}
This follows directly from Lemma~\ref{lemma:beta_triangle} and the fact sets $A$ and $B$ are disjoint.
\end{proof}

\begin{corollary}\label{cor:2B-3B}
We have $2B \cap 3B = \emptyset$.
\end{corollary}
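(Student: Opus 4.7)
The plan is to proceed by contradiction: assume some $x \in 2B \cap 3B$, and show that the two representations of $x$ force a triangle configuration explicitly forbidden by Lemma~\ref{lemma:beta_triangle}. Writing $x = \beta_1 + \beta_2 = \beta_3 + \beta_4 + \beta_5$ with all $\beta_i \in B$ and rearranging yields
\[ \beta_5 \;=\; (\beta_1 + \beta_2) \;+\; (\beta_3 + \beta_4). \]
The two ingredients I would extract from Lemma~\ref{lemma:beta_triangle} are: first, the doubling inclusion $2B \subseteq A \cup \{\mathbf{0}\}$, so that both $u := \beta_1 + \beta_2$ and $v := \beta_3 + \beta_4$ lie in $A \cup \{\mathbf{0}\}$; and second, the implicit fact that the $t$ triangles listed in the Lemma are \emph{all} triangles through any fixed $\beta \in B$, which in particular rules out any triangle of the form $(\beta, \alpha, \alpha')$ with $\beta \in B$ and $\alpha, \alpha' \in A$.

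Once those two facts are in hand, I would run a short case analysis on $u$ and $v$. If either equals $\mathbf{0}$, then $\beta_5$ coincides with the other, which lies in $A \cup \{\mathbf{0}\}$; this contradicts $\beta_5 \in B$ via $A \cap B = \emptyset$ and $\mathbf{0} \notin B$. Otherwise $u, v \in A$. If $u = v$ then $\beta_5 = u + v = \mathbf{0}$, again contradicting $\mathbf{0} \notin B$. The remaining case is $u \neq v$ both in $A$, where $\beta_5 = u + v$ makes $(\beta_5, u, v)$ a triangle through $\beta_5 \in B$ with its other two vertices in $A$, precisely the configuration prohibited by Lemma~\ref{lemma:beta_triangle}. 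Every case produces a contradiction, so $2B \cap 3B = \emptyset$.

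The main obstacle, I expect, is spotting that Lemma~\ref{lemma:beta_triangle} yields two essentially independent structural constraints rather than just the highlighted doubling inclusion. The no-mixed-triangle statement $B \cap (A + A) = \emptyset$ has to be extracted from the stronger claim that the enumerated $t$ triangles exhaust every triangle through each $\beta_i$; without it, the subcase $u \neq v \in A$ would not obviously close, and one could imagine a $\beta_5 \in B$ legitimately arising as an $A+A$-sum. Pinning down that this cannot happen is what makes the short case analysis work, and the rest of the argument is purely bookkeeping.
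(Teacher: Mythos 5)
Your proof is correct and takes essentially the same route as the paper's: write $\beta_5=(\beta_1+\beta_2)+(\beta_3+\beta_4)$, use $2B\subseteq A\cup\{\mathbf{0}\}$ from Lemma~\ref{lemma:beta_triangle}, and conclude via the forbidden $(\beta,\alpha,\alpha')$ triangle configuration that the same lemma rules out. The only cosmetic difference is in the degenerate cases: the paper argues the five elements of $B$ must be distinct (else a triangle inside $B$, i.e.\ a violation of Corollary~\ref{cor:sum-free}), while you handle them directly by the cases $u=\mathbf{0}$, $v=\mathbf{0}$, or $u=v$, which works just as well.
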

\begin{proof}
Suppose not, then there exist $\beta_1, \beta_2, \beta_3, \beta_4, \beta_5$ in $B$ such that
$\beta_1 + \beta_2 = \beta_3 + \beta_4 + \beta_5$. These five elements must be distinct as otherwise
they would give rise to a triangle in $B$. But then we have a $(\alpha_1, \alpha_2, \beta_5)$ triangle,
where $\alpha_1:=\beta_1 + \beta_2$ and $\alpha_2:=\beta_3 + \beta_4$, contradicting to Lemma~\ref{lemma:beta_triangle}.
\end{proof}

Let us define
\[
R=2B\cap A = 2B\setminus \{\mathbf{0}\}
\]
and
\[
L=A\setminus R.
\]
Note that $L$ and $R$ are disjoint and $A=L \cup R$.
For any $\rho \in R$, let
\[
N(\rho)=\{\beta_i \in B \mid \text{$\exists \beta_j \in B$ s.t. $\rho=\beta_i + \beta_j$ }\}
\]
be the set of points in $B$ which has a triangle passing through $\rho$. Define a set $\Gamma\subset \F_2^n$ as
\[
\Gamma=\{\gamma = \rho + \beta \mid \rho\in R, \beta \in B \text{ and } \beta \notin N(\rho)\}.
\]
Observe that $\Gamma$ is nonempty: since for every $\rho\in R$, all its $\beta$-neighbors can be paired together,
so $|N(\rho)|$ is an even number, but $|B|=2^{k-1}-1$ is odd.

\begin{claim}\label{claim:Gamma}
We have $\Gamma = 3B \setminus B$.
\end{claim}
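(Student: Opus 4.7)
The plan is to establish the equality $\Gamma = 3B \setminus B$ by a direct double inclusion, unpacking the definitions of $R$, $N(\rho)$, and $\Gamma$ and invoking Lemma~\ref{lemma:beta_triangle}. The single observation that drives both directions is the following reformulation: for $\rho \in R$ and $\beta \in B$, the condition $\beta \notin N(\rho)$ is equivalent to $\rho + \beta \notin B$. Indeed, $\beta \in N(\rho)$ means there exists $\beta' \in B$ with $\rho = \beta + \beta'$, i.e., $\beta' = \rho + \beta \in B$.

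For the forward inclusion $\Gamma \subseteq 3B \setminus B$, take any $\gamma = \rho + \beta$ with $\rho \in R$ and $\beta \in B \setminus N(\rho)$. Since $R \subseteq 2B$, write $\rho = \beta_i + \beta_j$ with $\beta_i, \beta_j \in B$, which gives $\gamma = \beta_i + \beta_j + \beta \in 3B$. That $\gamma \notin B$ is immediate from the reformulation above: if $\gamma \in B$, then $\rho + \beta = \gamma \in B$ would force $\beta \in N(\rho)$, contradicting the hypothesis.

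For the reverse inclusion $3B \setminus B \subseteq \Gamma$, take $\gamma \in 3B \setminus B$ and write $\gamma = \beta_1 + \beta_2 + \beta_3$ with $\beta_i \in B$. First observe $\beta_1 \neq \beta_2$, since otherwise $\gamma = \beta_3 \in B$. Set $\rho := \beta_1 + \beta_2$ and $\beta := \beta_3$. By Lemma~\ref{lemma:beta_triangle}, $2B \subseteq A \cup \{\mathbf{0}\}$, and since $\rho \neq \mathbf{0}$ we get $\rho \in A \cap 2B = R$. Finally, $\rho + \beta = \gamma \notin B$, so by the equivalence noted above $\beta \notin N(\rho)$, exhibiting $\gamma$ as an element of $\Gamma$.

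There is no real obstacle here; the claim is essentially a definition chase, and the only substantive input is the fact $2B \subseteq A \cup \{\mathbf{0}\}$ from Lemma~\ref{lemma:beta_triangle}, which guarantees that the sum of two distinct elements of $B$ lands in $R$.
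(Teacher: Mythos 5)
Your proof is correct and takes essentially the same route as the paper: a double inclusion by definition chasing, with $2B \subseteq A \cup \{\mathbf{0}\}$ from Lemma~\ref{lemma:beta_triangle} as the only substantive input. Your upfront equivalence $\beta \notin N(\rho) \Leftrightarrow \rho+\beta \notin B$ merely packages cleanly what the paper argues inline in each direction.
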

\begin{proof}
On one hand, by the definition of set $\Gamma$, $\Gamma \subseteq 3B$; since $R$ and $B$ are disjoint and $\mathbf{0}\notin R$,
we have $\Gamma \cap B =\emptyset$, and hence $\Gamma \subseteq 3B \setminus B$.
On the other hand, let $\gamma$ be any element in $3B$; that is $\gamma=\beta_1+\beta_2+\beta_3$,
where $\beta_1,\beta_2,\beta_3\in B$.
When will $\gamma$ actually be in $B$?
This happens only if any two of these three elements are identical, 
then $\gamma=\beta_i$ for some $i \in \{1,2,3\}$,
thus $\gamma\in B$.
Moreover, assume that these three elements are distinct and suppose $\gamma \in B$, i.e. $\gamma=\beta_j$ for some $j>3$.
Let $\rho:=\beta_1+\beta_2$, then $\rho=\beta_3+\gamma=\beta_3+\beta_j$;
that is $\gamma=\rho+\beta_3$ and $\beta_3 \in N(\rho)$.
Therefore, if $\gamma \in 3B \setminus B$, then we must have $\beta_3 \notin N(\rho)$
and consequently $\gamma \in \Gamma$. It follows that $3B \setminus B \subseteq \Gamma$.
This completes the proof of the claim.
\end{proof}

It is easy to see that $\Gamma$ is disjoint from the Fourier support of $f$.
\begin{claim}\label{claim:Gamma_fourier_support}
For every element $\gamma \in \Gamma$, we have $\hat{f}(\gamma)=0$.
\end{claim}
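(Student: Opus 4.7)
The plan is to show that $\gamma \notin \mathrm{supp}(\hat f) = A \cup B \cup \{\mathbf{0}\}$, which immediately gives $\hat f(\gamma)=0$ by the hypothesis that every Fourier coefficient of $f$ lies in $\{0, \pm 1/2^k\}$ except $\hat f(\mathbf{0})=1/2^{k-1}$. So the whole task reduces to ruling out the three possibilities $\gamma=\mathbf{0}$, $\gamma\in B$, and $\gamma\in A$.

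First I would unpack the definition of $\Gamma$: by construction we may write $\gamma = \rho + \beta$ with $\rho \in R$, $\beta \in B$, and $\beta \notin N(\rho)$; and since $R = 2B \setminus \{\mathbf{0}\}$ by Lemma~\ref{lemma:beta_triangle}, we may further write $\rho = \beta_i + \beta_j$ for some distinct $\beta_i, \beta_j \in B$. The case $\gamma = \mathbf{0}$ is trivial to exclude: it would force $\rho = \beta$, but $\rho \in A$ and $\beta \in B$ contradict $A \cap B = \emptyset$. The case $\gamma \in B$ is also immediate from Claim~\ref{claim:Gamma}, which identifies $\Gamma$ with $3B \setminus B$.

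The only nontrivial case, and the main (still quite mild) obstacle, is to exclude $\gamma \in A$. The idea is to leverage the exact counting in Lemma~\ref{lemma:beta_triangle}: suppose for contradiction that $\gamma \in A$. Then $(\beta, \rho, \gamma)$ is a triangle (since $\beta + \rho = \gamma$) whose pattern is $(B, A, A)$. But Lemma~\ref{lemma:beta_triangle} asserts that the only triangles through $\beta$ are $(\beta, \beta, \mathbf{0})$ and the $t-1$ triangles of the form $(\beta, \beta', \beta + \beta')$ with $\beta' \in B \setminus \{\beta\}$ and $\beta + \beta' \in A$; in particular no triangle through $\beta$ has the pattern $(B, A, A)$. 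This contradiction finishes the proof.

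Thus all three possibilities are eliminated and $\hat f(\gamma) = 0$. The argument is essentially a one-line application of the ``no $(B,A,A)$ triangle'' consequence of Lemma~\ref{lemma:beta_triangle}, combined with the structural description $\Gamma = 3B \setminus B$ from Claim~\ref{claim:Gamma}; no further computation with Proposition~\ref{prop:Boolean} should be needed at this stage.
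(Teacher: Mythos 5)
Your proof is correct, and while it uses the same ingredients as the paper's (the identification $\Gamma=3B\setminus B$ from Claim~\ref{claim:Gamma} and the triangle classification in Lemma~\ref{lemma:beta_triangle}), it is organized more uniformly. The paper splits $A=L\cup R$ and handles the two pieces by different means: it rules out $\gamma\in R\cup\{\mathbf{0}\}=2B$ by combining $\Gamma\subseteq 3B$ with Corollary~\ref{cor:2B-3B} (the disjointness $2B\cap 3B=\emptyset$), and only then, for the remaining possibility $\gamma\in L$, does it invoke the ``no $(B,A,A)$-triangle'' consequence of Lemma~\ref{lemma:beta_triangle}. You notice that this last argument in fact disposes of the whole case $\gamma\in A$ at once: writing $\gamma=\rho+\beta$ with $\rho\in R\subseteq A$ and $\beta\in B$, the triple $(\beta,\rho,\gamma)$ would be a triangle through $\beta$ with pattern $(B,A,A)$, which Lemma~\ref{lemma:beta_triangle} forbids irrespective of whether $\gamma$ lies in $L$ or in $R$. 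This lets you avoid both the $L/R$ decomposition and Corollary~\ref{cor:2B-3B} altogether; the two residual cases $\gamma=\mathbf{0}$ (which would force $\rho=\beta$, impossible since $A\cap B=\emptyset$) and $\gamma\in B$ (excluded by $\Gamma=3B\setminus B$) are dispatched directly, as you do. The net gain is a slightly shorter proof depending on one fewer auxiliary lemma, at no cost in clarity.
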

\begin{proof}
Recall that, the support of $\hat{f}$ is $A \cup B \cup\{\mathbf{0}\}$.
Suppose $\hat{f}(\gamma) \neq0$, that is $\gamma \in \supp{\hat{f}}$.
Since $A=L \cup R$, from Claim~\ref{claim:Gamma},
we know that $\Gamma \cap B =\emptyset$; from Claim~\ref{claim:Gamma} and Corollary~\ref{cor:2B-3B},
we know that $\Gamma \cap 2B = \Gamma \cap (R \cup \{\mathbf{0}\}) =\emptyset$.
So there is only one possibility left, which is $\gamma \in L$.
However, if this were the case, because $\gamma=\rho+\beta$ with $\rho\in R$, 
it would give rise to a $(\gamma, \rho, \beta)$-triangle
with $\gamma, \rho \in A$, contradicting Lemma~\ref{lemma:beta_triangle}, 
so $\gamma$ is not in $L$, hence $\hat{f}(\gamma)=0$.
\end{proof}

%%%%%%%%%%%%%%%%%%%%%%%%%%%%%%%%%%%%%%%%%%%%%%%%%%%%%%%%%%%%%%%%%%%%%%%%%%%%%%%%%%%%%%%%%%%%%%%%%%%%%%%%%%%%%%%%%%%%%$$
%%%%%%%%%%%%%%%%%%%%%%%%%%%%%%%%%%%%%%%%%%%%%%%%%%%%%%%%%%%%%%%%%%%%%%%%%%%%%%%%%%%%%%%%%%%%%%%%%%%%%%%%%%%%%%%%%%%%%$$

\subsection{Even-Zohar's tight bound on \texorpdfstring{$F(K)$}{}}
Let $G$ be an abelian group and $A\subset G$ be a subset. The fundamental Freiman theorem~\cite{Fre73} in additive combinatorics states that
if $G$ is $\Z$ and $|A+A|\leq K|A|$ for some constant $K$, then there exist functions $d(K)$ and $\ell(K)$ such that
$A$ is contained in a $d(K)$-dimensional arithmetic progression of length at most $\ell(K)|A|$.
The ratio $\sigma[A]:=|A+A|/|A|$ is commonly known as the \emph{doubling constant} of set $A$.
Hence Freiman theorem asserts that if a set of integers has small doubling constant, then the set is well-structured.
Ruzsa~\cite{Ruz99} established an analog of Freiman's theorem for finite abelian groups with torsion $r$.
Specifically, he proved that any subset $A$ with doubling constant $K$ is contained in a subgroup of $G$ of size at most $K^2 r^{K^4}|A|$.
The question for groups $\F_2^n$ was first studied by Green and Ruzsa~\cite{GR06} and the bound was later improved by Sanders~\cite{San08}.
An asymptotically tight bound was first proved in~\cite{GT09b} and \cite{Kon08}.

For a subset $A\subset \F_2^n$, let $\langle A \rangle$ denote the \emph{affine span} of $A$;
namely, the smallest affine subspace that contains $A$.
If $\sigma[A]=K$, then let $F(K):=\max_{A: \sigma[A]=K}|\langle A \rangle|/|A|$ denote the maximum relative size of the affine span of $A$.
Even-Zohar~\cite{Eve12} gave the tight bound of $F(K)$ for all values of doubling constant $K$.

\begin{theorem}[\cite{Eve12}, Theorem 2]\label{theorem:Even-Zohar}
Let $A$ be a subset of $\F_2^n$ with doubling constant $K$, i.e. $|2A|/|A|\leq K$.
If $s$ is the unique positive integer satisfying the inequalities
\begin{equation}\label{eqn:K}
\frac{\binom{s}{2}+s+1}{s+1} \leq K < \frac{\binom{s+1}{2}+s+2}{s+2},
\end{equation}
then $|\langle A \rangle|/|A| \leq F(K)$, where $F(K)$ is given by
\begin{align}\label{eqn:F(K)}
F(K)=
\begin{cases}
\frac{2^s}{\binom{s}{2}+s+1} \cdot K 	& \text{ if $\frac{\binom{s}{2}+s+1}{s+1} \leq K < \frac{s^2+s+1}{2s}$,} \\
\frac{2^{s+1}}{s^2+s+1} \cdot K			& \text{ if $ \frac{s^2+s+1}{2s} \leq K < \frac{\binom{s+1}{2}+s+2}{s+2}$.}
\end{cases}
\end{align}
\end{theorem}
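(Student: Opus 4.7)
The plan is to prove this tight bound by first identifying extremal configurations and then showing they are optimal via induction. By translating so that $\mathbf{0}\in A$ and restricting the ambient to the linear span, I may assume $\langle A\rangle = \F_2^d$ and reduce the statement to establishing a sharp lower bound on $|A+A|$ in terms of $|A|$ and $d$; equivalently, minimizing $|A+A|/|A|$ subject to $|\langle A\rangle|/|A| = 2^d/|A|$ being fixed.

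First I would identify the extremal configurations. The natural candidates are \emph{coset unions}: sets $A = S + C$ where $C$ is a subgroup and $S$ is a small ``skeleton,'' so that $|A|=|S|\cdot|C|$ and $|A+A|=|S+S|\cdot|C|$, making the doubling depend only on $S$. For the first regime $\frac{\binom{s}{2}+s+1}{s+1}\le K < \frac{s^2+s+1}{2s}$, the saturating example is $S=\{\mathbf{0},v_1,\ldots,v_s\}$ with $v_i$ independent modulo $C$, and $|C|=2^{d-s}$; then $|S+S|=1+s+\binom{s}{2}$, yielding $K=\frac{\binom{s}{2}+s+1}{s+1}$ and span ratio $\frac{2^s}{s+1}$, exactly matching $F(K)$ at the left endpoint. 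For the second regime, one enlarges the skeleton by a carefully chosen additional element (e.g.\ $v_1+\cdots+v_s$ or a fresh $v_{s+1}$ with $|C|=2^{d-s-1}$), which shifts $A$ onto the second piecewise-linear branch of $F(K)$.

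The main step is proving optimality via induction on $d=\dim\langle A\rangle$, combined with a case split: (i) if $A$ is contained in a small number of cosets of a nontrivial subgroup $H\le\F_2^d$, then passing to the quotient $\F_2^d/H$ the image $A/H$ has doubling at most $K$ and strictly smaller span dimension, so the inductive hypothesis applies and lifts back; (ii) otherwise $A$ is ``coset-free,'' in which case I would select a basis $\{v_1,\ldots,v_d\}\subseteq A$, observe that the skeleton $\{\mathbf{0},v_1,\ldots,v_d\}$ already forces $|A+A|\ge 1+d+\binom{d}{2}$, and then carefully account for how each additional element $a\in A\setminus\{\mathbf{0},v_1,\ldots,v_d\}$ must contribute fresh sums $a+v_i$ to $A+A$ that cannot all coincide with existing sums.

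I expect the main obstacle to be matching constants tightly at the transition $K=\frac{s^2+s+1}{2s}$ between the two pieces of $F(K)$, and in particular ruling out exotic non-coset-union extremals in the coset-free case. Tight exact constants here typically require either a compression/symmetrization argument that reduces an arbitrary coset-free $A$ to a skeleton-type configuration, or a combinatorial LP relaxation that shows the integer-programming minimum of $|A+A|$ over all $A$ with prescribed $|A|$ and $d$ is attained at the union-of-cosets boundary. Bridging these two regimes smoothly, while preserving the piecewise-linear dependence on $K$, is the most delicate part of the argument.
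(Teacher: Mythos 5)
This theorem is imported directly from Even-Zohar's paper \cite{Eve12}; the present paper gives no proof of it, so there is no ``paper's own proof'' to compare against. Your sketch therefore has to stand alone as a plausible route to Even-Zohar's result.

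Your outline correctly identifies the extremal configurations as skeleton-plus-coset sets $A = S + C$, with $C$ a subgroup and $S$ a set of coset representatives, and these do sit exactly at the kink points of $F(K)$. But the substance of the theorem is the \emph{optimality} argument, and there the sketch has real gaps. In your case~(i), passing to the quotient $\F_2^d/H$ and invoking the inductive hypothesis does not lift back for free: the image $A/H$ may have doubling constant far \emph{below} $K$, and the inequality $|\langle A\rangle|/|A| \leq |H| \cdot |\langle A/H\rangle|/|A/H|$ requires that every fiber of $A \to A/H$ be a full $H$-coset --- precisely the structure you have not established. In your case~(ii), asserting that each extra element ``must contribute fresh sums that cannot all coincide with existing sums'' is exactly the delicate counting one has to prove; without an explicit injection, or a compression step forcing an arbitrary coset-free $A$ into skeleton form, the claim is a restatement of the goal rather than an argument. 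Finally, you correctly flag the transition at $K = \frac{s^2+s+1}{2s}$ as hard, but the piecewise-linear shape of $F(K)$ arises from a concrete inequality of the form $|\langle A\rangle| \leq c_s\,|2A| - c'_s\,|A|$ proved separately in each regime, and neither constant appears in your sketch. As written this is a plan rather than a proof; to close the gaps one would consult Even-Zohar's actual argument, which also reduces dimension but via a sharply chosen subgroup paired with an exact accounting lemma, not a generic quotient.
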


%%%%%%%%%%%%%%%%%%%%%%%%%%%%%%%%%%%%%%%%%%%%%%%%%%%%%%%%%%%%%%%%%%%%%%%%%%%%%%%%%%%%%%%%%%%%%%%%%%%%%%%%%%%%%%%%%%%%%$$
%%%%%%%%%%%%%%%%%%%%%%%%%%%%%%%%%%%%%%%%%%%%%%%%%%%%%%%%%%%%%%%%%%%%%%%%%%%%%%%%%%%%%%%%%%%%%%%%%%%%%%%%%%%%%%%%%%%%%$$
\subsection{Characterizing \texorpdfstring{$2B$}{} and \texorpdfstring{$\vspan{B}$}{}}\label{sec:property}
% \subsection{The case of $k\geq 5$}
% We will study the case when $k\geq 5$ in this section.

Note that the doubling constant of set $B$ satisfies that
\begin{align} \label{eqn:sigma_B}
\sigma[B]=\frac{|R|+1}{|B|} \leq \frac{|A|+1}{|B|}=3+\frac{1}{t},
\end{align}
and recall that $t=2^{k-1}-1$. Therefore, when $k\geq 5$, $K=\sigma[B]\leq \frac{46}{15}$. Plugging this $K$ into
\eqref{eqn:K} gives that $s \leq 5$ and consequently $F(K)\leq 2K<7$. That is, we have
$|\langle B \rangle|< 7|B|$.

The most important step in our proof is establishing the following lemma, 
which almost completely characterizes the structure of set $B$.
\begin{lemma}\label{lemma:2B}
If $k\geq 5$, then $|\vspan{B}|=2^k=2(|B|+1)$ and $2B$ is a subspace of dimension $k-1$.
% consequently, $B$ is a contained in an affine subspace of dimension $k-1$ (that is, $B$ is an affine subspace with only one point missing).
\end{lemma}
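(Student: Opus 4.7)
The plan is to sandwich $|R|$ between matching bounds equal to $t=|B|$, which pins $|2B|=t+1=2^{k-1}$; Laba's lemma (Lemma~\ref{lemma:Laba}) then extracts the subgroup structure and sum-freeness upgrades this to the statement about $\vspan{B}$.

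The lower bound $|R|\ge t$ is almost free. By Lemma~\ref{lemma:beta_triangle} the neighbor set $N(\rho)$ of each $\rho\in R$ pairs up as $\{\beta,\beta+\rho\}$, so $|N(\rho)|$ is even and at most $t$; since $t=2^{k-1}-1$ is odd, actually $|N(\rho)|\le t-1$, and combining this with the identity $\sum_{\rho\in R}|N(\rho)|=t(t-1)$ (double-counting unordered pairs in $B$) yields $|R|\ge t$.

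The matching upper bound is the heart. Theorem~\ref{theorem:Even-Zohar}, applied with $\sigma[B]\le 3+1/t<31/10$ at $s=5$, forces $|\langle B\rangle|$ to be a power of two in $\{2^{k-1},2^k,2^{k+1}\}$. Meanwhile, $\Gamma=3B\setminus B$ obeys the weighted identity $\sum_{\gamma\in\Gamma}|M(\gamma)|=t(|R|-t+1)$ with $M(\gamma):=\{\rho\in R:\gamma+\rho\in B\}$ and $|M(\gamma)|\le t$, giving $|\Gamma|\ge|R|-t+1$; on the other hand, Claim~\ref{claim:Gamma_fourier_support} together with $\Gamma\subseteq\vspan{B}$ gives $|\Gamma|\le|\vspan{B}|-1-t-|R|-|L\cap\vspan{B}|$. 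In the case $|\langle B\rangle|=2^{k-1}$ these two inequalities sandwich directly to $|R|\le t$, matching the lower bound. The main obstacle I anticipate is ruling out the larger cases $|\langle B\rangle|\in\{2^k,2^{k+1}\}$: for those I would argue $L\cap\vspan{B}=\emptyset$ by examining how any $\alpha\in L\cap\vspan{B}$ decomposes as a subset sum of $B$-elements -- short decompositions are ruled out by $A\cap B=\emptyset$, $L\cap R=\emptyset$, and Claim~\ref{claim:Gamma_fourier_support}, while longer ones should be controlled using the universal triangle identity $T_{AAA}=5t(t-1)/6$ (obtained by Fourier-expanding $\mathbb E[f^3]=\mathbb E[f]=1/2^{k-1}$) combined with the forbidden-triangle constraint $(B+A)\cap A=\emptyset$ implicit in the proof of Lemma~\ref{lemma:beta_triangle}. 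This casework is the delicate technical crux.

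With $|R|=t$ in hand, $|2B|/|B|=1+1/t<3/2$ for $k\ge 5$, so Laba's lemma declares $2B$ a subgroup, necessarily of dimension $k-1$. Finally, $B\cap 2B=\emptyset$ (Corollary~\ref{cor:sum-free}) forces $b_0\notin 2B$ for any $b_0\in B$, so $\vspan{B}=2B\sqcup(b_0+2B)$ is the disjoint union of two cosets of $2B$, of total size $2\cdot 2^{k-1}=2^k$, as claimed.
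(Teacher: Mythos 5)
Your endgame is fine: once $|R|=t$ is known, Laba's lemma (Lemma~\ref{lemma:Laba}) plus sum-freeness does give that $2B$ is a subgroup of size $2^{k-1}$ and $\vspan{B}=2B\sqcup(b_0+2B)$ has size $2^k$, and your lower bound $|R|\ge t$ and the counting identity $\sum_{\gamma\in\Gamma}|M(\gamma)|=t(|R|-t+1)$ both check out. But the proof has a genuine gap exactly where the lemma is hard: you only establish the matching upper bound $|R|\le t$ when $|\langle B\rangle|=2^{k-1}$, and you explicitly leave the cases $|\langle B\rangle|\in\{2^k,2^{k+1}\}$ as an ``anticipated obstacle'' with a speculative sketch. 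Those cases are not a technical afterthought; they are the substance of the lemma (note that $|\langle B\rangle|=2^k$ with $\langle B\rangle$ a subspace is even consistent with the desired conclusion, so it cannot simply be ``ruled out''--- it must be analyzed). Moreover, the route you sketch does not close as stated: even if you could prove $L\cap\vspan{B}=\emptyset$, your two inequalities combine to $|R|\le(|\vspan{B}|-2)/2$, which for $|\vspan{B}|=2^{k+1}$ only yields $|R|\le 2t+1$, far short of $|R|\le t$; and the identity $T_{AAA}=5t(t-1)/6$ is correct but you give no argument for how it controls the long decompositions you invoke.

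For comparison, the paper's proof spends all its effort precisely on these large-span configurations, and it needs more than set-theoretic counting: it applies the Boolean convolution identity (Proposition~\ref{prop:Boolean}) at carefully chosen points to force contradictions. When $\langle B\rangle$ is a coset $a+H$ with $|H|\in\{2^k,2^{k+1}\}$, it first shows $L\neq\emptyset$, then that $4B=H$ and $L\subseteq H$, and finally evaluates $\hat f(\delta)=0$ at a point $\delta\in H\setminus(2B\cup L)$ to get a sign contradiction; when $\langle B\rangle$ is a subspace of size $4(|B|+1)$, an averaging argument over the degrees $d_i$ of points of $R$ produces $|\Gamma|\ge 0.6t$ and hence $|\langle B\rangle|>4.1t$, a contradiction, after which a second application of Theorem~\ref{theorem:Even-Zohar} pins $|\langle B\rangle|=2(|B|+1)$. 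To complete your proposal you would need arguments of this kind (or a genuinely new one) for the two larger span sizes; as written, the crux is missing.
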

We prove Lemma~\ref{lemma:2B} in the following two subsections, distinguishing between
the case when $\langle B \rangle$ is an affine subspace
and the case when $\langle B \rangle$ is a subspace.

%%%%%%%%%%%%%%%%%%%%%%%%%%%%%%%%%%%%%%%%%%%%%%%%%%%%%%%%%%%%%%%%%%%%%%%%%%%%%%%%%%%%%%%%%%%%%%%%%%%%%%%%%%%%%%%%%%%%%$$
\subsubsection{If \texorpdfstring{$\langle B \rangle$}{} is an affine subspace}

In the case that $\langle B \rangle$ is an affine subspace, let $\langle B \rangle=a+H$ be the affine subspace,
where $H$ is a subspace of $\F_2^n$, $a\in H^\perp$ and $a\neq \mathbf{0}$.
Therefore $\vspan{B}=H \cup (a+H)$. Note that we now have
$2\ell B\subseteq H$ and $(2\ell -1)B\subseteq a+H$ for every integer $\ell \geq 1$.
Moreover, $|\vspan{B}|=2|\langle B \rangle|<14|B|$. Since $\vspan{B}$ is a subspace and $|B|=2^{k-1}-1$,
so there are only three possibilities:
$|\vspan{B}|=8(|B|+1)$, $|\vspan{B}|=4(|B|+1)$ and $|\vspan{B}|=2(|B|+1)$. In the following, we are going to eliminate
the first two possibilities.

% \sidenote{Is there a simple proof?}
\begin{claim}
Set $L$ is nonempty.
\end{claim}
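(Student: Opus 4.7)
The plan is to argue by contradiction. Suppose $L=\emptyset$, so that $A=R=2B\setminus\{\mathbf{0}\}$. Since $B\subseteq a+H$ and characteristic~$2$ gives $(a+H)+(a+H)=2a+H=H$, we conclude $2B\subseteq H$, and hence $A\subseteq H$. Thus the entire non-zero Fourier support of $f$ splits cleanly along the decomposition $\F_2^n=H\,\dot\cup\,(a+H)\,\dot\cup\,\cdots$: the trivial coefficient and all of $A$ sit in $H$, while all of $B$ sits in the coset $a+H$.

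The next step is to produce a character $\chi_v$ that takes value $+1$ on $H$ and $-1$ on $a+H$. Because $\langle B\rangle=a+H$ is genuinely affine (not a subspace), we have $a\notin H=(H^\perp)^\perp$, so the linear functional $v\mapsto\langle v,a\rangle$ is not identically zero on $H^\perp$. Therefore we can pick $v\in H^\perp$ with $\langle v,a\rangle=1$, giving $\chi_v(\alpha)=1$ for every $\alpha\in H$ and $\chi_v(\beta)=-1$ for every $\beta\in a+H$. In particular, $\chi_v\equiv 1$ on $\{\mathbf{0}\}\cup A$ and $\chi_v\equiv -1$ on $B$.

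The contradiction is then a one-line Fourier inversion. Using Claim~\ref{claim:sizes} for $|A|=3t$ and $|B|=t$ with $t=2^{k-1}-1$, and noting that the sign of $\hat{f}$ on $B$ cancels with that of $\chi_v$, we compute
\[
f(v)=\hat{f}(\mathbf{0})+\sum_{\alpha\in A}\hat{f}(\alpha)\chi_v(\alpha)+\sum_{\beta\in B}\hat{f}(\beta)\chi_v(\beta)
=\frac{1}{2^{k-1}}+\frac{3t}{2^k}+\frac{t}{2^k}=2-\frac{1}{2^{k-1}},
\]
which lies strictly between $1$ and $2$ for all $k\geq 2$, contradicting $f(v)\in\{0,1\}$. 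Hence $L\neq\emptyset$. The only delicate step is the second one, ensuring the separating character $v$ actually exists under the convention $a\in H^\perp\setminus H$; after that, the argument is a short computation whose success depends precisely on the combinatorial identity $|A|=3|B|$ forced by $\hat{f}(\mathbf{0})=1/2^{k-1}$.
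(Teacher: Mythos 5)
Your proof is correct, but it takes a genuinely different route from the paper. The paper argues via the auxiliary set $\Gamma=3B\setminus B$: it has already shown $\Gamma\neq\emptyset$ (a parity argument: $|N(\rho)|$ is even while $|B|$ is odd) and that $\hat{f}(\gamma)=0$ on $\Gamma$; assuming $L=\emptyset$ forces $2B=A\cup\{\mathbf{0}\}\subseteq H$ and $\Gamma\subseteq a+H$, and then the Boolean convolution identity of Proposition~\ref{prop:Boolean} applied at $\gamma\in\Gamma$ has a negative term (from the pair $(\rho,\beta)$ defining $\gamma$) but no positive terms (since $2A,2B\subseteq H$ miss $\Gamma$), a contradiction. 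You instead evaluate $f$ directly at a separating point: under $L=\emptyset$ the positive spectrum $\{\mathbf{0}\}\cup A$ lies in $H$ and the negative spectrum $B$ lies in $a+H$; since this is the case where $\langle B\rangle$ is genuinely affine, $a\notin H=(H^{\perp})^{\perp}$, so a $v\in H^{\perp}$ with $\langle v,a\rangle=1$ exists, and Fourier inversion gives $f(v)=\tfrac{1}{2^{k-1}}+\tfrac{|A|+|B|}{2^{k}}=2-\tfrac{1}{2^{k-1}}\notin\{0,1\}$. Your identification of the existence of $v$ as the one delicate point is right, and it does hold here precisely because this subsection assumes $\langle B\rangle$ is not a linear subspace. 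What each buys: the paper's argument reuses machinery ($\Gamma$, $N(\rho)$, Corollary~\ref{cor:2B-3B}) that recurs in the surrounding case analysis, while yours is more elementary and self-contained — it needs only the containments and Claim~\ref{claim:sizes} (in fact only $|A|+|B|=2^{k+1}-4$ from Parseval), bypassing the convolution identity and $\Gamma$ entirely.
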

\begin{proof}
Suppose not, then $2B=A \cup \{\mathbf{0}\}\subset H$. 
Recall that by Claim~\ref{claim:Gamma}, $\Gamma=3B\setminus B$,
so $\Gamma \subseteq a+H$ and is disjoint from set $A$. 
It follows that for any $\gamma\in \Gamma$, $\hat{f}(\gamma)=0$ (or directly from Claim~\ref{claim:Gamma_fourier_support}).
However, applying Proposition~\ref{prop:Boolean} to $\hat{f}(\gamma)$, we see that by the definition of set $\Gamma$,
$\gamma=\rho+\beta$ with $\rho\in A$, $\beta\in B$ and $\beta \notin N(\rho)$. Hence there is at least one negative term contribution
on the right-hand side in~\eqref{eqn:Boolean} for $\hat{f}(\gamma)$, 
but since both $2B$ and $2A$ are disjoint from $\Gamma$, there is no positive term
on the right-hand side in~\eqref{eqn:Boolean}, a contradiction.
\end{proof}

We discuss the following two possibilities separately.

\medskip % \vspace{0.2cm}
\paragraph{The case when $|H|=4(|B|+1)$.}
First note that if this were the case, then $F(K)=|\langle B \rangle|/|B|=4(1+\frac{1}{|B|})$.
By Theorem~\ref{theorem:Even-Zohar}, the doubling constant of $B$ is at least $K=|2B|/|B|>2.5$, or $|2B|>2.5|B|$.
Therefore $|L|\leq 0.5|B|$.
On the other hand, $4B=2B+2B$ and $4B\subseteq H$ so $\sigma[2B]=|4B|/|2B|<\frac{4+\frac{1}{16}}{2.5}<7/4$. Then by
Theorem~\ref{theorem:Even-Zohar} again, $|4B|=|\langle 2B \rangle|$, that is $4B=H$.

We next claim that $L \subseteq H$. To see this, let $\lambda$ be an arbitrary element in $L$; applying
Proposition~\ref{prop:Boolean} to $\hat{f}(\lambda)$ gives
\[
\frac{1}{2^k}=\hat{f}(\lambda)=2\hat{f}(\lambda)\hat{f}(\mathbf{0})+
\sum_{\lambda'\in L}\hat{f}(\lambda')\hat{f}(\lambda+\lambda')+\text{other terms}.
\]
The first term and the second summation can contribute at most $\frac{1}{2^{2k}}(2|L|+2)\leq \frac{|B|+2}{2^{2k}}<\frac{1}{2^{k}}$.
Therefore, the ``other terms'' on the right-hand side must contain terms of the form $\hat{f}(\alpha_1)\hat{f}(\alpha_2)$,
where $\alpha_1$ and $\alpha_2$ are two distinct points in $A$ and $\lambda=\alpha_1+\alpha_2$.
That is $\lambda \in 2B+2B$, hence it follows that $L\subseteq 4B=H$.

Let $D:=H\setminus (2B\cup L)$. We have $|D|=4(|B|+1)-3|B|-1=|B|+3>0$. Let $\delta$ be any point in $D$. First,
since $\delta \notin 2B \cup B$, $\hat{f}(\delta)=0$. Second, since $\delta \in H$, there is no negative term
in the right-hand side of $0=\hat{f}(\delta)=\sum_{\gamma \in \F_2^n}\hat{f}(\gamma)\hat{f}(\delta+\gamma)$,
because if $\gamma \in B$, then $\delta+\gamma \in a+H$ but there is no positive Fourier coefficient in $a+H$ (since $L\subset H$).
On the other hand, consider the set $\{\delta+\alpha \mid \alpha \in 2B \cup L\}$. Since $|D|<|H|/2$, this set has non-empty
intersection with $2B\cup L$. Therefore, there are positive terms in $\sum_{\gamma \in \F_2^n}\hat{f}(\gamma)\hat{f}(\delta+\gamma)$,
this contradicts the fact that $\hat{f}(\delta)=0$.

\medskip % \vspace{0.2cm}
\paragraph{The case when $|H|=2(|B|+1)$.}
This case is similar to the previous one. First, if this were the case,
then $F(K)=|\langle B \rangle|/|B|=2(1+\frac{1}{|B|})$. It follows that,
by Theorem~\ref{theorem:Even-Zohar}, the doubling constant of $B$ is at least $K=|2B|/|B|>7/4$, and hence
$|4B|/|2B|\leq |H|/|2B|<3/2$, and by Theorem~\ref{theorem:Even-Zohar} again $4B=H$.
The rest is identical to the case when $|H|=4(|B|+1)$.

\begin{proof}[Proof of Lemma~\ref{lemma:2B} when $\langle B \rangle$ is an affine subspace]
Now that the only possibility left is $|\vspan{B}|=2 \cdot (|B|+1)$, and because $\langle B \rangle$ is an affine subspace,
it follows that $2B \subseteq H$ and hence $|2B|\leq |B|+1$. Applying Laba's lemma, Lemma~\ref{lemma:Laba}, to set $B$
gives that $2B$ is a subspace. Since $|2B|\geq |B|$, it follows that $2B=H$, a dimension $k-1$ subspace.
\end{proof}

%%%%%%%%%%%%%%%%%%%%%%%%%%%%%%%%%%%%%%%%%%%%%%%%%%%%%%%%%%%%%%%%%%%%%%%%%%%%%%%%%%%%%%%%%%%%%%%%%%%%%%%%%%%%%%%%%%%%%$$
\subsubsection{If \texorpdfstring{$\langle B \rangle$}{} is a subspace}

If the affine span $\langle B \rangle$ is a subspace, and since $|\langle B \rangle|<7|B|$, then we either have
$|\langle B \rangle|=4(|B|+1)$ or $|\langle B \rangle|=2(|B|+1)$ (because $B \cap 2B = \emptyset$ and $|2B|\geq |B|$,
 $|\langle B \rangle|\geq 2|B|$). In the following we exclude the first case.

Recall that $R=2B\setminus \{\mathbf{0}\}$ is the set of non-zero points in the Fourier support of $f$ that can be written as a sum of
two $\beta$-points in $B$. Let $R=\{\lambda_1, \ldots, \lambda_m\}$, where $m$ is the cardinality of $R$.

\begin{claim}
If $\langle B \rangle$ is a subspace, then $m \leq 2.5t$.
\end{claim}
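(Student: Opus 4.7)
The plan is to upper bound $m$ by comparing two expressions for the size of the subspace $\langle B \rangle$: a lower bound coming from several pairwise-disjoint sets that must sit inside $\langle B \rangle$, and the structural upper bound $|\langle B \rangle| \le 4(t+1)$ already available in the subspace case.

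First I would record the disjointness. Because $\langle B \rangle$ is assumed to be a subspace, $\mathbf{0} \in \langle B \rangle$ and $3B \subseteq \langle B \rangle$, so in particular $\Gamma \subseteq \langle B \rangle$. Combining Corollary~\ref{cor:sum-free} (which gives $B \cap R = \emptyset$), Corollary~\ref{cor:2B-3B} (which gives $R \cap \Gamma = \emptyset$), and Claim~\ref{claim:Gamma} (which gives $\Gamma \cap B = \emptyset$), the four sets $\{\mathbf{0}\}$, $B$, $R$, $\Gamma$ are pairwise disjoint, so
\[
|\langle B \rangle| \;\ge\; 1 + t + m + |\Gamma|.
\]

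Next I would lower-bound $|\Gamma|$ by a double count. By the very definition of $\Gamma$, every pair $(\rho, \beta) \in R \times B$ with $\beta \notin N(\rho)$ produces an element $\rho + \beta \in \Gamma$. Swapping the order of summation gives $\sum_{\rho \in R}|N(\rho)| = t(t-1)$, since each ordered pair $(\beta_i, \beta_j) \in B \times B$ with $i \ne j$ contributes exactly once (take $\rho := \beta_i + \beta_j \in R$, so that $\beta_i \in N(\rho)$). Hence the number of valid pairs is $mt - t(t-1) = t(m - t + 1)$, and since the summing map $(\rho, \beta) \mapsto \rho + \beta$ is at most $t$-to-one (once $\beta$ is fixed, $\rho$ is forced to equal $\gamma + \beta$), we deduce $|\Gamma| \ge m - t + 1$.

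Combining the two inequalities yields $|\langle B \rangle| \ge 2m + 2$. In the subspace case we have already observed that $|\langle B \rangle| \le 4(t+1) = 4t + 4$, so $m \le 2t + 1$; since $k \ge 5$ forces $t = 2^{k-1} - 1 \ge 15$, this gives $m \le 2t + 1 \le 5t/2$, which is exactly the claim. The main obstacle is the lower bound $|\Gamma| \ge m - t + 1$, because both the subspace hypothesis and the definition of $\Gamma$ must be used there—one has to verify that the double-counted pairs really land inside $\Gamma$ (this is exactly the condition $\beta \notin N(\rho)$) and that the fiber of the summing map never exceeds $|B|=t$. Once this bound is in hand, everything else is simple arithmetic built on structural ingredients already assembled.
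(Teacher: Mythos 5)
Your proof is correct, and it in fact yields the stronger bound $m \le 2t+1$. The raw ingredients are the same ones the paper uses: the count $\sum_{\rho\in R}|N(\rho)|=t(t-1)$ (the paper's $\sum_i d_i$), the observation that $\rho+\beta\in\Gamma$ whenever $\beta\notin N(\rho)$, the pairwise disjointness of $\{\mathbf{0}\},B,R,\Gamma$ inside the subspace $\langle B\rangle$ (via Lemma~\ref{lemma:beta_triangle}, Corollaries~\ref{cor:sum-free} and~\ref{cor:2B-3B}, Claim~\ref{claim:Gamma}), and the bound $|\langle B\rangle|\le 4(t+1)$ already extracted from Theorem~\ref{theorem:Even-Zohar}. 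The execution differs, though: the paper argues by contradiction, assuming $m>2.5t$, using an averaging argument to find one $\rho$ with small $d_i$, deducing $|\Gamma|\ge t-d_i\approx 0.6t$ from that single witness, and then showing $|2B|+|B|+|\Gamma|$ overflows $4(t+1)$; you instead count globally all pairs $(\rho,\beta)$ with $\beta\notin N(\rho)$ — there are $t(m-t+1)$ of them — and divide by the fiber size (at most $t$, since $\beta$ determines $\rho=\gamma+\beta$) of the map $(\rho,\beta)\mapsto\rho+\beta$, getting $|\Gamma|\ge m-t+1$ and hence $|\langle B\rangle|\ge 2m+2$ directly, with no contradiction setup. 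What your route buys: a cleaner direct statement, a sharper bound, and robustness at the borderline case $k=5$ (i.e.\ $t=15$), where the paper's stated constants ($|\langle B\rangle|>4.1t+1$ versus $4(t+1)=64$) do not quite close without additionally invoking integrality (or evenness) of the $d_i$; your global count avoids that delicacy entirely. Two small things you leave implicit and should record for completeness: $\mathbf{0}\notin B$ (since $\hat f(\mathbf{0})=1/2^{k-1}$) and $\mathbf{0}\notin\Gamma$ (since $\mathbf{0}\in 2B$ and $2B\cap 3B=\emptyset$), so that the four sets really are pairwise disjoint; and, in the double count, that each pair $(\rho,\beta_i)$ with $\beta_i\in N(\rho)$ corresponds to exactly one ordered pair $(\beta_i,\beta_j)$, which is what makes the total exactly $t(t-1)$. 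Neither affects the validity of the argument.
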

\begin{proof}
For the sake of contradiction, suppose that $m > 2.5t$.
For every $\lambda_i \in R$, let $d_i$ be the number of $\beta_j$'s that form a triangle with $\lambda_i$.
Then we have $\sum_{i=1}^{m}d_i=t(t-1)$ and $d_i\geq 2$ for every $1\leq i \leq m$.
By a standard averaging argument, there is some $\lambda_i$ with $d_i\leq 0.4t$.
By the definition of set $\Gamma$, it follows that $|\Gamma|\geq t-d_i =0.6t$.
Recall that $\Gamma=3B\setminus B$ so $\Gamma \subset \langle B \rangle=\vspan{B}$, and $\Gamma$ is disjoint from either $2B$ or $B$,
thus $|\langle B \rangle|\geq |2B|+|B|+|\Gamma|>4.1t$, contradicting our assumption that $|\langle B \rangle|=4(|B|+1)$.
\end{proof}

\begin{proof}[Proof of Lemma~\ref{lemma:2B} when $\langle B \rangle$ is a subspace]
Now since $m\leq 2.5t$, the doubling constant of $B$ is at most $|2B|/|B|\leq 2.5+1/|B|<21/8$, then by Theorem~\ref{theorem:Even-Zohar},
$|\langle B \rangle|/|B| <42/11< 4$, therefore we must have $|\langle B \rangle|=2(|B|+1)=2^k$.
Once again, applying Laba's lemma to set $B$ shows that $2B$ is a subspace of dimension $k-1$.
\end{proof}

%%%%%%%%%%%%%%%%%%%%%%%%%%%%%%%%%%%%%%%%%%%%%%%%%%%%%%%%%%%%%%%%%%%%%%%%%%%%%%%%%%%%%%%%%%%%%%%%%%%%%%%%%%%%%%%%%%%%%$$
%%%%%%%%%%%%%%%%%%%%%%%%%%%%%%%%%%%%%%%%%%%%%%%%%%%%%%%%%%%%%%%%%%%%%%%%%%%%%%%%%%%%%%%%%%%%%%%%%%%%%%%%%%%%%%%%%%%%%$$
\subsection{Completing the proof of the Main Lemma}\label{sec:remain}
By Lemma~\ref{lemma:2B}, $2B$ is a dimension $k-1$ subspace; without loss of generality, we may assume that
\begin{equation}\label{eqn:2B}
H=2B=\vspan{e_1, \ldots, e_{k-1}}.
\end{equation}
Since $|\vspan{B}|=2^k=2|2B|$, and $B\cap 2B=\emptyset$, $B$ is an affine shift of $H$ with one point $\delta$ missing.
Since $\delta \notin H$, so without loss of generality, we may assume $e_k$ is the missing point. That is
\begin{align}
 B &= (e_k + \vspan{e_1, \ldots, e_{k-1}}) \setminus \{e_k\} \quad \text{and} \label{eqn:B} \\
 R &= 2B \setminus \{\mathbf{0}\} = \vspan{e_1, \ldots, e_{k-1}} \setminus \{\mathbf{0}\} = e_k+B. \label{eqn:R}
\end{align}

Now by Claim~\ref{claim:Gamma}, we have $\Gamma=\{e_k\}$ and consequently $\hat{f}(e_k)=0$. Our last task is to determine
the structure of set $L$. Recall that $A=R \cup L$ and $|A|=3t$, and because we now have $R=2B\setminus \{\mathbf{0}\}$,
therefore $|L|=2t=2^k-2$.

\begin{claim}\label{claim:L1}
For any $\lambda \in L$, $e_k+\lambda \in L$.
\end{claim}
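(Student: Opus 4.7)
The plan is to apply the Fourier characterization of Boolean functions (Proposition~\ref{prop:Boolean}) to the coefficient $\hat{f}(e_k) = 0$ and extract the claim by a counting argument. Using the structural description of $B$ and $R$ in~\eqref{eqn:B} and~\eqref{eqn:R}, the map $\gamma \mapsto e_k + \gamma$ sends $R = H\setminus\{\mathbf{0}\}$ bijectively onto $B = (e_k + H)\setminus\{e_k\}$, and vice versa. So on the Fourier support of $f$, the involution $\gamma \mapsto e_k + \gamma$ pairs $R$ with $B$ and maps $\{\mathbf{0}\}$ to $\{e_k\}$, on which $\hat{f}$ vanishes; the only remaining possibility is that it maps (a subset of) $L$ to (a subset of) $L$.

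First I would verify that for every $\lambda \in L$, the image $e_k + \lambda$ lies neither in $R$, nor in $B$, nor in $\{\mathbf{0}\}$. Indeed, $e_k + \lambda \in R = H\setminus\{\mathbf{0}\}$ would force $\lambda \in e_k + H \setminus\{e_k\} = B$, contradicting $L \cap B = \emptyset$; $e_k + \lambda \in B$ would force $\lambda \in R$, contradicting $\lambda \in L$; and $e_k + \lambda = \mathbf{0}$ would give $\lambda = e_k$, but $\hat{f}(e_k) = 0$ implies $e_k \notin A \supseteq L$. Consequently, if $\hat{f}(e_k + \lambda) \neq 0$ for some $\lambda \in L$, then $e_k + \lambda \in L$.

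Now I would expand $0 = \hat{f}(e_k) = \sum_{\gamma \in \F_2^n} \hat{f}(\gamma)\hat{f}(e_k + \gamma)$. The contributions split into three pieces:
\begin{align*}
0 \;=\; \sum_{\gamma \in R} \tfrac{1}{2^k}\!\cdot\!\bigl(-\tfrac{1}{2^k}\bigr) \;+\; \sum_{\gamma \in B} \bigl(-\tfrac{1}{2^k}\bigr)\!\cdot\!\tfrac{1}{2^k} \;+\; \sum_{\substack{\gamma \in L \\ e_k + \gamma \in L}} \tfrac{1}{2^k}\!\cdot\!\tfrac{1}{2^k},
\end{align*}
where the pairs $(\mathbf{0}, e_k)$ and $(e_k, \mathbf{0})$ drop out since $\hat{f}(e_k) = 0$, and the previous paragraph rules out any $\gamma \in L$ whose image lands in $R$, $B$, or $\{\mathbf{0}\}$. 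Writing $m := \bigl|\{\lambda \in L : e_k + \lambda \in L\}\bigr|$ and using $|R| = |B| = t = 2^{k-1}-1$, this identity collapses to $-t - t + m = 0$, giving $m = 2t = |L|$.

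Finally, since $m$ counts elements of $L$ whose image under $\lambda \mapsto e_k + \lambda$ also lies in $L$, and $m = |L|$, every element of $L$ must have this property. The main (minor) obstacle in carrying out this plan carefully is checking that the three cases above exhaust all non-vanishing contributions to the convolution sum---in particular that a $\gamma \in L$ cannot map into $R$, $B$, or $\{\mathbf{0}\}$---which is precisely what the first step establishes from the explicit description of $B$ and $R$.
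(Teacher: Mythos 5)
Your proof is correct and takes essentially the same route as the paper: both expand $0=\hat{f}(e_k)=\sum_{\gamma}\hat{f}(\gamma)\hat{f}(e_k+\gamma)$ via Proposition~\ref{prop:Boolean} and exploit that adding $e_k$ pairs $R$ with $B$ (using \eqref{eqn:B} and \eqref{eqn:R}). The only difference is presentational: you rule out $e_k+\lambda\in R\cup B\cup\{\mathbf{0}\}$ up front and then count exactly to get $m=2t=|L|$, whereas the paper bounds the sum above by $0$ and extracts the same conclusion from the equality conditions.
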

\begin{proof}
Applying Proposition~\ref{prop:Boolean} to the Fourier coefficient of $f$ at $e_k$ and noting that
$R=e_k+B$, we have
\begin{align*}
\hat{f}(e_k)
&=0=\sum_{\gamma \in \F_2^n}\hat{f}(\gamma)\hat{f}(e_k +\gamma) \\
&=2\sum_{\rho \in R}\hat{f}(\rho)\hat{f}(e_k +\rho) + \sum_{\lambda \in L}\hat{f}(\lambda)\hat{f}(e_k +\lambda) \\
&\leq 2t\cdot(-\frac{1}{2^{2k}}) + 2t \cdot \frac{1}{2^{2k}} \\
&=0,
\end{align*}
where equality holds in the second last line only if for every $\lambda \in L$, $\hat{f}(e_k+\lambda)=\frac{1}{2^k}$.
That is, $e_k+\lambda \in A (=L\cup R)$. As each element in $R$ has already been taken into account in
the first summation in the second line, therefore we necessarily have $e_k+\lambda \in L$.
\end{proof}

\begin{claim}\label{claim:L2}
For any $\lambda \in L$ and $\rho \in R$, $\hat{f}(\lambda+\rho)=0$.
\end{claim}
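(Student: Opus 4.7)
The plan is to reduce the claim to a fact that essentially falls out of the earlier work, exploiting the coset structure that has already been pinned down. The structural equalities~\eqref{eqn:B} and~\eqref{eqn:R} give $R = H\setminus\{\mathbf{0}\}$ and $B = (e_k + H)\setminus\{e_k\}$, which yields the coset identity $R = e_k + B$. Combined with Claim~\ref{claim:L1}, which says $e_k + \lambda \in L$ for every $\lambda \in L$, this allows me to rewrite
\[
\lambda + \rho \;=\; (\lambda + e_k) + (e_k + \rho) \;=\; \lambda' + \beta,
\]
where $\lambda' := \lambda + e_k \in L$ and $\beta := e_k + \rho \in B$. Thus the claim would follow once I establish the stronger statement that $\hat{f}(\lambda' + \beta) = 0$ for every $\lambda' \in L$ and $\beta \in B$.

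To prove that $\lambda' + \beta$ lies outside the Fourier support $\{\mathbf{0}\} \cup A \cup B$, I would check the three possibilities separately. The case $\lambda' + \beta = \mathbf{0}$ is ruled out since it would force $\lambda' = \beta$, contradicting $L \cap B = \emptyset$. The case $\lambda' + \beta \in B$ would give $\lambda' \in 2B = R \cup \{\mathbf{0}\}$, contradicting $L \cap (R \cup \{\mathbf{0}\}) = \emptyset$ (which follows from the definitions $L = A \setminus R$ and $A \cap \{\mathbf{0}\} = \emptyset$). The case $\lambda' + \beta \in A$ would produce a triangle $(\beta, \lambda', \lambda' + \beta)$ with both $\lambda'$ and $\lambda' + \beta$ in $A$, which is exactly the configuration forbidden by the equality conditions in the proof of Lemma~\ref{lemma:beta_triangle}.

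I do not expect a genuine obstacle here: the whole argument is a short chain of deductions once the coset identity $R = e_k + B$ is extracted from~\eqref{eqn:R} and Claim~\ref{claim:L1} is in hand. The only subtle point worth double-checking is that one really is allowed to invoke the ``no $(\beta, \alpha_j, \alpha_\ell)$ triangle'' condition from Lemma~\ref{lemma:beta_triangle}, but that is precisely the second equality condition derived in its proof, which applies to every $\beta \in B$ because $\hat{f}(\beta) = -1/2^k$ attains the extreme value of the inequality established there.
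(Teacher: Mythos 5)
Your argument is correct, but it is not the route the paper takes. The paper proves Claim~\ref{claim:L2} by one more extremal application of Proposition~\ref{prop:Boolean}, this time at an arbitrary $\rho\in R$: using $R=e_k+B$ and $2B=H$ it evaluates the contributions of $\mathbf{0}$, $B$ and $R\setminus\{\rho\}$ exactly, finds that they already sum to $\frac{1}{2^k}=\hat f(\rho)$, and then observes that the remaining terms $\frac{1}{2^k}\hat f(\lambda+\rho)$, $\lambda\in L$, are nonnegative because $\lambda+\rho\notin B$, hence must all vanish. You instead argue purely combinatorially: writing $\lambda+\rho=(\lambda+e_k)+(e_k+\rho)=\lambda'+\beta$ with $\lambda'\in L$ (by Claim~\ref{claim:L1}) and $\beta\in B$ (by \eqref{eqn:R}), you rule out $\lambda'+\beta=\mathbf{0}$, then $\lambda'+\beta\in B$ (it would force $\lambda'\in 2B=R\cup\{\mathbf{0}\}$, disjoint from $L$), and finally $\lambda'+\beta\in A$ (it would create a triangle through $\beta$ with the other two vertices in $A$, which Lemma~\ref{lemma:beta_triangle} excludes), so $\lambda+\rho$ lies outside the Fourier support altogether. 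Both are valid; your version is shorter and reuses the classification of triangles already paid for in Lemma~\ref{lemma:beta_triangle} (indeed the lemma's statement itself, not only the equality analysis inside its proof, already forbids $(\beta,\alpha_j,\alpha_\ell)$ triangles), while the paper's computation keeps the proof in the same extremal-equality style as Claims~\ref{claim:L1} and~\ref{claim:L3} and does not need Claim~\ref{claim:L1} as an input (there is no circularity in your use of it, since Claim~\ref{claim:L1} precedes and is independent of Claim~\ref{claim:L2}). One cosmetic remark: the statement you reduce to is equivalent to, not stronger than, the claim, since $e_k+L=L$ and $e_k+R=B$.
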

\begin{proof}
Applying Proposition~\ref{prop:Boolean} to $\hat{f}(\rho)$, where $\rho$ is an arbitrary element in $R$, we have
\begin{align*}
\hat{f}(\rho)
&=\frac{1}{2^k}=2\cdot \hat{f}(\mathbf{0})\hat{f}(\rho)+\sum_{\beta \in B}\hat{f}(\beta)\hat{f}(\rho +\beta)
   + \sum_{\rho' \in R, \rho' \neq \rho}\hat{f}(\rho')\hat{f}(\rho +\rho')
   + \sum_{\lambda \in L}\hat{f}(\lambda)\hat{f}(\lambda+\rho) \\
&=2\cdot \frac{2}{2^k}\cdot \frac{1}{2^k} + (t-1)\cdot (-\frac{1}{2^k})\cdot (-\frac{1}{2^k})
   + (t-1)\cdot (\frac{1}{2^k})\cdot (\frac{1}{2^k})+ \sum_{\lambda \in L}\frac{1}{2^k} \cdot \hat{f}(\lambda+\rho)\\
&\geq \frac{1}{2^k}, \quad \quad \text{(as $\lambda +\rho \notin B$, therefore $\hat{f}(\lambda+\rho)\geq 0$)}
\end{align*}
where we have a factor of $(t-1)$ in the second line because $\rho+e_k \in B$ and
equality holds in the last line only if $\hat{f}(\lambda+\rho)=0$ for every $\lambda \in L$ and every $\rho \in R$.
\end{proof}

\begin{claim}\label{claim:L3}
For any $\lambda, \lambda' \in L$, $\lambda + \lambda' \in L$ except that $\lambda + \lambda' = \mathbf{0}\text{ or }e_k$.
\end{claim}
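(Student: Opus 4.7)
The plan is to apply the Boolean identity of Proposition~\ref{prop:Boolean} to $\hat{f}(\lambda)$ for an arbitrary $\lambda \in L$, and to show via a tight counting argument that the sum on the right-hand side saturates: every $\lambda' \in L$ other than $\lambda$ itself and $\lambda + e_k$ must pair with $\lambda$ to give a sum back in $L$. Since Claims~\ref{claim:L1} and~\ref{claim:L2} together with Lemma~\ref{lemma:beta_triangle} kill almost every cross-term, the only remaining flexibility is in whether $\lambda + \lambda' \in L$, and the equation will leave no slack.

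Splitting $\hat{f}(\lambda) = \sum_{\gamma \in \F_2^n} \hat{f}(\gamma)\hat{f}(\lambda+\gamma) = 1/2^k$ according to which of $\{\mathbf{0}\}, B, R, L$ the index $\gamma$ lies in, the terms $\gamma \in \{\mathbf{0}, \lambda\}$ together contribute $2\hat{f}(\mathbf{0})\hat{f}(\lambda) = 4/2^{2k}$. For $\gamma \in R$, Claim~\ref{claim:L2} gives $\hat{f}(\lambda + \gamma) = 0$, so the $R$-block contributes zero. For $\gamma \in B$, I would argue that $\lambda + \gamma$ lies outside the entire Fourier support $\{\mathbf{0}\} \cup A \cup B$: the equality condition in Lemma~\ref{lemma:beta_triangle} (no triangle $(\beta, \alpha, \alpha')$ with $\beta \in B$ and $\alpha, \alpha' \in A$) forbids $\lambda + \gamma \in A$, and $\lambda + \gamma \in B \cup \{\mathbf{0}\}$ would force $\lambda \in R \cup B \cup \{\mathbf{0}\}$, contradicting $\lambda \in L$. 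Hence the $B$-block also contributes zero.

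This leaves only $\gamma \in L \setminus \{\lambda\}$. A symmetric version of the previous paragraph shows $\lambda + \gamma \notin B$ (else $\gamma \in \lambda + B$, which we just showed is disjoint from the Fourier support), and Claim~\ref{claim:L2} rules out $\lambda + \gamma \in R$ (else $\gamma = \lambda + \rho$ for some $\rho \in R$ would force $\hat{f}(\gamma) = 0$). The one remaining exceptional value is $\gamma = \lambda + e_k$, which lies in $L$ by Claim~\ref{claim:L1}; for this $\gamma$, $\lambda + \gamma = e_k$ and $\hat{f}(e_k) = 0$, so the contribution vanishes. For every other $\gamma \in L \setminus \{\lambda, \lambda + e_k\}$, the product $\hat{f}(\gamma)\hat{f}(\lambda + \gamma)$ equals $1/2^{2k}$ if $\lambda + \gamma \in L$ and $0$ otherwise.

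Letting $S_\lambda := \{\gamma \in L \setminus \{\lambda, \lambda + e_k\} : \lambda + \gamma \in L\}$, the Boolean identity reduces to $1/2^k = (4 + |S_\lambda|)/2^{2k}$, yielding $|S_\lambda| = 2^k - 4 = |L| - 2$. Since this matches the size of the set $L \setminus \{\lambda, \lambda + e_k\}$ that $S_\lambda$ sits inside, the two sets coincide, which is exactly the statement of the claim. The only delicate point I anticipate is the bookkeeping for the $B$-block, where one must invoke the sharper content of Lemma~\ref{lemma:beta_triangle} (its equality condition, i.e.\ the absence of $(\beta, \alpha, \alpha')$ triangles) rather than just its statement; everything else is direct from the already-established structure of $A$, $B$, $R$, and $L$.
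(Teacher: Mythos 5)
Your proposal is correct and follows essentially the same route as the paper: apply Proposition~\ref{prop:Boolean} to $\hat{f}(\lambda)$, show the $B$-block and $R$-block vanish (using the structure of triangles through $B$ from Lemma~\ref{lemma:beta_triangle} and Claim~\ref{claim:L2}), and observe that the identity forces every admissible $\lambda'$ to have $\lambda+\lambda'\in L$. The only cosmetic difference is that you pin down each term exactly and solve an equation, whereas the paper presents it as saturating an inequality; you also spell out explicitly (via Claim~\ref{claim:L2}) why $\lambda+\lambda'\notin R$, a small step the paper leaves implicit in its equality-condition remark.
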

\begin{proof}
Applying Proposition~\ref{prop:Boolean} to $\hat{f}(\lambda)$, where $\lambda$ is an arbitrary element in $L$, we have
\begin{align*}
\hat{f}(\lambda)=\frac{1}{2^k}
&=2\cdot \hat{f}(\mathbf{0})\hat{f}(\lambda)+\sum_{\beta \in B}\hat{f}(\beta)\hat{f}(\lambda +\beta)   + \sum_{\rho \in R}\hat{f}(\rho)\hat{f}(\lambda +\rho)  + \sum_{\lambda' \in L, \lambda'+\lambda \notin\{\mathbf{0},e_k\}}\hat{f}(\lambda')\hat{f}(\lambda+\lambda') \\
&=2\cdot \frac{2}{2^k}\cdot \frac{1}{2^k} + 0
   + 0 + \sum_{\lambda' \in L, \lambda'+\lambda \notin\{\mathbf{0},e_k\}}\frac{1}{2^k}\cdot\hat{f}(\lambda+\lambda') \text{ \footnotemark}\\
&\leq \frac{4}{2^{2k}}+(2t-2)\cdot (\frac{1}{2^k})\cdot (\frac{1}{2^k}) \\
&=\frac{1}{2^k},
\end{align*}
\footnotetext{The second term vanishes because the only triangles passing through a point $\beta_i \in B$ are
of the type $(\beta_i, \beta_j, \rho_\ell)$ where $\rho_\ell \in R$;
the third term vanishes because of Claim~\ref{claim:L2}.}
where equality holds in the second last line only if $\lambda+\lambda' \in L$ for every $\lambda' \in L$,
except when $\lambda'$ is equal to $\lambda$ or $\lambda+e_k$.
\end{proof}

Put Claim~\ref{claim:L1}, Claim~\ref{claim:L2} and Claim~\ref{claim:L3} together, and since $|L|=2^k-2$
we conclude that $H':=L \cup \{\mathbf{0}, e_k\}$ is a subspace of dimension $k$.
Moreover, as $\vspan{B}=\vspan{e_1, \ldots, e_k}$ is a subspace of dimension $k$, and
$L\cap \vspan{B}=\emptyset$, we thus have $H' \cap \vspan{B}=\{\mathbf{0},e_k\}$.
Therefore, without loss of generality, we may take $H'=\vspan{e_k, \ldots, e_{2k-1}}$ and consequently finally have
\begin{equation}\label{eqn:L}
L=\vspan{e_k, \ldots, e_{2k-1}} \setminus \{\mathbf{0},e_k\}.
\end{equation}

It is straightforward to check\footnote{The second line in~\eqref{eqn:Fourier_spectrum} corresponds to set $B$,
third line in~\eqref{eqn:Fourier_spectrum} corresponds to set $R$,
and the fourth and fifth lines of~\eqref{eqn:Fourier_spectrum} correspond to set $L$.}
that the Fourier spectrum calculated in Section~\ref{sec:Fourier_spectrum} for
a disjoint union of two dimension $n-k$ affine subspaces
is identical to the Fourier spectrum of $f$, which is completely specified by sets in~\eqref{eqn:B}, \eqref{eqn:R} and \eqref{eqn:L}.
Therefore the proof of the Main Lemma is complete.

%%%%%%%%%%%%%%%%%%%%%%%%%%%%%%%%%%%%%%%%%%%%%%%%%%%%%%%%%%%%%%%%%%%%%%%%%%%%%%%%%%%%%%%%%%%%%%%%%%%%%%%%%%%%%%%%%%%%%$$
%%%%%%%%%%%%%%%%%%%%%%%%%%%%%%%%%%%%%%%%%%%%%%%%%%%%%%%%%%%%%%%%%%%%%%%%%%%%%%%%%%%%%%%%%%%%%%%%%%%%%%%%%%%%%%%%%%%%%$$
%%%%%%%%%%%%%%%%%%%%%%%%%%%%%%%%%%%%%%%%%%%%%%%%%%%%%%%%%%%%%%%%%%%%%%%%%%%%%%%%%%%%%%%%%%%%%%%%%%%%%%%%%%%%%%%%%%%%%$$
\section{Dealing with small values of \texorpdfstring{$k$}{}}\label{sec:small_values}

When $k=2$ or $k=3$, note that since Claim~\ref{claim:sizes} holds for every $k \geq 2$,
this will enable us to prove the same results as Main Lemma by slightly different arguments.
That is, when $k=2$ or $k=3$, support of $f$ is also a disjoint union of two dimension $n-k$ affine subspaces.
However, when $k=4$ one can not prove the same characterization as Main Lemma. 
In fact, there are two possibilities: 
one is that $f$ is still the indicator function of two disjoint dimension $n-4$ affine subspaces; 
the other is that support of $f$ are \emph{four} disjoint $n-5$ affine subspaces. 
Furthermore, we show that this is the only counterexample to Main Lemma for all $k$.
Now we give the precise statements for small values of $k$ and their proofs.

\begin{lemma}\label{lemma:small_values}
Let $2\leq k \leq 4$ and $n\geq k$ be integers.
Let $f: \F_2^n \to \{0,1\}$ be a Boolean function such that $\hat{f}(\mathbf{0})=1/2^{k-1}$ and any other Fourier coefficients are either zero
or equal to $\pm \frac{1}{2^k}$.
If $k=2$ or $k=3$, then $f$ is the indicator function of a disjoint union of two dimension $n-k$ affine subspaces;
If $k=4$, then $f$ is either the indicator function of a disjoint union of two dimension $n-k$ affine subspaces,
or the indicator function of a disjoint union of four dimension $n-k-1$ affine subspaces.
\end{lemma}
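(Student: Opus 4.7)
The proof treats $k = 2, 3, 4$ separately. In all three cases, Claim~\ref{claim:sizes} and every result of Section~\ref{sec:main_lemma} from Lemma~\ref{lemma:beta_triangle} through Claim~\ref{claim:Gamma_fourier_support} go through verbatim: $|A| = 3t$, $|B| = t$ with $t = 2^{k-1}-1$, $B$ is sum-free, $2B \cap 3B = \emptyset$, and $\Gamma = 3B \setminus B$ is disjoint from the Fourier support of $f$. The one step in the proof of the Main Lemma that genuinely needs $k \geq 5$ is the invocation of Even-Zohar's Theorem~\ref{theorem:Even-Zohar} in Section~\ref{sec:property} to pin down $|\vspan B|$, and the plan is to replace that step by hand in each small case.

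For $k = 2$ we have $|B| = 1$ and $|A| = 3$, so a direct Fourier computation suffices. Writing $B = \{\beta\}$ and $A = \{\alpha_1, \alpha_2, \alpha_3\}$, applying Proposition~\ref{prop:Boolean} to $\hat f(\beta)$ first shows that each $\alpha_i + \beta$ has $\hat f = 0$, so in particular $\beta$ equals no pairwise sum $\alpha_i + \alpha_j$. Applying Proposition~\ref{prop:Boolean} next to each $\hat f(\alpha_i)$ forces $\hat f(\alpha_i + \alpha_j) = 0$ for $i \neq j$, which together with the previous step makes $\alpha_1, \alpha_2, \alpha_3$ linearly independent. Finally, applying~\eqref{eqn:Boolean} to the zero coefficient $\hat f(\alpha_1 + \alpha_2)$ gives one more equation that pins down $\beta = \alpha_1 + \alpha_2 + \alpha_3$. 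This matches the Fourier spectrum of two disjoint $(n-2)$-dimensional affine subspaces computed in Appendix~\ref{sec:Fourier_spectrum}.

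For $k = 3$ we have $|B| = 3$. Because $B$ is sum-free (Corollary~\ref{cor:sum-free}), the only possible $\F_2$-linear relations among the $\beta_i$ would be $\beta_i = \beta_j$ or $\beta_i + \beta_j = \beta_k$, both of which contradict distinctness or sum-freeness; hence $B$ is automatically linearly independent. Therefore $|\vspan B| = 8 = 2(|B|+1)$, the three pairwise sums are distinct so $|2B| = 4$, and a direct check shows $2B$ is a $2$-dimensional subspace. From this point the argument of Section~\ref{sec:remain} (Claims~\ref{claim:L1}--\ref{claim:L3}) applies unchanged and yields the two-subspace conclusion.

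The main difficulty is $k = 4$, where $|B| = 7$, $\sigma[B] \leq 22/7$, and Even-Zohar only yields the weak bound $|\vspan B| \leq 64$. Here the plan is a finer analysis refining Section~\ref{sec:property}. First, I use the degree sequence on $R$: the degrees $d_\rho := |N(\rho)|$ satisfy $d_\rho \geq 2$ (and are even) with $\sum_{\rho \in R} d_\rho = t(t-1) = 42$, so any $\rho$ with $d_\rho$ small contributes many elements to $\Gamma$; combining an averaging argument on these degrees with the fact that $\Gamma$ is disjoint from $B \cup 2B \cup \{\mathbf 0\}$ inside $\vspan B$ (Claims~\ref{claim:Gamma} and~\ref{claim:Gamma_fourier_support}) rules out $|\vspan B| \in \{32, 64\}$. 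This leaves two possibilities. If $|\vspan B| = 8$, then $2B$ is a $3$-dimensional subspace and the argument of Section~\ref{sec:remain} runs unchanged, producing two disjoint $(n-4)$-dimensional affine subspaces. If $|\vspan B| = 16$, the Main Lemma's template breaks: $2B$ need not be a subspace of the expected dimension $k-1$, and instead one must carry out an analogous but more intricate coset decomposition of $\vspan B$, inside which $B \cup \{\mathbf 0\}$, $R$, and $L$ each distribute in a controlled way, followed by a careful reapplication of Proposition~\ref{prop:Boolean} to the coefficients in $L$ to identify four parallel $(n-5)$-dimensional affine subspaces whose disjoint union realizes the Fourier spectrum of $f$. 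The main obstacle throughout is precisely this $|\vspan B| = 16$ case: both ruling out the larger values and identifying the four-subspace geometry force one to push the combinatorics harder, since the clean ``$2B$ is a subspace of dimension $k-1$'' skeleton of the Main Lemma is no longer available when $k = 4$.
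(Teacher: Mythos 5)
Your treatment of $k = 2$ and $k = 3$ is sound and essentially matches the paper's (for $k=2$ the paper locates a single zero coefficient $\beta+\alpha_1$ and derives $\beta = \alpha_1+\alpha_2+\alpha_3$ from that; you derive the same relation through the full set of vanishing pairwise sums). But the $k = 4$ section contains a real error. You assert that a degree-counting argument on $R$ ``rules out $|\vspan B| \in \{32, 64\}$,'' leaving $|\vspan B| \in \{8, 16\}$, with the four-subspace exceptional case living at $|\vspan B| = 16$. This is wrong on all three counts. First, $|\vspan B| = 8$ is impossible outright: $B$ and $2B$ are disjoint, $|B| = 7$, and $B$ is not a coset (its size is not a power of $2$), so $|2B| \geq 8$ and hence $|\vspan B| \geq 15$, forcing $|\vspan B| \geq 16$. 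Second, $|\vspan B| = 64$ \emph{cannot} be ruled out --- Construction~\ref{construction:counter} realizes it with $B = \{e_1,\dots,e_6,\sum_{i=1}^6 e_i\}$, and in that configuration every $\rho \in R$ has degree exactly $2$ and $\Gamma$ consists of all weight-$3$ and weight-$4$ vectors, $35$ in total; your averaging bound $|\vspan B| \geq |B| + |2B| + |\Gamma|$ is comfortably satisfied by $64$ and yields no contradiction. Third, $|\vspan B| = 16$ is precisely the \emph{regular} case: Laba's lemma then gives $2B$ as a subspace of dimension $k-1 = 3$ and Section~\ref{sec:remain} recovers the two-subspace conclusion.

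The actual dichotomy, which the paper establishes, is between $|\vspan B| \leq 32$ (where the Lemma~\ref{lemma:2B} template applies and $f$ is two disjoint $(n-4)$-dimensional affine subspaces) and $|\vspan B| = 64$, which is the genuine exception. In that exceptional case the paper pins down the structure directly: it forces $|2B| = 22$, hence $A = 2B\setminus\{\mathbf 0\}$ and every $\alpha\in A$ has a unique representation as a pairwise sum from $B$; writing $B = \{e_1,\dots,e_6,\beta\}$ it rules out $\beta = e_7$, $\beta$ of weight $3$ (double representations), weight $4$ (via $2B\cap 3B = \emptyset$), and weight $5$ (via a sign count on $\hat f(e_1+e_2+e_3)$), leaving only $\beta = \sum_{i=1}^6 e_i$, i.e.\ Construction~\ref{construction:counter}, whose support is a disjoint union of four affine subspaces of dimension $n-5$. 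Your proposal would need to replace the hand-waved ``coset decomposition'' at what you call $|\vspan B| = 16$ with this concrete case analysis at $|\vspan B| = 64$, and drop the incorrect claim that $64$ can be excluded.
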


\subsection{Proof of the case \texorpdfstring{$k = 2$}{Lg}}
In this case, $|A|=3$ and $|B|=1$. For convenience, suppose that
$\hat{f}(\mathbf{0})=\dfrac{1}{2}$, $\hat f(\beta)=-\dfrac{1}{4}$ and $\hat f(\alpha_1)=\hat f(\alpha_2)=\hat f(\alpha_3)=\dfrac{1}{4}$,
where $\beta, \alpha_1, \alpha_2, \alpha_3$ are four distinct non-zero vectors.

We claim that there exists an $\alpha_i$,  $1\leq i \leq 3$, such that $\hat{f}(\beta+\alpha_i)=0$. 
To see this, suppose $\hat{f}(\beta+\alpha_i)\neq 0$
for every $1\leq i \leq 3$. Because the four vectors are distinct, $\beta+\alpha_i \neq \mathbf{0}$;
furthermore, since $\alpha_i \neq \mathbf{0}$, so $\beta+\alpha_i \neq \beta$. It follows that
$\beta+\{\alpha_1, \alpha_2, \alpha_3\} = \{\alpha_1, \alpha_2, \alpha_3\}$; that is, adding $\beta$ to $A$ permutes the three elements in the set.
But now adding these three elements together gives $3\beta_1+\sum{\alpha_i}=\sum{\alpha_i}$,
a contradiction since $\beta_1 \neq \mathbf{0}$.

Without loss of generality, assume $\hat{f}(\beta+\alpha_1)=0$ and denote $\beta+\alpha_1$ by $\gamma$.
Now applying Proposition~\ref{prop:Boolean} to $\gamma$ gives:
\begin{align*}
\hat f(\gamma) 
& = 0  = \sum\limits_{\alpha} \hat f(\alpha) \hat f(\alpha+\gamma) \\
& = 2\cdot\hat f(\beta_1) \hat f(\alpha_1) + \hat f(\alpha_2) \hat f(\alpha_2+\gamma) +   \hat f(\alpha_3) \hat f(\alpha_3+\gamma)  \\
& = 2\cdot (-\dfrac{1}{4})\cdot \dfrac{1}{4} + \hat f(\alpha_2) \hat f(\alpha_2+\gamma) +   \hat f(\alpha_3) \hat f(\alpha_3+\gamma) \\
& \leq 0,
\end{align*}
where equality holds in the last line only if $\gamma = \alpha_2 + \alpha_3$
so that
\[
\hat{f}(\alpha_2) \hat{f}(\alpha_2+\gamma) = \hat{f}(\alpha_3) \hat{f}(\alpha_3+\gamma) = \hat{f}(\alpha_2) \hat{f}(\alpha_3)
= \dfrac{1}{4} \cdot \dfrac{1}{4}.
\]
After taking an invertible linear transformation if necessary,
we may take $\alpha_1=e_1, \beta=e_1+e_2, \alpha_2=e_3$ and $\alpha_3=e_2+e_3$, 
then it is easy to verify that this is identical to the Fourier spectrum in~\eqref{eqn:Fourier_spectrum} for the case of $k=2$.
% That is, the Fourier support of $f$ is completely contained in
% $\{\mathbf{0}, \beta,\ \alpha_1, \gamma\}$ and
% $\{\mathbf{0}, \alpha_2, \alpha_3, \gamma\}$,
% two subspaces of dimension $k=2$ with intersection $\{\mathbf{0}, \gamma\}$, where $\gamma$ is not in the Fourier support of $f$.
% This is identical to~\eqref{eqn:Fourier_spectrum} for the case of $k=2$.

\subsection{Proof of the case \texorpdfstring{$k = 3$}{Lg}}
In this case, $|A|=9$ and $|B|=3$. Denote set $B$ by $\{\beta_1,\beta_2,\beta_3\}$.
Then by Corollary~\ref{cor:sum-free}, $\beta_1+\beta_2+\beta_3 \neq 0$, therefore
$R=\{\beta_1+\beta_2,\beta_1+\beta_3,\beta_2+\beta_3\}$, and $\Gamma=\{\beta_1+\beta_2+\beta_3\}$.
Hence Lemma~\ref{lemma:2B} is established and the rest of the proof is identical to that of the Main Lemma in Section~\ref{sec:remain}
for the general $k\geq 5$ case.

\subsection{Proof of the case \texorpdfstring{$k = 4$}{Lg}}
First of all, it is easy to see that when $k=4$, 
the indicator function of a disjoint union of $2$ affine subspaces
of dimension $n-k=n-4$ is still a Boolean function with desired Fourier spectrum, 
for every $n\geq 4$. 
Next we construct another Boolean function, which demonstrates that
Main Lemma is no longer valid for $k=4$.

\begin{construction}\label{construction:counter}
Let $G=\F_2^6$ with $e_1,\cdots,e_6$ as the standard basis and let $A, B \subset G$ be two disjoint subsets given as follows:
\begin{itemize}
\item $B=\{e_i \mid 1\leq i\leq6\}\cup\{\sum_{i=1}^{6}{e_i}\}$;
\item $A=\{e_i+e_j \mid 1\leq i < j \leq6 \} \cup \{\sum_{i\in S}{e_i} \mid S\subset [6], |S|=5\}$.
\end{itemize}
\end{construction}
Clearly $A=2B\setminus \{\mathbf{0}\}$, 
$|B|=2^{4-1}-1=7$ and $|A|=\binom{7}{2}=3|B|$, which satisfy the size requirements for $A$ and $B$ for $k=4$. 
To see that sets $A$ and $B$ in Construction~\ref{construction:counter} satisfy all the additive properties imposed by
Proposition~\ref{prop:Boolean}, one can explicitly compute a ``core'' function $f_{CE}:\F_2^{6}\to \R$
with $A \cup B \cup \{\mathbf{0}\}$ being its Fourier support to verify that $f$ is indeed 
a \emph{Boolean} function and 
$\supp{f_{CE}}= 
\{\mathbf{0}\} 
\cup 
\{\sum_{i\in S}{e_i} \mid S\subset [6], |S|=5\} 
\cup\{\sum_{i=1}^{6}{e_i}\}$.
That is, $f$ is equal to $1$ on vectors of weights $0$, $5$ and $6$, and is equal to $0$ on all other vectors.
Note that $\supp{f_{CE}}$ consists of $8$ distinct vectors and is a disjoint union of four affine subspaces of dimension $n-4-1=1$ each.
Moreover, it can be checked that $\supp{f_{CE}}$ is not the union of any two disjoint affine subspaces of dimension $2$.

Our next claim shows that, up to an invertible linear transformation, 
Construction~\ref{construction:counter} is essentially the only counterexample to the Main Lemma.
\begin{claim}
When $k=4$, either $f$ is the indicator function of a disjoint union of two affine subspaces of dimension $n-k$, or
the Fourier spectrum of $f$ is given by Construction~\ref{construction:counter} under some invertible linear transformation,
and consequently $f$ is the indicator function of a disjoint union of four affine subspaces of dimension $n-k-1$.
\end{claim}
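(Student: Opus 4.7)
The plan is to follow the blueprint of the Main Lemma proof, tracking where the Even-Zohar bound weakens for $k=4$ and handling the additional cases that arise. With $k=4$ we have $t=7$, $|A|=21$, $|B|=7$, and the doubling constant satisfies $\sigma[B] \leq 3 + 1/t = 22/7$. Applying Theorem~\ref{theorem:Even-Zohar} with $K=22/7$ (which falls in the $s=6$ range of \eqref{eqn:K}) gives $|\langle B \rangle|/|B| \leq F(22/7) = 64/7$, so $|\langle B \rangle| \leq 64$. Since $\vspan{B}$ is a subspace containing the disjoint union $B \cup 2B$ of size at least $14$, this forces $|\vspan{B}| \in \{16, 32, 64, 128\}$, where the value $128$ arises only when $\langle B \rangle$ is an affine but not linear subspace of size $64$.

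When $|\vspan{B}| = 16 = 2(|B|+1)$, the hypothesis of Lemma~\ref{lemma:2B} holds and the argument of Section~\ref{sec:remain} carries over verbatim, yielding the first alternative: $f$ is the indicator of a disjoint union of two $(n-4)$-dim affine subspaces. For the intermediate cases $|\vspan{B}| \in \{32, 128\}$, I would apply Proposition~\ref{prop:Boolean} at elements $\gamma \in \vspan{B} \setminus (B \cup 2B)$ where $\hat{f}(\gamma) = 0$, and carefully count the positive and negative contributions from triangles involving $B$, $R$, and $L$, using the exact counts $|B|=7$, $|A|=21$, and $|L| = 22 - |2B|$, to derive arithmetic inconsistencies in the spirit of the elimination steps in Section~\ref{sec:property} for $k \geq 5$.

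The case $|\vspan{B}| = 64$ with $\langle B \rangle$ linear is the counterexample case: both Even-Zohar's bound and the doubling-constant bound are simultaneously tight, forcing $|2B| = 22$, hence $A = 2B \setminus \{\mathbf{0}\}$ and $L = \emptyset$. This extremality should pin down $B$, up to an invertible linear transformation, to the simplex-like set $\{e_1, \ldots, e_6, \sum_{i=1}^6 e_i\} \subset \F_2^6$. The identification can be established either by invoking an extremal characterization of Theorem~\ref{theorem:Even-Zohar} at $s=6$, or directly by combining $|2B| = 22$, $\dim \vspan{B} = 6$, and Proposition~\ref{prop:Boolean} applied to elements of $2B$. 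Once $B$ is determined, $A = 2B \setminus \{\mathbf{0}\}$ follows, matching Construction~\ref{construction:counter}; Fourier inversion then yields $\supp{f}$ as the 8 specific vectors already verified in the excerpt to partition into four $(n-5)$-dim affine subspaces.

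The main obstacle will be the extremality analysis in the $|\vspan{B}| = 64$ case together with the exclusion of the intermediate span sizes $32$ and $128$. Since the paper does not cite any extremality result for Theorem~\ref{theorem:Even-Zohar}, identifying $B$ with the specific simplex-like configuration likely requires a self-contained combinatorial argument leveraging the simultaneous tightness of Even-Zohar's and the doubling-constant bounds together with Proposition~\ref{prop:Boolean}, and the slack for deriving contradictions in the intermediate cases is substantially smaller than in the $k \geq 5$ Main Lemma proof.
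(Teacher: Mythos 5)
Your setup matches the paper's in spirit (Even-Zohar at $K=22/7$ giving $|\langle B\rangle|\leq 64$, the regular two-subspace outcome when the span is small, and the counterexample arising only in the extremal configuration), but the two steps that constitute the actual content of the claim are left as plans rather than proofs, and one of your proposed tools would not work. First, the elimination of the non-extremal configurations (your ``$|\vspan{B}|\in\{32,128\}$'' cases, plus the sub-case $|\vspan{B}|=64$ with $\langle B\rangle$ an affine coset of size $32$, which your enumeration never addresses) is only gestured at (``carefully count the positive and negative contributions''). The paper handles this concretely: it splits on $|2B|\leq 21$ versus $|2B|=22$ rather than on $|\vspan{B}|$; in the first case $K\leq 3$ gives $|\langle B\rangle|<7|B|$ and the Lemma~\ref{lemma:2B} machinery yields the two-subspace configuration, while in the second case $A=2B\setminus\{\mathbf{0}\}$ is forced, spans of size $<64$ are dispatched by the same machinery, and the dimension-$7$ situation (your ``$128$'' case, i.e.\ $\beta=e_7$) is killed by the explicit computation $\hat f(e_1+e_2+e_3)=-6/2^{2k}\neq 0$ via Proposition~\ref{prop:Boolean}.

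Second, and more importantly, the identification of $B$ with the simplex $\{e_1,\ldots,e_6,\sum_{i=1}^6 e_i\}$ cannot be obtained from ``an extremal characterization of Theorem~\ref{theorem:Even-Zohar}'': no such characterization is invoked in the paper, and additive extremality alone does not single out this set. For instance, $B'=\{e_1,\ldots,e_6,\,e_1+e_2+e_3+e_4\}$ also has $|B'|=7$, $|2B'|=22$, and affine span of size $64$, so it is just as extremal; it is excluded only by the Boolean constraints, namely Corollary~\ref{cor:2B-3B} ($2B\cap 3B=\emptyset$, since $e_2+e_3+e_4$ would lie in both). This is exactly how the paper pins down the seventh element $\beta$: after normalizing $B=\{e_1,\ldots,e_6,\beta\}$, weight-$2$ and weight-$3$ candidates are excluded by sum-freeness and the unique-representation count $|A|=\binom{|B|}{2}$, weight-$4$ by Corollary~\ref{cor:2B-3B}, and weight-$5$ by another sign computation of $\hat f(e_1+e_2+e_3)$ via Proposition~\ref{prop:Boolean}, applied at a point \emph{outside} the Fourier support (not at elements of $2B$ as you suggest). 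These Fourier-vanishing arguments at specific off-support points are the essential missing ingredient in your proposal; without them the claim that extremality ``should pin down $B$'' is unsubstantiated, so the argument as written has a genuine gap.
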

\begin{proof}
When $k=4$, we have $|B|=2^{4-1}-1=7$.
By inequality~\eqref{eqn:sigma_B}, $\sigma[B]=|2B|/|B|\leq 22/7$. But if $|2B|\leq 21$, 
then plugging $K=\sigma[B]\leq 3$ into \eqref{eqn:K} gives that $s \leq 5$ and consequently $F(K)\leq 2K<7$. 
That is, we would have $|\langle B \rangle|< 7|B|=49$. Then following the same argument, we would be able to establish
Lemma~\ref{lemma:2B} for the case $k=4$ as well, 
i.e. to have $|\vspan{B}|=2^k=2(|B|+1)$ and $2B$ is a subspace of dimension $k-1$,
thereby recovering the regular configuration of $f$ being the indicator function of two disjoint affine subspaces of dimension $n-k$.

Therefore, from now on, we assume that $|2B|=22$. 
On the other hand, $|A|=3|B|=21$; combining this with Lemma~\ref{lemma:beta_triangle} 
(i.e. $2B \subseteq A \cup \{\mathbf{0}\}$), we must have $A=2B\setminus\{\mathbf{0}\}$.
By the upper bound on $|\langle B \rangle|$ given in Theorem ~\ref{theorem:Even-Zohar},
we have $|\langle B \rangle| \leq 2^6=64$. 
But if $|\langle B \rangle|<64$ (hence $|\langle B \rangle|=32$ or $|\langle B \rangle|=16$),  
then the proof of Lemma~\ref{lemma:2B} would follow again.

Hence, the counter-example is possible only when the dimension of $\vspan{B}$ is at least $6$. 
Without loss of generality, we may assume $B=\{e_i \mid 1\leq i \leq 6\}\cup\{\beta\}$. 
We will determine vector $\beta$ next.

If $\beta \notin \vspan{e_1,\cdots,e_6}$, then without loss of generality, let $\beta=e_7$.
Now $A=2B\setminus\{\mathbf{0}\}=\{e_i+e_j \mid 1\leq i<j\leq7\}$. 
But applying Proposition~\ref{prop:Boolean} to the vector $e_1+e_2+e_3$ gives that $\hat f(e_1+e_2+e_3)=-6/2^{2k}$,
contradiction to the fact that $\hat f(e_1+e_2+e_3)=0$ because $e_1+e_2+e_3 \notin A \cup B$. 
It follows that $\beta \in \vspan{e_1,\cdots,e_6}$.

Note that every weight-$2$ vector $e_i+e_j$, $1\leq i<j \leq 6$, is in $A$. 
On the other hand, since $|A|= \binom{|B|}{2}$,
it follows that for every $\alpha_k \in A$, there exist a unique pair $\beta_i,\beta_j \in B$ such that $\beta_i+\beta_j=\alpha_k$.
Combining these two facts, 
we conclude that none of the weight-$3$ vector of the form $e_i+e_j+e_k$ is in $B$, for every $1 \leq i < j < k \leq 6$,
as it would gives two ways to obtain vectors such as $e_i+e_j$ by adding two vectors from $B$, thus making $|A|<\binom{|B|}{2}$. 
By Claim \ref{cor:2B-3B}, none of the weight-$4$ vectors can be in $B$ either,  
which leaves only the possibilities of weight-$5$ or weight-$6$ vector for $\beta$.

If $\beta$ is a weight-$5$ vector, without loss of generality, we may assume $\beta= \sum_{i=1}^{5}{e_i}$. 
Then $B$ would contain vectors of weight-$1$ and weight-$5$ only, consequently $A$ would contain vectors of 
weight-$2$, weight-$4$ and weight-$6$ only.
Now applying Proposition~\ref{prop:Boolean} to the vector $e_1+e_2+e_3$ yields $\hat{f}(e_1+e_2+e_3)<0$,
contradicting to the fact that $\hat{f}(e_1+e_2+e_3)=0$ as $e_1+e_2+e_3 \notin A \cup B$. 
Therefore, we have $\beta=\sum_{i=1}^{6}{e_i}$, completing the proof of the claim.
\end{proof}

%%%%%%%%%%%%%%%%%%%%%%%%%%%%%%%%%%%%%%%%%%%%%%%%%%%%%%%%%%%%%%%%%%%%%%%%%%%%%%%%%%%%%%%%%%%%%%%%%%%%%%%%%%%%%%%%%%%%%$$
%%%%%%%%%%%%%%%%%%%%%%%%%%%%%%%%%%%%%%%%%%%%%%%%%%%%%%%%%%%%%%%%%%%%%%%%%%%%%%%%%%%%%%%%%%%%%%%%%%%%%%%%%%%%%%%%%%%%%$$
%%%%%%%%%%%%%%%%%%%%%%%%%%%%%%%%%%%%%%%%%%%%%%%%%%%%%%%%%%%%%%%%%%%%%%%%%%%%%%%%%%%%%%%%%%%%%%%%%%%%%%%%%%%%%%%%%%%%%$$
\section{Proof of the Main Theorem}\label{sec:main_thm}

Clearly, if $\hat{f}(\mathbf{0})=\frac{1}{2^k}$, then, because $|\hat{f}(\alpha)|\leq \hat{f}(\mathbf{0})$ for every $\alpha$,
all non-zero Fourier coefficients of $f$ have absolute value $\frac{1}{2^k}$.
Therefore, Rothschild and van Lint Theorem applies and $f$ is the indicator function of an affine subspace of dimension $n-k$.
Therefore, from now on, we assume $\hat{f}(\mathbf{0})=\frac{1}{2^{k-1}}$.

The first step in our proof of the Main Theorem is to follow a similar procedure
employed in the proof of Theorem~\ref{thm:RV}. That is, whenever possible,
we reduce the values of $n$ and $k$ simultaneously. This proceeds as follows.
Suppose there exists a non-zero $\alpha$ with $\hat{f}(\alpha)=\frac{1}{2^{k-1}}$ or $-\frac{1}{2^{k-1}}$.
Without loss of generality, assume that $\hat{f}(\alpha)=\frac{1}{2^{k-1}}$.
Apply an invertible linear transform $L$ that maps $\alpha$ to $e_1$ and let $g:=Lf$.
Now we have $\hat{g}(\mathbf{0})=\hat{g}(e_1)=\frac{1}{2^{k-1}}$.
Apply the restriction on the first bit of the input to get sub-functions
$g_0$ and $g_1$. Then by \eqref{eqn:subfunction},
$\hat{g}_1(\mathbf{0})=\hat{g}(\mathbf{0})-\hat{g}(e_1)=0$, which implies that $g_1\equiv 0$.
This implies that $\supp{f}$ is completely contained in the support of $g_0$ and moreover, by \eqref{eqn:original_function},
$\hat{g}_0(\beta)=2\hat{f}(0, \beta)$ for every $\beta \in \F_2^{n-1}$.
In other words, $g_0$ is a Boolean function over $\F_2^{n-1}$ and $|\hat{g}(\beta)|$ is equal to either zero, or $\frac{1}{2^{k-1}}$,
or $\frac{1}{2^{k-2}}$. That is, by performing a linear restriction, we reduce both the dimension $n$ and the parameter $k$ by one,
so that the Main Theorem holds for Boolean functions over $\F_2^n$ as long as it holds for Boolean functions over $\F_2^{n-1}$.

When we arrive at a point that such a linear restriction is no longer possible; equivalently, $f$ is irreducible,
then $\hat{f}(\mathbf{0})$ is the
only Fourier coefficient whose absolute value is $\frac{1}{2^{k-1}}$.
Therefore, the Main Lemma for $k\geq 5$ or Lemma~\ref{lemma:small_values} for $2 \leq k\leq 4$ applies.

%%%%%%%%%%%%%%%%%%%%%%%%%%%%%%%%%%%%%%%%%%%%%%%%%%%%%%%%%%%%%%%%%%%%%%%%%%%%%%%%%%%%%%%%%%%%%%%%%%%%%%%%%%%%%%%%%%%%%$$
%%%%%%%%%%%%%%%%%%%%%%%%%%%%%%%%%%%%%%%%%%%%%%%%%%%%%%%%%%%%%%%%%%%%%%%%%%%%%%%%%%%%%%%%%%%%%%%%%%%%%%%%%%%%%%%%%%%%%$$
%%%%%%%%%%%%%%%%%%%%%%%%%%%%%%%%%%%%%%%%%%%%%%%%%%%%%%%%%%%%%%%%%%%%%%%%%%%%%%%%%%%%%%%%%%%%%%%%%%%%%%%%%%%%%%%%%%%%%$$

\section{Concluding Remarks and Open Problems}\label{Sec:conclusion}
In this work, we extend a classical result of Rothschild and van Lint to give a complete characterization of
Boolean functions whose Fourier coefficients take values only in the set $\{-2/2^k, -1/2^k, 0, 1/2^k, 2/2^k\}$.
Our work may be regarded as a first step toward understanding the structures of Boolean functions of granularity $k$.
A major motivation for such studies is to prove a polynomial upper bound on the kill number for any $k$-granular Boolean function,
thus resolving the Log-rank XOR conjecture. Another interesting question is to find 
other sets of Fourier coefficients which uniquely or almost uniquely determine the structures of 
their corresponding Boolean functions.

%%%%%%%%%%%%%%%%%%%%%%%%%%%%%%%%%%%%%%%%%%%%%%%%%%%%%%%%%%%%%%%%%%%%%%%%%%%%%%%%%%%%%%%%%%%%%%%%%%%%%%%%%%%%%%%%%%%%%$$
%%%%%%%%%%%%%%%%%%%%%%%%%%%%%%%%%%%%%%%%%%%%%%%%%%%%%%%%%%%%%%%%%%%%%%%%%%%%%%%%%%%%%%%%%%%%%%%%%%%%%%%%%%%%%%%%%%%%%$$
%%%%%%%%%%%%%%%%%%%%%%%%%%%%%%%%%%%%%%%%%%%%%%%%%%%%%%%%%%%%%%%%%%%%%%%%%%%%%%%%%%%%%%%%%%%%%%%%%%%%%%%%%%%%%%%%%%%%%$$
\section*{Acknowledgments}
We would like to thank anonymous referees for their valuable comments and suggestions which help
us correcting errors, simplifying proofs and improving presentations.
Ning Xie's research was partially supported by grant ARO W911NF1910362.

\bibliographystyle{plain}  %%%%\bibliographystyle{alpha}
\bibliography{PropertyTest-Bib}

\begin{thebibliography}{10}

\bibitem{ADL18}
D.~Aggarwal, Y.~Dodis, and S.~Lovett.
\newblock Non-malleable codes from additive combinatorics.
\newblock {\em SIAM Journal on Computing}, 47(2):524--546, 2018.
\newblock Earlier version in STOC'14.

\bibitem{BLR14}
E.~Ben-Sasson, S.~Lovett, and N.~Ron-Zewi.
\newblock An additive combinatorics approach relating rank to communication
  complexity.
\newblock {\em Journal of the ACM}, 61(4):22, 2014.

\bibitem{BR15}
E.~Ben-Sasson and N.~Ron-Zewi.
\newblock From affine to two-source extractors via approximate duality.
\newblock {\em SIAM Journal on Computing}, 44(6):1670--1697, 2015.
\newblock Earlier version in STOC'11.

\bibitem{BC99}
A.~Bernasconi and B.~Codenotti.
\newblock Spectral analysis of {B}oolean functions as a graph eigenvalue
  problem.
\newblock {\em IEEE Transactions on Computers}, 48(3):345--351, 1999.

\bibitem{BDL13}
A.~Bhowmick, Z.~Dvir, and S.~Lovett.
\newblock New bounds for matching vector families.
\newblock In {\em Proc.\ 45th Annual ACM Symposium on the Theory of Computing},
  pages 823--832, 2013.

\bibitem{CMS19}
A.~Chattopadhyay, N.~Mande, and S.~Sherif.
\newblock The {L}og-approximate-rank conjecture is false.
\newblock In {\em Proc.\ 51st Annual ACM Symposium on the Theory of Computing},
  2019.
\newblock To appear.

\bibitem{CP18}
A.~Chistopolskaya and V.~Podolskii.
\newblock Parity decision tree complexity is greater than granularity, October
  2018.
\newblock \url{http://arxiv.org/abs/1810.08668}.

\bibitem{Cho61}
C.~Chow.
\newblock On the characterization of threshold functions.
\newblock In {\em Proc.\ 2nd Annual IEEE Symposium on Foundations of Computer
  Science}, pages 34--38. IEEE, 1961.

\bibitem{DDFS14}
A.~De, I.~Diakonikolas, V.~Feldman, and R.~Servedio.
\newblock Nearly optimal solutions for the {C}how parameters problem and
  low-weight approximation of halfspaces.
\newblock {\em Journal of the ACM}, 61(2):1--36, 2014.
\newblock Earlier version in STOC'12.

\bibitem{Eve12}
C.~Even-Zohar.
\newblock On sums of generating sets in $\mathbb{Z}_2^n$.
\newblock {\em Combinatorics, probability and computing}, 21(6):916--941, 2012.

\bibitem{Fre73}
G.~Freiman.
\newblock {\em Foundations of a structural theory of set addition}.
\newblock American Mathematical Society, Providence, RI, 1973.
\newblock Translated from the Russian. \emph{Translations of Mathematical
  Monographs}, Vol 37.

\bibitem{Fri98}
E.~Friedgut.
\newblock Boolean functions with low average sensitivity depend on few
  coordinates.
\newblock {\em Combinatorica}, 18(1):27--35, 1998.

\bibitem{FKN02}
E.~Friedgut, G.~Kalai, and A.~Naor.
\newblock {B}oolean functions whose {F}ourier transform is concentrated on the
  first two levels.
\newblock {\em Advances in Applied Mathematics}, 29(3):427--437, 2002.

\bibitem{GOS+11}
P.~Gopalan, R.~O'Donnell, R.~Servedio, A.~Shpilka, and K.~Wimmer.
\newblock Testing {F}ourier dimensionality and sparsity.
\newblock {\em SIAM Journal on Computing}, 40(4):1075--1100, 2011.
\newblock Earlier version in ICALP'09.

\bibitem{GR06}
B.~Green and I.~Ruzsa.
\newblock Sets with small sumset and rectification.
\newblock {\em Bulletin of the London Mathematical Society}, 38(1):43--52,
  2006.

\bibitem{GT09b}
B.~Green and T.~Tao.
\newblock Freiman's theorem in finite fields via extremal set theory.
\newblock {\em Combinatorics, Probability and Computing}, 18(3):335--355, 2009.

\bibitem{HHL18}
H.~Hatami, K.~Hosseini, and S.~Lovett.
\newblock Structure of protocols for {XOR} functions.
\newblock {\em SIAM Journal on Computing}, 47(1):208--217, 2018.

\bibitem{Kon08}
S.~Konyagin.
\newblock On the {F}reiman theorem in finite fields.
\newblock {\em Mathematical Notes}, 84(3-4):435--438, 2008.

\bibitem{Lab01}
I.~{\L}aba.
\newblock Fuglede’s conjecture for a union of two intervals.
\newblock {\em Proceedings of the American Mathematical Society},
  129(10):2965--2972, 2001.

\bibitem{LZ17}
C.~Lin and S.~Zhang.
\newblock Sensitivity conjecture and log-rank conjecture for functions with
  small alternating numbers.
\newblock In {\em Proc.\ 44th Annual International Conference on Automata,
  Languages, and Programming}, volume~80, pages 51:1--51:13, 2017.

\bibitem{LS88}
L.~Lov{\'{a}}sz and M.~Saks.
\newblock Lattices, {M}{\"{o}}bius functions and communication complexity.
\newblock In {\em Proc.\ 29th Annual IEEE Symposium on Foundations of Computer
  Science}, pages 330--337, 1988.

\bibitem{Lov14}
S.~Lovett.
\newblock Communication is bounded by root of rank.
\newblock In {\em Proceedings of the 46th Annual ACM Symposium on Theory of
  Computing}, pages 842--846, 2014.

\bibitem{Lov17}
S.~Lovett.
\newblock Additive combinatorics and its applications in theoretical computer
  science.
\newblock {\em Theory of Computing}, pages 1--55, 2017.

\bibitem{MS77}
F.J. MacWilliams and N.~J.~A. Sloane.
\newblock {\em The Theory of Error-correction Codes}.
\newblock North Holland, 1977.

\bibitem{Odo14}
R.~O'Donnell.
\newblock {\em Analysis of {B}oolean functions}.
\newblock Cambridge University Press, 2014.

\bibitem{OS11}
R.~O'Donnell and R.~Servedio.
\newblock The {C}how parameters problem.
\newblock {\em SIAM Journal on Computing}, 40(1):165--199, 2011.
\newblock Earlier version in STOC'08.

\bibitem{OST+14}
R.~O'Donnell, X.~Sun, L.~Y. Tan, J.~Wright, and Y.~Zhao.
\newblock A composition theorem for parity kill number.
\newblock In {\em Proc.\ 29th Annual IEEE Conference on Computational
  Complexity}, pages 144--154, 2014.

\bibitem{RV74}
B.~L. Rothschild and J.~van Lint.
\newblock Characterizing finite subspaces.
\newblock {\em Journal of Combinatorial Theory, Series A}, 16(1):97--110, 1974.

\bibitem{Ruz99}
I.~Ruzsa.
\newblock An analog of {F}reiman's theorem in groups.
\newblock {\em Ast{\'e}risque}, 258(199):323--326, 1999.

\bibitem{Sam07}
A.~Samorodnitsky.
\newblock Low-degree tests at large distances.
\newblock In {\em Proc.\ 39th Annual ACM Symposium on the Theory of Computing},
  pages 506--515, 2007.

\bibitem{San08}
T.~Sanders.
\newblock A note on {F}re{\u\i}man's theorem in vector spaces.
\newblock {\em Combinatorics, Probability and Computing}, 17(2):297--305, 2008.

\bibitem{STV17}
A.~Shpilka, A.~Tal, and B.~lee Volk.
\newblock On the structure of boolean functions with small spectral norm.
\newblock {\em computational complexity}, 26(1):229--273, 2017.

\bibitem{TV06}
T.~Tao and V.~Vu.
\newblock {\em Additive Combinatorics}.
\newblock Cambridge University Press, 2006.

\bibitem{TWXZ13}
H.~Tsang, C.~Wong, N.~Xie, and S.~Zhang.
\newblock Fourier sparsity, spectral norm, and the {L}og-rank conjecture.
\newblock In {\em Proc.\ 54th Annual IEEE Symposium on Foundations of Computer
  Science}, pages 658--667, 2013.

\bibitem{TXZ16}
H.~Tsang, N.~Xie, and S.~Zhang.
\newblock Fourier sparsity of {GF}($2$) polynomials.
\newblock In {\em Proceedings of the International Computer Science Symposium
  in Russia}, pages 409--424, 2016.

\bibitem{ZS10}
Z.~Zhang and Y.~Shi.
\newblock On the parity complexity measures of {B}oolean functions.
\newblock {\em Theoretical Computer Science}, 411(26-28):2612--2618, 2010.

\end{thebibliography}

\begin{appendix}

%%%%%%%%%%%%%%%%%%%%%%%%%%%%%%%%%%%%%%%%%%%%%%%%%%%%%%%%%%%%%%%%%%%%%%%%%%%%%%%%%%%%%%%%%%%%%%%%%%%%%%%%%%%%%%%%%%%%%$$
%%%%%%%%%%%%%%%%%%%%%%%%%%%%%%%%%%%%%%%%%%%%%%%%%%%%%%%%%%%%%%%%%%%%%%%%%%%%%%%%%%%%%%%%%%%%%%%%%%%%%%%%%%%%%%%%%%%%%$$
%%%%%%%%%%%%%%%%%%%%%%%%%%%%%%%%%%%%%%%%%%%%%%%%%%%%%%%%%%%%%%%%%%%%%%%%%%%%%%%%%%%%%%%%%%%%%%%%%%%%%%%%%%%%%%%%%%%%%$$
\section{A Proof of Proposition~\ref{prop:restriction}}\label{sec:restriction}
Recall that Proposition~\ref{prop:restriction} on the Fourier spectra of sub-functions obtained from linear restrictions
is the following:
\begin{prop-restriction}
Let $f: \F_2^n \to \R$ be a function defined on the Boolean hypercube.
Let $f_0, f_1: \F_{2}^{n-1} \to \R$ be the ``sub-functions'' obtained from restricting the first bit of the input to $0$ and $1$, respectively;
that is, $f_0(y):= f(0,y)$ and $f_1(y) := f(1, y)$ for all $y\in \F_{2}^{n-1}$.
Then the Fourier spectra of $f_0$ and $f_1$ satisfy that, for all $\beta\in \F_{2}^{n-1}$,
\begin{align}
\hat{f}_{0}(\beta) &= \hat{f}(0,\beta) + \hat{f}(1,\beta), 
\qquad
\hat{f}_{1}(\beta) = \hat{f}(0,\beta) - \hat{f}(1,\beta). \tag{\ref{eqn:subfunction}}
\end{align}
Conversely, the Fourier spectrum of $f$ satisfies
\begin{align}
\hat{f}(0,\beta) &= \frac{1}{2} (\hat{f}_{0}(\beta) + \hat{f}_{1}(\beta)),
\qquad
\hat{f}(1,\beta) = \frac{1}{2} (\hat{f}_{0}(\beta) - \hat{f}_{1}(\beta)). \tag{\ref{eqn:original_function}}
\end{align}
\end{prop-restriction}
\begin{proof}
Clearly it suffices to prove either \eqref{eqn:subfunction} or \eqref{eqn:original_function}
and the other follows immediately.
We prove the first part of \eqref{eqn:original_function}, the second part can be proved analogously. 
By the definition of Fourier transform,
\begin{align*}
\hat{f}(0,\beta)
&= \frac{1}{2^n}\sum_{x\in \F_2^n}f(x)\chi_{(0, \beta)}(x) \\
&= \frac{1}{2^n} \sum_{y\in \F_2^{n-1}}\left(f(0,y)\chi_{(0, \beta)}((0,y)) + f(1,y)\chi_{(0, \beta)}((1,y)) \right) \\
&= \frac{1}{2^n}\left(\sum_{y\in \F_2^{n-1}}f(0,y)\chi_{\beta}(y) + \sum_{y\in \F_2^{n-1}}f(1,y)\chi_{\beta}(y)\right)\\
&= \frac{1}{2^{n}}\sum_{y\in \F_2^{n-1}}f_{0}(y)\chi_{\beta}(y) +\frac{1}{2^{n}}\sum_{y\in \F_2^{n-1}}f_{1}(y)\chi_{\beta}(y) \\
&=  \frac{1}{2} (\hat{f}_{0}(\beta) + \hat{f}_{1}(\beta)). \qedhere
\end{align*}
\end{proof}

%%%%%%%%%%%%%%%%%%%%%%%%%%%%%%%%%%%%%%%%%%%%%%%%%%%%%%%%%%%%%%%%%%%%%%%%%%%%%%%%%%%%%%%%%%%%%%%%%%%%%%%%%%%%%%%%%%%%%$$
%%%%%%%%%%%%%%%%%%%%%%%%%%%%%%%%%%%%%%%%%%%%%%%%%%%%%%%%%%%%%%%%%%%%%%%%%%%%%%%%%%%%%%%%%%%%%%%%%%%%%%%%%%%%%%%%%%%%%$$
%%%%%%%%%%%%%%%%%%%%%%%%%%%%%%%%%%%%%%%%%%%%%%%%%%%%%%%%%%%%%%%%%%%%%%%%%%%%%%%%%%%%%%%%%%%%%%%%%%%%%%%%%%%%%%%%%%%%%$$
%%%%%%%%%%%%%%%%%%%%%%%%%%%%%%%%%%%%%%%%%%%%%%%%%%%%%%%%%%%%%%%%%%%%%%%%%%%%%%%%%%%%%%%%%%%%%%%%%%%%%%%%%%%%%%%%%%%%%$$

\section{The Fourier spectrum of disjoint union of two affine subspaces}\label{sec:Fourier_spectrum}

In this section we calculate the Fourier spectrum of a Boolean function whose support is the union of two
disjoint affine subspaces satisfying certain properties. In particular, the two affine subspaces are of the same dimension
and their Fourier spectra have minimum intersection.

Let $n\geq 1$ and $0\leq k < n$ be integers.
If $V$ is a linear subspace in $\F_2^n$ of dimension $n-k$ and $a\in V^{\perp}$,
where $V^{\perp}$ denotes the linear subspace that is the
\emph{orthogonal complement} of $V$, then it is well known that the Fourier spectrum of the indicator function of
affine subspace $a+V$ is (see e.g.~\cite{Odo14}):
\[
\hat{\mathds{1}}_{a+V}(\alpha)=
\begin{cases}
\frac{1}{2^k}\chi_{\alpha}(a) 	& \text{ if $\alpha \in V^{\perp}$,} \\
 0 								& \text{ otherwise.}
\end{cases}
\]

Let $f:\F_2^n \to \{0,1\}$ be a Boolean function whose support is the union of two disjoint affine subspaces of dimension $n-k$.
By a shift of the origin if necessary, we may assume that one of the two affine subspaces is a linear subspace.
Therefore $f=\mathds{1}_{a+V_1}+\mathds{1}_{V_2}$,
where $V_1$ and $V_2$ are two linear subspaces of dimension $n-k$ in $\F_2^n$ and $a\in V_1^{\perp}$.
In order for $a+V_1$ and $V_2$ to be disjoint, a necessary condition is that their orthogonal complement subspaces
have non-trivial intersection,
$V_1^{\perp} \cap V_2^{\perp} \neq \{\mathbf{0}\}$.
The special configuration we are interested in is when this intersection is minimal, that is
when $|V_1^{\perp} \cap V_2^{\perp}| = 2$.

To this end, without loss of generality, we let $V_1^{\perp}=\vspan{e_1, \ldots, e_k}$ and
$V_2^{\perp}=\vspan{e_k, \ldots, e_{2k-1}}$ so that $V_1^{\perp} \cap V_2^{\perp}=\{\mathbf{0}, e_k\}$.
Then we necessarily have\footnote{This is because, the affine subspace $a+V_1$ can be expressed as the solutions
to a system of linear equations
$a+V_1=\{x\in \F_2^n \mid \langle x, e_i \rangle = a_i \text{ for every $1\leq i \leq k$} \}$,
where $\{e_1, \ldots, e_k\}$ is an orthonormal basis for $V_1^{\perp}$, and
$\{a_i:=\langle e_i, a\rangle\}_{i=1}^{k}$ are the components under this basis.
Now if $|V_1^{\perp} \cap V_2^{\perp}| = 2$, and because the intersection of the two orthogonal complement subspaces is a subspace,
we may take $V_1^{\perp} \cap V_2^{\perp}=\{\mathbf{0}, e_k\}$ for convenience.
On the other hand, $V_2=\{x\in \F_2^n \mid \langle x, e_i \rangle = 0 \text{ for every $k\leq i \leq 2k-1$} \}$.
$a+V_1$ and $V_2$ are disjoint if and only if there is no solution to the two systems of linear equations combined together,
which is equivalent to the condition that $\langle e_k, a\rangle = 1$.}
$\langle e_k, a\rangle = 1$. Therefore for simplicity (and also without loss of generality) we may take $a=e_k$.
Therefore the Fourier spectrum of $f$ is
\begin{align}\label{eqn:Fourier_spectrum}
\hat{f}(\alpha) = \hat{\mathds{1}}_{a+V_1}(\alpha)+\hat{\mathds{1}}_{V_2}(\alpha)=
\begin{cases}
\frac{1}{2^{k-1}} 	& \text{ if $\alpha = \mathbf{0}$,} \\
-\frac{1}{2^k}		& \text{ if $\alpha \in e_k+(\vspan{e_1, \ldots, e_{k-1}}\setminus \{\mathbf{0}\})$,} \\
\frac{1}{2^k}		& \text{ if $\alpha \in \vspan{e_1, \ldots, e_{k-1}}\setminus \{\mathbf{0}\}$,} \\
\frac{1}{2^k}		& \text{ if $\alpha \in e_k+(\vspan{e_{k+1}, \ldots, e_{2k-1}}\setminus \{\mathbf{0}\})$,} \\
\frac{1}{2^k}		& \text{ if $\alpha \in \vspan{e_{k+1}, \ldots, e_{2k-1}}\setminus \{\mathbf{0}\}$,} \\
 0 								& \text{ otherwise.}
\end{cases}
\end{align}

\end{appendix}

\end{document}